\newtheorem{theorem}{Theorem}[section]
\newtheorem{corollary}{Corollary}[section]
\newtheorem{definition}{Definition}[section]
\newtheorem{example}{Example}[section]
\newtheorem{remark}{Remark}[section]
\newtheorem{lemma}{Lemma}[section]
\providecommand{\U}[1]{\protect\rule{.1in}{.1in}}
\newcommand{\Black}[1]{{\color[rgb]{0.0, 0.0, 0.0}{#1}}}
\newcommand{\Red}[1]{{\color[rgb]{1.0, 0.0, 0.0}{#1}}}
\newcommand{\CRed}[1]{{\color[rgb]{0.5, 0.0, 0.0}{#1}}}
\newcommand{\Blue}[1]{{\color[rgb]{0.1, 0.5, 0.9}{#1}}}
\newcommand{\CBlue}[1]{{\color[rgb]{0.0, 0.0, 0.7}{#1}}}
\newcommand{\CPurple}[1]{{\color[rgb]{1.0, 0.0, 0.6}{#1}}}
\newcommand{\CDRPurple}[1]{{\color[rgb]{0.5921568627450980392156862745098, 0.27058823529411764705882352941176, 0.47058823529411764705882352941176}{#1}}}
\newcommand{\mref}[2]{#1{\relsize{-1}/#2}}
\journalname{}
\begin{document}

	\LRCornerWallPaper{0.225}{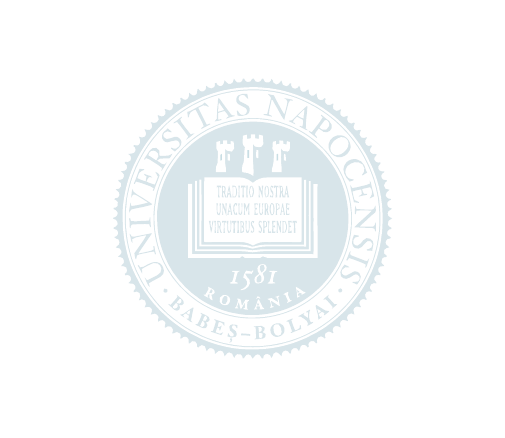}
	
\setcounter{tocdepth}{3}
\allowdisplaybreaks
\renewcommand{\thelstlisting}{\thesection.\arabic{lstlisting}}

\title{An OpenGL and C++ based function library for curve and surface modeling in a large class of extended Chebyshev spaces}

\titlerunning{Curve and surface modeling in a large class of extended Chebyshev spaces}

\author{\'{A}goston R\'{o}th\thanks{This research was supported by the European Union and the State of Hungary, co-financed by the European Social Fund in the framework of 	T\'{A}MOP-4.2.4.A/2-11/1-2012-0001 'National Excellence Program'.}}

\institute{
\'A. R\'oth \at Department of Mathematics and Computer Science of the Hungarian Line, Babe\c{s}--Bolyai University, RO--400084 Cluj-Napoca, Romania \\
Tel.: +40-264-405300\\
Fax:  +40-264-591906\\
\email{agoston\_roth@yahoo.com}
}

\date{Submitted to arXiv on October 14, 2018}

\maketitle

\begin{abstract}
We propose a platform-independent multi-threaded function library that provides data structures to generate, differentiate and render both the ordinary basis and the normalized B-basis of a user-specified extended Chebyshev (EC) space that comprises the constants and can be identified with the solution space of a constant-coefficient homogeneous linear differential equation defined on a sufficiently small interval. Using the obtained normalized B-bases, our library can also generate, (partially) differentiate, modify and visualize a large family of so-called B-curves and tensor product B-surfaces. Moreover, the library also implements methods that can be used to perform dimension elevation, to subdivide B-curves and B-surfaces by means of de Casteljau-like B-algorithms, and to generate basis transformations for the B-representation of arbitrary integral curves and surfaces that  are described in traditional parametric form by means of the ordinary bases of the underlying EC spaces. Independently of the algebraic, exponential, trigonometric or mixed type of the applied EC space, the proposed library is numerically stable and efficient up to a reasonable dimension number and may be useful for academics and engineers in the fields of Approximation Theory, Computer Aided Geometric Design, Computer Graphics, Isogeometric and Numerical Analysis.

\keywords{extended Chebyshev spaces \and constant-coefficient homogeneous linear differential equations \and normalized B-basis \and B-curve/surface modeling \and order (dimension) elevation \and subdivision (B-algorithm) \and basis transformation \and control point based exact description \and OpenGL \and OpenMP}
\subclass{65D17 \and 68U07}
\end{abstract}

\section{Introduction}\label{sec:introduction}

The following subsections detail our motivations, main objectives and the structure of the manuscript.

\subsection{Motivations}\label{subsec:motivations}
Normalized B-bases (a comprehensive study of which can be found in
\citep{Pena1999} and references therein) are normalized totally positive bases
that imply optimal shape preserving properties for the representation of
curves described as convex combinations of control points and basis functions. Considering a non-empty compact definition domain $\left[\alpha,\beta\right]\subset\mathbb{R}$, the most well-known representative of such bases are the classical Bernstein polynomials of degree $n\in\mathbb{N}$, cf.
\citep{Carnicer1993}. Similarly to them, normalized B-bases provide shape preserving properties
like closure for the affine transformations of the control polygon, convex
hull, variation diminishing (which also implies convexity preserving of plane
control polygons), endpoint interpolation, monotonicity preserving, hodograph
and length diminishing, and a recursive corner cutting algorithm (also called
B-algorithm) that is the analogue of the de Casteljau algorithm of classical B\'{e}zier
curves. Among all normalized totally positive bases of a given vector space of
functions a normalized B-basis is the least variation diminishing and the
shape of the generated curve more mimics its control polygon. Important curve
design algorithms like evaluation, subdivision, degree (or dimension) elevation or knot
insertion are in fact corner cutting algorithms that can be treated in a
unified way by means of B-algorithms induced by B-bases.

Curve and surface modeling tools based on non-polynomial normalized B-bases also ensure further advantages like: possible shape or design parameters; singularity free exact parametrization (e.g.\ parametrization of
conic sections may correspond to natural arc-length parametrization); higher or even infinite order of precision concerning (partial) derivatives; ordinary (i.e., traditionally parametrized) integral curves and surfaces can be described exactly by means of control points without any additional weights (the calculation of which, apart from some simple cases, is cumbersome for the designer); important transcendental curves and surfaces which are of interest in real-life applications can also be represented exactly (the standard rational B\'{e}zier or NURBS models cannot encompass these
geometric objects). Moreover, concerning condition numbers and stability, a normalized B-basis is the unique normalized totally positive basis that is optimally stable among all non-negative bases of a given vector space of functions, cf. \citep[Corollary 3.4, p. 89]{Pena1999}.

Apart from their interest in the classical contexts of Computer Aided Geometric Design, Numerical Analysis and Approximation Theory, normalized B-bases and their spline counterparts have also been used in Isogeometric Analysis recently (consider, e.g.,  \citep{ManniPelosiSampoli2011} and references therein). Compared with classical finite element methods, Isogeometric Analysis provides several advantages when one describes the geometry by generalized B-splines and invokes an isoparametric approach in order to approximate the unknown solutions of differential equations (e.g.,  of Poisson type problems) or Dirichlet boundary conditions by the same type of functions.

These advantageous properties make normalized B-bases ideal blending function system
candidates for curve and surface modeling.

\subsection{Preliminaries and objectives}
In order to be able to formulate the main objectives of the manuscript, we will use the following well-known notions. Let $n\geq1$ be a fixed integer and consider the extended Chebyshev (EC) system%
\begin{equation}
    \mathcal{F}_{n}^{\alpha,\beta}=\left\{  \varphi
    _{n,i}\left(  u\right)  :u\in\left[  \alpha,\beta\right]  \right\}  _{i=0}%
    ^{n},~\varphi_{n,0}	\equiv 1,~-\infty<\alpha<\beta<\infty\label{eq:ordinary_basis}%
\end{equation}
of basis functions in $C^{n}\left(  \left[  \alpha,\beta\right]  \right)  $,
i.e., by definition \citep{KarlinStudden1966}, for any integer $0\leq r\leq n$, any strictly increasing
sequence of knot values $\alpha\leq u_{0}<u_{1}<\ldots<u_{r}\leq\beta$, any
positive integers (or multiplicities) $\left\{  m_{k}\right\}  _{k=0}^{r}$
such that $\sum_{k=0}^{r}m_{k}=n+1$, and any real numbers $\left\{
\xi_{k,\ell}\right\}  _{k=0,~\ell=0}^{r,~m_{k}-1}$ there always exists a
unique function
\begin{equation}
    f:=\sum_{i=0}^{n}\lambda_{n,i}\varphi_{n,i}\in\mathbb{S}_{n}^{\alpha,\beta}:=\left\langle\mathcal{F}_{n}^{\alpha,\beta
    }\right\rangle:=\operatorname{span}\mathcal{F}_{n}^{\alpha,\beta},~\lambda_{n,i}\in\mathbb{R},~i=0,1,\ldots,n \label{eq:unique_solution}%
\end{equation}
that satisfies the conditions of the Hermite interpolation problem%
\begin{equation}
    f^{\left(  \ell\right)  }\left(  u_{k}\right)  =\xi_{k,\ell},~\ell
    =0,1,\ldots,m_{k}-1,~k=0,1,\ldots,r. \label{eq:Hermite_interpolation_problem}%
\end{equation}

In what follows, we assume that the sign-regular determinant of the coefficient
matrix of the linear system (\ref{eq:Hermite_interpolation_problem}) of
equations is strictly positive for any permissible parameter settings introduced above.
Under these circumstances, the vector space $\mathbb{S}_{n}^{\alpha,\beta}$ of functions is
called an EC space of dimension $n+1$. In terms of zeros, this
definition means that any non-zero element of $\mathbb{S}_{n}^{\alpha,\beta}$
vanishes at most $n$ times in the interval $\left[  \alpha,\beta\right]  $. Such spaces and their corresponding spline counterparts have been widely studied, consider, e.g.,  articles \citep{Schumaker2007, Lyche1985,Pottmann1993,MazureLaurent1998,MainarPena1999,MainarPenaSanchez2001,LuWangYang2002,CarnicerMainarPena2004,MainarPena2004,CostantiniLycheManni2005,CarnicerMainarPena2007,MainarPena2010,Roth2015a,Roth2015b} and many other references therein. (Concerning the definition of EC spaces, the condition $1\equiv\varphi_{n,0}\in\mathbb{S}_{n}^{\alpha,\beta}$ would be not necessary, nevertheless we have included the constants in $\mathbb{S}_{n}^{\alpha,\beta}$ in order to ensure that all its bases can be normalized.)

Hereafter we will also refer to $\mathcal{F}_{n}^{\alpha,\beta}$ as the
ordinary basis of $\mathbb{S}_{n}^{\alpha,\beta}$ and we will also assume that the ($n$-dimensional) space
$
D\mathbb{S}_{n}^{\alpha,\beta}:=
\big\{
f^{\left(1\right)} 
: f \in \mathbb{S}_{n}^{\alpha,\beta}
\big\}
$
of the derivatives 
is also EC over the interval $\left[\alpha,\beta\right]$. Using \citep[Theorem
5.1]{CarnicerPena1995} and \citep[Theorem 4.1]{CarnicerMainarPena2004}, it follows that under these conditions the vector space $\mathbb{S}_{n}^{\alpha,\beta}$ also has a unique
strictly totally positive normalizable basis\footnote{A basis $\left\{f_i\left(u\right):\left[\alpha,\beta\right]\right\}_{i=0}^{n}$ of $\mathbb{S}_{n}^{\alpha,\beta}$ is strictly totally positive, if all minors of all its collocation matrices $\left[f_i\left(u_j\right)\right]_{i=0,\,j=0}^{n,m}$ are strictly positive, where $\left\{u_j\right\}_{j=0}^m$ are arbitrary strictly increasing knot values within $\left[\alpha,\beta\right]$.}, called normalized B-basis
\begin{equation}
    \mathcal{B}_{n}^{\alpha,\beta}=\left\{  b_{n,i}\left(  u\right)  :u\in\left[
    \alpha,\beta\right]  \right\}  _{i=0}^{n} \label{eq:B-basis}%
\end{equation}
that apart from the identity%
\begin{equation}
    \sum_{i=0}^{n}b_{n,i}\left(  u\right)  \equiv1,~\forall u\in\left[
    \alpha,\beta\right]  \label{eq:partition_of_unity}%
\end{equation}
also fulfills the properties%
\begin{align}
    b_{n,0}\left(  \alpha\right)   &  =b_{n,n}\left(  \beta\right)
    =1,\label{eq:endpoint_interpolation}\\
    b_{n,i}^{\left(  j\right)  }\left(  \alpha\right)   &  =0,~j=0,\ldots
    ,i-1,~b_{n,i}^{\left(  i\right)  }\left(  \alpha\right)
    >0,\label{eq:Hermite_conditions_0}\\
    b_{n,i}^{\left(  j\right)  }\left(  \beta\right)   &  =0,~j=0,1,\ldots
    ,n-1-i,~\left(  -1\right)  ^{n-i}b_{n,i}^{\left(  n-i\right)  }\left(
    \beta\right)  >0 \label{eq:Hermite_conditions_alpha}%
\end{align}
conform \citep[Theorem 5.1]{CarnicerPena1995} and \citep[Equation (3.6)]{Mazure1999}. (In order to avoid ambiguity, in case of some figures we will also use the notation $\overline{\mathcal{F}}_n^{\alpha,\beta}$ instead of $\mathcal{B}_n^{\alpha,\beta}$.)

All algorithms that will be presented in the forthcoming sections are valid in case of any EC space $\mathbb{S}_n^{\alpha, \beta}$ that fulfills all conditions above, however in case of their C++ and OpenGL based implementation we always assume that $\mathbb{S}_n^{\alpha, \beta}$ can be identified with the solution space of the constant-coefficient homogeneous linear differential equation
\begin{equation}
    \sum_{i=0}^{n+1}\gamma_{i}v^{\left(  i\right)  }\left(  u\right)=0,~\gamma_{i}\in\mathbb{R},~u\in\left[  \alpha,\beta\right]  
    \label{eq:differential_equation}
\end{equation} 
of order $n+1$. Such a solution space:
\begin{itemize}
    \item 
    is translation invariant and it is spanned by those ordinary basis functions that are generated by the (higher order) zeros of the characteristic polynomial
    \begin{equation}
        p_{n+1}\left(  z\right) =\sum_{i=0}^{n+1}\gamma_{i}z^{i},~z\in \mathbb{C}
        \label{eq:characteristic_polynomial}%
    \end{equation}
    associated with the differential equation (\ref{eq:differential_equation}) (naturally, in order to ensure that $\varphi_{n,0}\equiv 1 \in \mathbb{S}_n^{\alpha,\beta}$, we will assume that $z=0$ is at least a first order zero of (\ref{eq:characteristic_polynomial}));
    
    \item
    is of class $C^{\infty}\left(\left[\alpha,\beta\right]\right)$ and is EC on intervals of sufficiently small length $\beta-\alpha \in \left(0,\ell_n\right)$, where the so-called critical length $\ell_n > 0$ (i.e., the supremum of the lengths of the intervals on which the given space is EC) can be determined as follows (see \citep[Proposition 3.1]{CarnicerMainarPena2004}):
    \begin{itemize}
        \item
        let
        \begin{equation}
            W_{\left[  v_{n,0},v_{n,1},\ldots,v_{n,n}\right]  }\left(  u\right)  :=
            \left[
            v_{n,i}^{\left(j\right)}\left(u\right)
            \right]_{i=0,~j=0}^{n,~n},~u\in\left[\alpha,\beta\right]
            \label{eq:forward_Wronskian}
        \end{equation}
        be the Wronskian matrix of those particular integrals
        \begin{equation}
            v_{n,i}:=\sum_{k=0}^{n}\rho_{i,k}\varphi_{n,k}\in\mathbb{S}_{n}^{\alpha,\beta
            },~\left\{\rho_{i,k}\right\}_{k=0}^{n}\subset\mathbb{R},~i=0,1,\ldots,n\label{eq:particular_integrals}%
        \end{equation}
        of (\ref{eq:differential_equation}) which correspond to the boundary conditions%
        \begin{equation}
            \def\arraystretch{1.3}
            \left\{
            \begin{array}{rcl}
                v_{n,i}^{\left(  j\right)  }\left(  \alpha\right)    & = & 0,~j=0,\ldots,i-1,\\
                v_{n,i}^{\left(  i\right)  }\left(  \alpha\right)    & = & 1,\\
                v_{n,i}^{\left(  j\right)  }\left(  \beta\right)    & =& 0 ,~j=0,\ldots,n-1-i,
            \end{array}
            \right.
            \def\arraystretch{1.0}
            \label{eq:boundary_conditions}
        \end{equation}
        i.e., the system $\left\{  v_{n,i}\left(  u\right)  :u\in\left[
        \alpha,\beta\right]  \right\}  _{i=0}^{n}$ is a bicanonical basis on the interval $\left[\alpha, \beta\right]$ such that the Wronskian (\ref{eq:forward_Wronskian}) at $u=\alpha$ is a
        lower triangular matrix with positive (unit) diagonal entries;
        \item
        consider the functions (or Wronskian determinants)%
        \begin{align}
            w_{n,i}\left(  u\right)  &:=\det W_{\left[  v_{n,i},v_{n,i+1},\ldots
                ,v_{n,n}\right]  }\left(  u\right)  ,~i=
            \left\lfloor \tfrac{n}{2}\right\rfloor +1,\ldots
            ,n,\label{eq:Wronskian_determinants}
            \\
            \theta_{n,i}\left(u\right) &:= \left(-1\right)^{n\left(n+1-i\right)}\det W_{\left[  v_{n,i},v_{n,i+1},\ldots,v_{n,n}\right]  }\left(  -u\right),~i=
            \left\lfloor \tfrac{n}{2}\right\rfloor +1,\ldots,n,
        \end{align}
        define the critical length%
        \begin{equation}
            \ell_{n}:=\min_{i=\left\lfloor \frac{n}{2}\right\rfloor +1,\ldots
                ,n}\min\left\{  \left\vert u - \alpha\right\vert :w_{n,i}\left(  u\right)
            =0\text{ or } \theta_{n,i}\left(u\right)=0,~u\neq\alpha\right\}
            \label{eq:critical_length}
        \end{equation}
        (we write $\ell_{n}=+\infty$ whenever the Wronskian determinants
        (\ref{eq:Wronskian_determinants}) do not have real zeros other than $\alpha$, moreover $\ell_{n}$ is infinite whenever the characteristic polynomial (\ref{eq:characteristic_polynomial}) has only real roots, otherwise it is finite number that, in general, also depends on parameters resulting from the differential equation (\ref{eq:differential_equation}), i.e., on the real and imaginary parts of the complex zeros of the characteristic polynomial (\ref{eq:characteristic_polynomial})).
    \end{itemize}
\end{itemize}

We will denote the critical length of the derivative space $D\mathbb{S}_{n}^{\alpha,\beta}$ by $\ell_n^{\prime}$ and, in order to ensure that $\mathbb{S}_n^{\alpha, \beta}$ is an EC space that also possesses a unique normalized B-basis (see \citep[Theorem 4.1 and Corollary 4.1]{CarnicerMainarPena2004}), \textit{from hereon we always assume that} $\beta \in \left(\alpha, \alpha + \ell_n^{\prime}\right)$. (The interval length $\beta - \alpha \in \left(0,\ell_n^{\prime}\right)$ can be considered as a shape or tension parameter. In order to avoid ambiguity, instead of $\ell_{n}$ and $\ell_n^{\prime}$, in some cases, we will also use the notations $\ell\left( \mathbb{S}_n^{\alpha,\beta}\right)$ and $\ell^{\prime}\left( \mathbb{S}_n^{\alpha,\beta}\right):=\ell\left(D\mathbb{S}_{n}^{\alpha,\beta}\right)$, respectively.) Following \citep[p.\ 67]{CarnicerMainarPena2004}, we will also refer to $\ell_n^{\prime}$ as \textit{critical length for design}.

\begin{remark}[Concerning the endpoints of the definition domain]
    Since the underlying vector space $\mathbb{S}_n^{\alpha,\beta}$ is translation invariant, it would be sufficient to study the properties of such spaces on intervals of the form $[0,h]$, where $h \in \left(0,\ell_n^{\prime}\right)$ and, consequently, the notation $\mathbb{S}_{n}^{\alpha,\beta}$ could be simplified to $\mathbb{S}_{n}^{h}$. Nevertheless, due to flexibility and some implementation details which may appear, e.g.,  in case of the control point based exact description (i.e., B-representation) of ordinary integral curves defined on intervals of appropriate length, we have designed/implemented all proposed algorithms on the interval $\left[\alpha,\beta\right]$, where $\beta - \alpha \in \left(0,\ell_n^{\prime}\right)$. In this way, during programming, users will have more comfort.
\end{remark}  

As we will see, the proposed algorithms do not assume that the characteristic polynomial (\ref{eq:characteristic_polynomial}) is an either odd or even function, but if this the case, the underlying EC space will also be reflection invariant, i.e., if $\mathbb{S}_n^{\alpha,\beta}$ denotes the solution space of (\ref{eq:differential_equation}), $v\in\mathbb{S}_n^{\alpha,\beta}$ and $x\in\mathbb{R}$ is fixed, then $v\left(u - x\right)$ also belongs to  $\mathbb{S}_n^{\alpha,\beta}$, moreover the normalized B-basis functions (\ref{eq:B-basis}) will share the symmetry property 
\begin{equation}
    b_{n,i}\left(u\right)=b_{n,n-i}\left(\alpha + \beta - u\right),~\forall u \in \left[\alpha, \beta\right]~i=0,1,\ldots,\left\lfloor \tfrac{n}{2}\right\rfloor.
    \label{eq:symmetry}
\end{equation}
In this case, the critical length (\ref{eq:critical_length}) can be determined (see \citep[Proposition 3.2]{CarnicerMainarPena2004}) by means of the simpler formula
\begin{equation}
    \ell_{n}:=\min_{i=\left\lfloor \frac{n}{2}\right\rfloor +1,\ldots
        ,n}\min\left\{  \left\vert u - \alpha\right\vert :w_{n,i}\left(  u\right)
    =0,~u\neq\alpha\right\}.
    \label{eq:critical_length_reflection_invariant}
\end{equation}

Based on both original and existing theoretical results, the main objective of the manuscript is to propose and implement general algorithms into a robust and flexible OpenGL and C++ based multi-threaded function library that can be used: 
\begin{itemize}
    \item 
    to automatically generate and evaluate the derivatives of any order of both the ordinary basis and the normalized B-basis of a not necessarily reflection invariant EC space that comprises the constants, can be identified with the translation invariant solution space of the differential equation (\ref{eq:differential_equation}) and whose derivative space is also EC;
    
    \item
    to describe, generate, manipulate and render so-called B-curves defined as convex combinations of control points and normalized B-basis functions;
    
    \item
    to generate, manipulate and render so-called B-surfaces defined as tensor products of B-curves;
    
    \item
    to elevate the dimension of the underlying EC space(s) and consequently the order(s) of the B-curve (surface) that is rendered;
    
    \item
    to subdivide B-curves by means of general B-algorithms implied by the normalized B-basis of the given EC space and to extend this subdivision technique to B-surfaces;
    
    \item
    to generate transformations matrices that map the normalized B-bases of the applied EC spaces to their ordinary bases, in order to ensure control point configurations for the exact B-representation of large classes of integral curves and surfaces that are described in traditional parametric form by means of the ordinary bases of the used EC spaces.
\end{itemize}

During our study, we will also investigate the correctness and computational complexity of the proposed algorithms. To the best of our knowledge, such a general programming framework was not presented in the literature for curve and surface modeling with normalized B-basis functions. Naturally, in certain special cases (like in EC spaces of pure traditional, trigonometric and hyperbolic polynomials of finite degree) one may provide more efficient curve and surface modeling techniques, since one may know the explicit expressions or other useful properties of the applied normalized B-basis functions that may lead to numerically more stable and efficient algorithms related to differentiation, order elevation, subdivision and basis transformations. However, in general, one does not even know the closed form of these ideal basis functions (e.g., they may appear in integral \citep{MainarPena2010} or in determinant form \citep{Mazure1999} that are computationally difficult and expensive to evaluate either by hand, or by numerical methods). Therefore, one has to make a compromise between a robust flexible design possibility that can be universally applied in a more general context and another special modeling technique that may be more efficient but it was developed for the solution of a very special design problem.

As we will see, each of the proposed algorithms relies on the successful evaluation of zeroth and higher order (endpoint) derivatives of either of the ordinary basis functions (\ref{eq:ordinary_basis}) or of the normalized B-basis (\ref{eq:B-basis}). The order of (endpoint) derivatives that have to be evaluated increases proportionally with the dimension $n+1$ of the underlying EC space. Due to floating point arithmetical operations, the maximal dimension for which one does not bump into numerical stability problems depends on the type of the ordinary basis functions of the given EC space -- depending on the case, it may be smaller or greater, but considering that, in practice, curves and surfaces are mostly composed of smoothly joined lower order arcs and patches, a clever implementation of the proposed algorithms can be useful in case of real-life applications.

\subsection{Structure}
The rest of the manuscript is organized as follows. Section \ref{sec:proposed_algorithms} consists of four subsections that detail and study \textit{general} algorithms that can be used:
\begin{enumerate}
    \item[(1)]
    to construct and differentiate the bases (\ref{eq:ordinary_basis}) and (\ref{eq:B-basis}) in EC spaces that comprise the constants, can be identified with the solution spaces of differential equations of type (\ref{eq:differential_equation}) and whose derivative spaces are also EC; 
    \item[(2)]
    to elevate the dimensions of the underlying EC spaces and consequently the order of the induced B-curves and B-surfaces; 
    \item[(3)]
    to subdivide B-curves and B-surfaces; 
    \item[(4)]
    to generate basis transformation matrices for the control point based exact description (i.e., B-representation) of ordinary integral curves and surfaces given in traditional parametric form.
\end{enumerate}
Since class diagrams, full implementation details and usage examples can be found in the attached user manual \citep{Roth2018b}, Section \ref{sec:implementation} presents very briefly the main packages, data structures and methods of our implementation. Section \ref{sec:examples_and_statistics} provides further examples, run-time statistics and gives advices for handling possible numerical instabilities. Section \ref{sec:closure} closes the manuscript with our summary and conclusions.

\setlength{\belowcaptionskip}{-9pt}
\setlength{\abovecaptionskip}{2pt}

\section{Theoretical results and proposed algorithms}\label{sec:proposed_algorithms}

In order to formulate the input and output of our algorithms, we define the following control point  based integral curves and surfaces.

\begin{definition}[B-curves]
    The convex combination 
    \begin{equation}
        \mathbf{c}_{n}\left(  u\right)  =\sum_{i=0}^{n}\mathbf{p}_{i}b_{n,i}\left(
        u\right)  ,~ u\in\left[  \alpha,\beta\right]  ,~\mathbf{p}_{i}=\left[
        p_{i}^{\ell}\right]  _{\ell=0}^{\delta-1}\in%
        \mathbb{R}
        ^{\delta},\,\delta \geq 2 \label{eq:B_curve}%
    \end{equation}
    described by means of the normalized B-basis (\ref{eq:B-basis}) is called an B-curve of order $n$, where $\left[\mathbf{p}_i\right]_{i=0}^n$ denotes its control polygon.
\end{definition}

\begin{definition}[B-surfaces]
    Denoting by
    \[
    \mathcal{B}_{n_r}^{\alpha_r, \beta_r}
    =
    \left\{
    b_{n_r,i_r}\left(u_r;\alpha_r,\beta_r\right): u_r \in \left[\alpha_r, \beta_r\right]
    \right\}_{i_r=0}^{n_r},~r=0,1
    \]
    two normalized B-basis of some EC spaces and using the tensor product of curves of type (\ref{eq:B_curve}), one can define the B-surface 
    \begin{equation}
        \label{eq:B-surface}
        \mathbf{s}_{n_0,n_1}\left(u_0,u_1\right)
        =
        \sum_{i_0=0}^{n_0} \sum_{i_1 = 0}^{n_1}
        \mathbf{p}_{i_0,i_1}
        b_{n_0,i_0}\left(u_0;\alpha_0, \beta_0\right)
        b_{n_1,i_1}\left(u_1;\alpha_1, \beta_1\right),\:\mathbf{p}_{i_0,i_1}=\left[p_{i_0,i_1}^{\ell}\right]_{\ell=0}^2\in\mathbb{R}^3
    \end{equation}
    of order $\left(n_0,n_1\right)$, where the matrix $\left[\mathbf{p}_{i_0,i_1}\right]_{i_0=0,\,i_1=0}^{n_0,\,n_1}$ forms a control net.
\end{definition}

\subsection{Construction and differentiation of normalized B-basis functions in a large class of EC spaces}\label{subsec:construction}

As already stated in Section \ref{sec:introduction}, our implementation assumes that the underlying EC space $\mathbb{S}_{n}^{\alpha,\beta}$ corresponds to the solution space of a constant-coefficient homogeneous linear differential equation of type (\ref{eq:differential_equation}), where $\beta-\alpha \in \left(0,\ell_n^{\prime}\right)$. This is not necessary for the correctness of the algorithms that will be presented in the forthcoming sections. The only reason for this additional assumption is the fact that in this case we have the possibility to handle a large family of (mixed) EC spaces in a unified way.

For example, if $\mathbf{i}=\sqrt{-1}$, $a, b \in \mathbb{R}$ and $z=a+\mathbf{i}b$ is an $m$th order ($m\geq 1$) zero of the characteristic polynomial (\ref{eq:characteristic_polynomial}), then based on its real and imaginary parts one has that:
\begin{itemize}
    \item
    if $a,b\in\mathbb{R}\setminus\left\{0\right\}$, the conjugate complex number $\overline{z}$ is also a root of multiplicity $m$ and consequently one obtains an algebraic-exponential-trigonometric (AET) mixed EC subspace
    \begin{equation}
        \label{eq:AET}
        \left\langle
        \left\{
        u^r e^{au} \cos\left(bu\right),
        u^r e^{au} \sin\left(bu\right):
        u \in \left[\alpha,\beta\right]
        \right\}_{r=0}^{m-1}
        \right\rangle \subseteq \mathbb{S}_n^{\alpha,\beta};
    \end{equation}
    \item
    if $a \neq 0$, but $b=0$, then one has the
    algebraic-exponential (AE) mixed EC subspace
    \begin{equation}
        \label{eq:AE}
        \left\langle
        \left\{
        u^r e^{au} : u \in \left[\alpha,\beta\right]
        \right\}_{r=0}^{m-1}
        \right\rangle \subseteq \mathbb{S}_n^{\alpha,\beta};
    \end{equation}
    \item
    if $a = 0$, but $b \neq 0$, then one obtains the algebraic-trigonometric (AT) mixed EC subspace
    \begin{equation}
        \label{eq:AT}
        \left\langle
        \left\{
        u^r \cos\left(bu\right),
        u^r \sin\left(bu\right):
        u \in \left[\alpha,\beta\right]
        \right\}_{r=0}^{m-1}
        \right\rangle \subseteq \mathbb{S}_n^{\alpha,\beta};
    \end{equation}
    \item
    if $a=b=0$ then one has the polynomial (P) EC subspace
    \begin{equation}
        \label{eq:P}	
        \left\langle
        \left\{
        u^r : u \in \left[\alpha,\beta\right]
        \right\}_{r=0}^{m-1}
        \right\rangle \subseteq \mathbb{S}_n^{\alpha,\beta}.
    \end{equation}
\end{itemize}
This means that one can easily define ordinary (mixed) basis functions by simply specifying the factorization of the characteristic polynomial (\ref{eq:characteristic_polynomial}), i.e., one can create (mixed) EC spaces at run-time in an interactive way. As we will see, the zeroth and higher order (endpoint) derivatives of the ordinary basis function will play an important role in case of all proposed algorithms. In case of the aforementioned (mixed) EC subspaces one can overload function operators to compute the required derivatives for arbitrarily fixed orders -- this possibility also motivates our assumption on the structure of the underlying EC space. Moreover, in real-world engineering, computer-aided design and manufacturing applications, usually one defines traditional parametric curves and surfaces by means of the ordinary basis functions presented above and, in our opinion, it would be nice to have a unified framework in which -- apart from general order elevation and subdivision -- one is also able to describe exactly important curves and surfaces by using control points and normalized B-basis functions.

In order to have normalizable bases in the vector space $\mathbb{S}_n^{\alpha,\beta}$, we also have to assume that $z=0$ is at least a first order zero of the characteristic polynomial (\ref{eq:characteristic_polynomial}).

Once we have created an EC space of type $\mathbb{S}_n^{\alpha,\beta}$ by defining its ordinary basis $\mathcal{F}_n^{\alpha,\beta}$ (where $\beta-\alpha \in \left(0,\ell_n^{\prime}\right)$), we also have to generate its unique normalized B-basis $\mathcal{B}_n^{\alpha,\beta}$.
In order to achieve this and to be as self-contained as possible, we recall the construction process \citep{CarnicerMainarPena2004} of $\mathcal{B}_n^{\alpha,\beta}$. As we will see, the steps of this process can be fully implemented in case of the aforementioned (mixed) EC spaces.

Consider the bicanonical basis $\left\{v_{n,i}\left(u\right):u\in\left[\alpha,\beta\right]\right\}_{i=0}^{n}$ formed by the particular integrals (\ref{eq:particular_integrals}) determined by the boundary conditions (\ref{eq:boundary_conditions}). Let $W_{\left[v_{n,n},v_{n,n-1},\ldots,v_{n,0}\right]  }\left(  \beta\right)  $ be the Wronskian matrix of the reverse ordered system
$\{  v_{n,n-i}\left(  u\right)  :\allowbreak{}u\in\left[  \alpha,\beta\right]  \}
_{i=0}^{n}$ at the parameter value $u=\beta$ and obtain its Doolittle-type $LU$ factorization %
\begin{equation}
    L\cdot U=W_{\left[  v_{n,n},v_{n,n-1},\ldots,v_{n,0}\right]  }\left(
    \beta\right),
    \label{eq:LU_factorization_of_reversed_system}
\end{equation}
where $L$ is a lower triangular matrix with unit diagonal,
while $U$ is a non-singular upper triangular matrix. Calculate the inverse matrices%
\[
U^{-1}:=\left[
\begin{array}
[c]{cccc}%
\mu_{0,0} & \mu_{0,1} & \cdots & \mu_{0,n}\\
0 & \mu_{1,1} & \cdots & \mu_{1,n}\\
\vdots & \vdots & \ddots & \vdots\\
0 & 0 & \cdots & \mu_{n,n}%
\end{array}
\right]  ,~L^{-1}:=\left[
\begin{array}
[c]{cccc}%
\lambda_{0,0} & 0 & \cdots & 0\\
\lambda_{1,0} & \lambda_{1,1} & \cdots & 0\\
\vdots & \vdots & \ddots & \vdots\\
\lambda_{n,0} & \lambda_{n,1} & \cdots & \lambda_{n,n}%
\end{array}
\right]
\]
and construct the normalized B-basis%
\begin{equation}
    \mathcal{B}_{n}^{\alpha,\beta}=\left\{  b_{n,i}\left(  u\right)  =\lambda
    _{n-i,0}\widetilde{b}_{n,i}\left(  u\right)  :u\in\left[  \alpha,\beta\right]
    \right\}  _{i=0}^{n}\label{eq:construction}%
\end{equation}
defined by%
\[
\left[
\begin{array}
[c]{cccc}%
\widetilde{b}_{n,n}\left(  u\right)   & \widetilde{b}_{n,n-1}\left(
u\right)   & \cdots & \widetilde{b}_{0}\left(
u\right)
\end{array}
\right]  :=\left[
\begin{array}
[c]{cccc}%
v_{n,n}\left(  u\right)   & v_{n,n-1}\left(  u\right)   & \cdots &
v_{n,0}\left(  u\right)
\end{array}
\right]  \cdot U^{-1}%
\]
and%
\[
\left[
\begin{array}
[c]{cccc}%
\lambda_{0,0} & \lambda_{1,0} & \cdots & \lambda_{n,0}%
\end{array}
\right]  ^{T}:=L^{-1}\cdot\left[
\begin{array}
[c]{cccc}%
1 & 0 & \cdots & 0
\end{array}
\right]  ^{T}.
\]

If the characteristic polynomial (\ref{eq:characteristic_polynomial}) is an either even or odd function, then the underlying EC space $\mathbb{S}_{n}^{\alpha,\beta}$ is invariant under reflections, and in this special case one obtains the symmetry (\ref{eq:symmetry}), i.e., we only need to determine the half of the basis functions
(\ref{eq:construction}). 

\begin{remark}[An alternative construction]The non-negative bicanonical basis $\big\{v_{n,i}:u\in\left[\alpha,\beta\right]\big\}_{i=0}^{n}$ formed by the particular integrals (\ref{eq:particular_integrals}) is a B-basis of the EC space $\mathbb{S}_n^{\alpha,\beta}$ (see \citep[Theorem 2.4/(ii)]{CarnicerMainarPena2004}). Compared with the previously described $LU$ decomposition based method, this B-basis can also be normalized by means of the normalizing coefficients
    \begin{equation}
        c_{n,i}:=-\sum_{r=0}^{i-1} c_{n,r} v_{n,r}^{\left(i\right)}\left(\alpha\right),~i=1,\ldots,n,
        \label{eq:alternative_normalizing_coefficients}
    \end{equation}
    where $c_{n,0}=1$. This means that the unique normalized B-basis functions of the underlying EC space $\mathbb{S}_{n}^{\alpha,\beta}$ could also be determined as the linear combinations
    \begin{equation}
        b_{n,i}\left(u\right) := \sum_{i=0}^{n} c_{n,i} v_{n,i}\left(u\right),~u \in \left[\alpha,\beta\right],~i=0,1,\ldots,n.
        \label{eq:alternative_construction_process}
    \end{equation}
    Although the construction process (\ref{eq:alternative_normalizing_coefficients})--(\ref{eq:alternative_construction_process}) requires less computational effort, several numerical tests show that this alternative method is numerically less stable than (\ref{eq:LU_factorization_of_reversed_system})--(\ref{eq:construction}). 
\end{remark}

Summarizing the calculations of the current section, we can state the next corollary that will be very important both in the formulation and in the implementation of all proposed algorithms.

\begin{corollary}[Differentiation of normalized B-basis functions]\label{cor:B_basis_derivatives}
    In general, the zeroth and higher order differentiation of the constructed normalized B-basis functions (\ref{eq:construction}) can be reduced to the evaluation of formulas
    \begin{align}
        b_{n,n-i}^{\left(  j\right)  }\left(  u\right)    
        & =\lambda_{i,0}\sum_{r=0}^{i}\mu_{r,i}\sum_{k=0}^{n}\rho_{n-r,k}\varphi
        _{n,k}^{\left(  j\right)  }\left(  u\right)  ,~\forall u \in \left[\alpha,\beta\right],~i=0,1,\ldots,n.
        \label{eq:mixed_b_derivatives}
    \end{align}
    Naturally, if the underlying EC space is reflection invariant, then formulas (\ref{eq:mixed_b_derivatives}) have to be applied only for indices $i=0,1,\ldots,\left\lfloor \frac{n}{2}\right\rfloor$, since in this special case one also has that
    \begin{align}
        b_{n,i}^{\left(  j\right)  }\left(  u\right)    & =\left(  -1\right)
        ^{j}b_{n,n-i}^{\left(  j\right)  }\left(  \alpha+\beta-u\right)  ,~\forall u \in \left[\alpha, \beta\right],~i=0,1,\ldots
        ,\left\lfloor \tfrac{n}{2}\right\rfloor.
        \label{eq:mixed_b_symmetric_derivatives}
    \end{align}
\end{corollary}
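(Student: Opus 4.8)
The plan is to unwind the construction (\ref{eq:construction}) of the normalized B-basis and then use nothing more than linearity of differentiation. The starting observation is that the Doolittle-type factorization (\ref{eq:LU_factorization_of_reversed_system}) supplies two triangular pieces of data: the numbers $\mu_{r,i}=\left(U^{-1}\right)_{r,i}$, which vanish whenever $r>i$ because $U^{-1}$ is upper triangular, and the scalars $\lambda_{i,0}=\left(L^{-1}\right)_{i,0}$ read off from the first column of $L^{-1}$. These are exactly the quantities that appear on the right-hand side of (\ref{eq:mixed_b_derivatives}), so the whole statement is essentially a bookkeeping exercise.

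First I would isolate $\widetilde{b}_{n,n-i}$ from the row-vector identity that defines the functions $\widetilde{b}_{n,\cdot}$ in (\ref{eq:construction}). The $\left(i+1\right)$-st entry of the left-hand side is $\widetilde{b}_{n,n-i}$, while the $\left(i+1\right)$-st entry of $\left[v_{n,n}\ v_{n,n-1}\ \cdots\ v_{n,0}\right]\cdot U^{-1}$ equals $\sum_{r=0}^{n}\mu_{r,i}\,v_{n,n-r}$; the upper-triangular vanishing of $\mu_{r,i}$ truncates this to $\sum_{r=0}^{i}\mu_{r,i}\,v_{n,n-r}$. Substituting the expansion $v_{n,n-r}=\sum_{k=0}^{n}\rho_{n-r,k}\varphi_{n,k}$ from (\ref{eq:particular_integrals}) and inserting the normalizing factor $\lambda_{i,0}$ dictated by $b_{n,n-i}=\lambda_{i,0}\widetilde{b}_{n,n-i}$ yields a closed expression for $b_{n,n-i}$ as a finite linear combination of the ordinary basis functions with constant coefficients. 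Differentiating this expression $j$ times term by term is legitimate because each $\varphi_{n,k}$ lies in $C^{n}\left(\left[\alpha,\beta\right]\right)$ (in fact in $C^{\infty}$ for the solution spaces of (\ref{eq:differential_equation}) handled by the implementation) and $\mu_{r,i},\rho_{n-r,k},\lambda_{i,0}$ do not depend on $u$; this produces precisely (\ref{eq:mixed_b_derivatives}).

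For the reflection-invariant case I would invoke the symmetry (\ref{eq:symmetry}), i.e.\ $b_{n,i}\left(u\right)=b_{n,n-i}\left(\alpha+\beta-u\right)$, and differentiate both sides $j$ times in $u$; the chain rule applied to the affine map $u\mapsto\alpha+\beta-u$, whose derivative is $-1$, contributes the factor $\left(-1\right)^{j}$ and gives (\ref{eq:mixed_b_symmetric_derivatives}), so that (\ref{eq:mixed_b_derivatives}) only needs to be evaluated for $i=0,1,\ldots,\left\lfloor n/2\right\rfloor$. There is no genuine obstacle in any of this; the only points that demand a little care are the index reversal that comes from working with the reverse-ordered bicanonical system $\left\{v_{n,n-i}\right\}_{i=0}^{n}$ and the transposed extraction of $\lambda_{i,0}$ from $L^{-1}\cdot\left[1\ 0\ \cdots\ 0\right]^{T}$, together with keeping track of which entries of $U^{-1}$ vanish so that the outer sum in (\ref{eq:mixed_b_derivatives}) runs only up to $r=i$ and not up to $r=n$.
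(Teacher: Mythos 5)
Your proposal is correct and follows exactly the route the paper intends: the corollary is stated as a summary of the construction (\ref{eq:LU_factorization_of_reversed_system})--(\ref{eq:construction}), and your unwinding of $\widetilde{b}_{n,n-i}=\sum_{r=0}^{i}\mu_{r,i}v_{n,n-r}$ (truncated by the upper triangularity of $U^{-1}$), the substitution $v_{n,n-r}=\sum_{k=0}^{n}\rho_{n-r,k}\varphi_{n,k}$, the normalization $b_{n,n-i}=\lambda_{i,0}\widetilde{b}_{n,n-i}$, term-by-term differentiation, and the chain-rule factor $\left(-1\right)^{j}$ from the reflection $u\mapsto\alpha+\beta-u$ are precisely the calculations the paper relies on.
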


Examples \ref{exmp:algebraic_trigonometric_space}-- \ref{exmp:exponential_trigonometric_space} and associated Figs.\ \ref{fig:algebraic_trigonometric_order_2}--\ref{fig:exponential_trigonometric_space} provide short examples in algebraic-trigonometric and exponential-trigonometric EC spaces, respectively. Both figures were generated by the help of the proposed function library. (After evaluation and rendering, only the \LaTeX-like labels of the obtained basis functions and some other descriptive elements were added in a post-processing phase. Here we have only illustrated the zeroth order derivatives of the automatically generated ordinary and normalized B-basis functions.)

\begin{example}[A reflection invariant algebraic-trigonometric EC space]
    \label{exmp:algebraic_trigonometric_space}
    Consider the differential equation $v^{\left(9\right)}\left(u\right)+
    6v^{\left(7\right)}\left(u\right)+
    9v^{\left(5\right)}\left(u\right)+
    4v^{\left(3\right)}\left(u\right)=0, ~u\in \left[-\frac{\pi}{2},\frac{\pi}{2}\right]$ and observe that its characteristic polynomial is an odd function that admits the factorization $p_9\left(z\right) = z^3 \prod_{k=1}^2 \left(z^2+k^2\right)^{3-k},~z\in\mathbb{C},$ i.e., the $9$-dimensional algebraic-trigonometric solution space $\mathbb{AT}_8^{-\frac{\pi}{2}, \frac{\pi}{2}}$ of the equation is reflection invariant and is spanned by the ordinary basis $\mathcal{F}_{8}^{-\frac{\pi}{2},\frac{\pi}{2}} = \big\{\varphi_{8,0}\left(u\right)\equiv 1,\allowbreak{} \varphi_{8,1}\left(u\right) = u,\allowbreak{} \varphi_{8,2}\left(u\right) = u^2,\allowbreak{} \varphi_{8,3}\left(u\right)\allowbreak{}=\allowbreak{}\cos\left(u\right), \varphi_{8,4}\left(u\right)=\sin\left(u\right), \varphi_{8,5}\left(u\right)=u\cos\left(u\right),  \varphi_{8,6}\left(u\right)=u\sin\left(u\right), \varphi_{8,7}\left(u\right)=\cos\left(2u\right),\allowbreak{} \varphi_{8,8}\left(u\right)=\sin\left(2u\right) : u \in [-\frac{\pi}{2},\frac{\pi}{2}]\big\}$. Providing the (higher order) zeros of $p_9$ as input parameters, our function library is able to differentiate and render both the ordinary basis and the normalized B-basis of $\mathbb{AT}_8^{-\frac{\pi}{2}, \frac{\pi}{2}}$. The output of our implementation can be seen in Fig.\ \ref{fig:algebraic_trigonometric_order_2}. 
\end{example}

\begin{figure}[!h]
    \centering
    \includegraphics[]{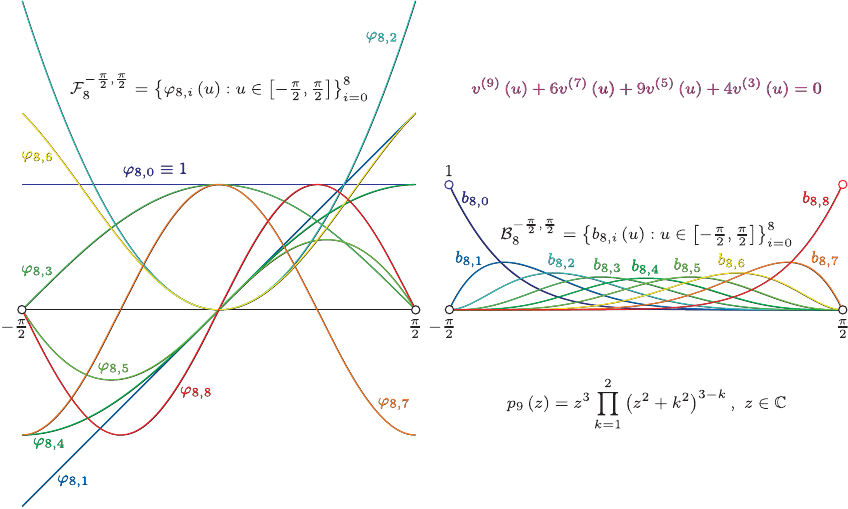}
    \caption{The figure illustrates both the ordinary basis and the normalized B-basis of the  $9$-dimensional reflection invariant algebraic-trigonometric EC space $\mathbb{AT}_8^{-\frac{\pi}{2}, \frac{\pi}{2}}$ described in Example \ref{exmp:algebraic_trigonometric_space}. Although the obtained B-basis functions are very similar to the octic Bernstein-polynomials and seemingly do not provide interesting shape parameters, they may be important in CAGD, since they ensure the integral B-representation of arcs/patches of important transcendental (cycloidal, helicoidal) curves/surfaces that cannot be described by the standard rational B\'ezier or NURBS models.}
    \label{fig:algebraic_trigonometric_order_2}
\end{figure}

\begin{example}[A not reflection invariant exponential-trigonometric EC space]
    \label{exmp:exponential_trigonometric_space}
    Consider the differential equation $v^{\left(  7\right)  }\left(  u\right)  -11v^{\left(  6\right)  }\left(
    u\right)  +44v^{\left(  5\right)  }\left(  u\right)  -78v^{\left(  4\right)
    }\left(  u\right)  +77v^{\left(  3\right)  }\left(  u\right)  -67v^{\left(
    2\right)  }\left(  u\right)  +34v^{\left(  1\right)  }\left(  u\right)  =0,~u\in\left[-2,\frac{1}{8}\right]$ and observe that its characteristic polynomial can be factorized into the form  $p_{7}\left(  z\right)  =z\left(  z-\mathbf{i}\right)  \left(  z+\mathbf{i}\right)  \left(  z-1\right)  \left(  z-2\right)  \left(  z-\left(4-\mathbf{i}\right)  \right)  \left(  z-\left(  4+\mathbf{i}\right)  \right),\allowbreak{}\,z\in\mathbb{C}$, i.e., the solution space of the equation is the $7$-dimensional exponential-trigonometric EC space $\mathbb{ET}_6^{-2,\frac{1}{8}}=\big\langle \mathcal{F}_6^{-2,\frac{1}{8}}\big\rangle = \big\langle\big\{\varphi_{6,0}\left(u\right)=1, \varphi_{6,1}\left(u\right) = \cos\left(u\right), \varphi_{6,2}\left(u\right) = \sin\left(u\right), \varphi_{6,3}\left(u\right) = e^{u}, \varphi_{6,4}\left(u\right) = e^{2u},\allowbreak{} \varphi_{6,5}\left(u\right) =\allowbreak{}e^{4 u}\cos\left(u\right),\allowbreak{} \varphi_{6,6}\left(u\right) = e^{4 u} \sin\left(u\right): u \in\big[-2,\tfrac{1}{8}\big]\big\}\big\rangle$. Fig.\ \ref{fig:exponential_trigonometric_space} illustrates both the ordinary basis and the normalized B-basis of the space. Compared with Example \ref{exmp:algebraic_trigonometric_space}, it can be observed that the obtained normalized B-basis functions are not symmetric under the reflection of the definition domain. (Observe that the underlying EC space also comprises transcendental functions that cannot represented by the unique normalized B-bases of M\"untz (or in special, polynomial) EC spaces.)
\end{example}

\begin{figure}[!h]
    \centering
    \includegraphics[]{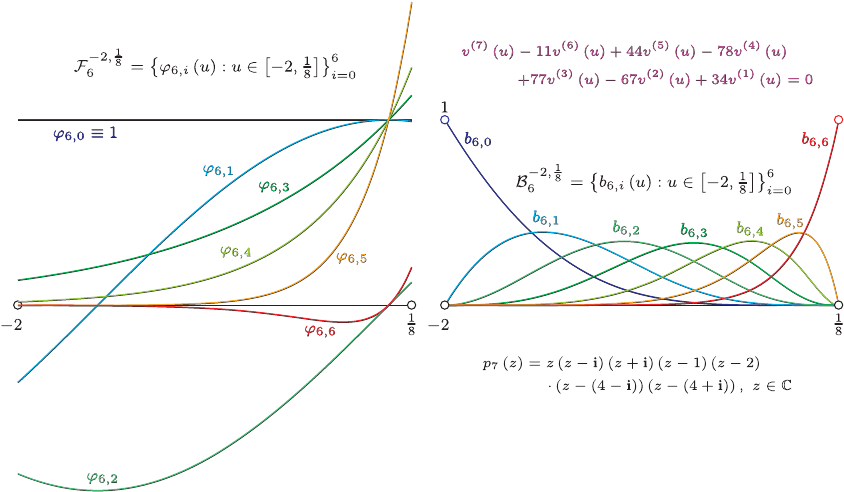}
    \caption{The figure illustrates both the ordinary basis and the normalized B-basis of the $7$-dimensional not reflection invariant exponential-trigonometric EC space $\mathbb{ET}_6^{-2, \frac{1}{8}}$ described in Example \ref{exmp:exponential_trigonometric_space}.} 
    \label{fig:exponential_trigonometric_space}
\end{figure}

\subsection{General dimension and order elevation}

Consider the EC spaces $\mathbb{S}_n^{\alpha,\beta}$ and $\mathbb{S}_{n+1}^{\alpha,\beta}$ such that $1\in \mathbb{S}_n^{\alpha,\beta} \subset \mathbb{S}_{n+1}^{\alpha,\beta}$ and assume that spaces $D\mathbb{S}_{n}^{\alpha,\beta}$ and $D\mathbb{S}_{n+1}^{\alpha,\beta}$ of the derivatives are also EC on $\left[\alpha,\beta\right]$, i.e., $
0<\beta-\alpha<\min\left\{\ell^{\prime}\left(\mathbb{S}_{n}^{\alpha,\beta}\right),\ell^{\prime}\left(\mathbb{S}_{n+1}^{\alpha,\beta}\right)\right\},
$
furthermore let us denote their unique normalized B-bases by $\left\{b_{n,i}\left(u\right):u\in\left[\alpha,\beta\right]\right\}_{i=0}^n$ and $\big\{b_{n+1,i}\left(u\right):u\in\left[\alpha,\beta\right]\big\}_{i=0}^{n+1}$, respectively.

Following the results of \citep[Theorem 3.1]{MazureLaurent1998} and, as a slight difference, considering both endpoints of the definition domain $\left[\alpha, \beta\right]$ in order to minimize the maximal differentiation order of the lower and higher order normalized B-basis functions, one can state the next lemma.

\begin{lemma}[General order elevation, \citep{MazureLaurent1998}]
    \label{lem:general_order_elevation}
    Using the notations of the section, the $n$th order B-curve (\ref{eq:B_curve}) fulfills the identity
    $$
    \mathbf{c}_n\left(u\right)=\sum_{i=0}^n \mathbf{p}_i b_{n,i}\left(u\right)\equiv\sum_{i=0}^{n+1}\mathbf{p}_{1,i} b_{n+1,i}\left(u\right)=:\mathbf{c}_{n+1}\left(u\right),~\forall u \in \left[\alpha, \beta\right],
    $$
    where $\mathbf{p}_{1,0} = \mathbf{p}_{0}, ~\mathbf{p}_{1,n+1} = \mathbf{p}_n$ and
    \begin{align}
        \mathbf{p}_{1,i} 
        &= 
        \left(
        1-
        \frac
        {
            b_{n,i}^{\left(i\right)}\left(\alpha\right)
        }
        {
            b_{n+1,i}^{\left(i\right)}\left(\alpha\right)
        }
        \right)
        \mathbf{p}_{i-1}
        +
        \frac
        {
            b_{n,i}^{\left(i\right)}\left(\alpha\right)
        }
        {
            b_{n+1,i}^{\left(i\right)}\left(\alpha\right)
        }
        \mathbf{p}_i,~i=1,\ldots,\left\lfloor \frac{n}{2} \right\rfloor,
        \label{eq:order_elevation_inner_points_first_half}
        \\
        \mathbf{p}_{1,n+1-i} 
        & = \frac{b_{n,n-i}^{\left(i\right)}\left(\beta\right)}{b_{n+1,n+1-i}^{\left(i\right)}\left(\beta\right)} \mathbf{p}_{n-i} + 
        \left(1-\frac{b_{n,n-i}^{\left(i\right)}\left(\beta\right)}{b_{n+1,n+1-i}^{\left(i\right)}\left(\beta\right)}\right)\mathbf{p}_{n+1-i},~i=1,\ldots,\left\lfloor\frac{n+1}{2}\right\rfloor.
        \label{eq:order_elevation_inner_points_second_half}
    \end{align}
\end{lemma}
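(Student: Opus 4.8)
The plan is to use the inclusion $1\in\mathbb{S}_n^{\alpha,\beta}\subset\mathbb{S}_{n+1}^{\alpha,\beta}$ to expand every lower-order B-basis function in the higher-order one, and then to pin down the resulting (very sparse) transformation matrix purely from the endpoint Hermite data (\ref{eq:endpoint_interpolation})--(\ref{eq:Hermite_conditions_alpha}) and the partition-of-unity identity (\ref{eq:partition_of_unity}). First I would write
\[
b_{n,i}=\sum_{j=0}^{n+1}t_{i,j}\,b_{n+1,j},\qquad i=0,1,\ldots,n,
\]
which is legitimate because $b_{n,i}\in\mathbb{S}_n^{\alpha,\beta}\subseteq\mathbb{S}_{n+1}^{\alpha,\beta}$ and $\big\{b_{n+1,j}\big\}_{j=0}^{n+1}$ is a basis of the latter space; the whole proof then amounts to identifying the numbers $t_{i,j}$.

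Second, I would prove that $T=[t_{i,j}]$ is upper bidiagonal, i.e.\ $t_{i,j}=0$ unless $j\in\{i,i+1\}$. For the lower part I would evaluate the expansion and its derivatives of orders $0,1,\ldots,i-1$ at $u=\alpha$: by (\ref{eq:endpoint_interpolation})--(\ref{eq:Hermite_conditions_0}) the function $b_{n+1,j}$ vanishes to order \emph{exactly} $j$ at $\alpha$ and $b_{n,i}$ vanishes to order \emph{exactly} $i$ at $\alpha$, so these equations form a triangular homogeneous system that forces $t_{i,0}=\cdots=t_{i,i-1}=0$ (and for $i=0$ yields $t_{0,0}=b_{n,0}(\alpha)=1$). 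Symmetrically, evaluating at $u=\beta$ and using (\ref{eq:endpoint_interpolation}) and (\ref{eq:Hermite_conditions_alpha}) -- where now $b_{n+1,j}$ vanishes to order exactly $n+1-j$ and $b_{n,i}$ to order exactly $n-i$ -- forces $t_{i,j}=0$ for $j>i+1$ (and gives $t_{n,n+1}=1$). Hence
\[
b_{n,i}=t_{i,i}\,b_{n+1,i}+t_{i,i+1}\,b_{n+1,i+1},\qquad i=0,1,\ldots,n.
\]

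Third, I would read off the coefficients and recognize the convex combinations. Differentiating the two-term relation $i$ times at $\alpha$ and using $b_{n+1,i+1}^{(i)}(\alpha)=0$, $b_{n+1,i}^{(i)}(\alpha)\neq0$ gives $t_{i,i}=b_{n,i}^{(i)}(\alpha)/b_{n+1,i}^{(i)}(\alpha)$; differentiating the relation for $b_{n,n-i}$ exactly $i$ times at $\beta$ and using $b_{n+1,n-i}^{(i)}(\beta)=0$, $b_{n+1,n+1-i}^{(i)}(\beta)\neq0$ gives $t_{n-i,\,n-i+1}=b_{n,n-i}^{(i)}(\beta)/b_{n+1,n+1-i}^{(i)}(\beta)$. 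Substituting the bidiagonal expansions into $\mathbf{c}_n=\sum_{i=0}^n\mathbf{p}_i b_{n,i}$ and collecting the coefficient of $b_{n+1,j}$ yields $\mathbf{c}_n\equiv\sum_{j=0}^{n+1}\mathbf{p}_{1,j}b_{n+1,j}$ with $\mathbf{p}_{1,0}=t_{0,0}\mathbf{p}_0=\mathbf{p}_0$, $\mathbf{p}_{1,n+1}=t_{n,n+1}\mathbf{p}_n=\mathbf{p}_n$, and $\mathbf{p}_{1,j}=t_{j-1,j}\mathbf{p}_{j-1}+t_{j,j}\mathbf{p}_j$ otherwise. To see that the pair $(t_{j-1,j},t_{j,j})$ sums to $1$, substitute $b_{n,i}=\sum_j t_{i,j}b_{n+1,j}$ into $\sum_{i=0}^n b_{n,i}\equiv1\equiv\sum_{j=0}^{n+1}b_{n+1,j}$ and compare coefficients in the basis $\big\{b_{n+1,j}\big\}$; uniqueness gives $t_{j-1,j}+t_{j,j}=1$ for $1\le j\le n$. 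Using $t_{j-1,j}=1-t_{j,j}$ for the indices $j=1,\ldots,\lfloor n/2\rfloor$ turns $\mathbf{p}_{1,j}$ into (\ref{eq:order_elevation_inner_points_first_half}), and using $t_{j,j}=1-t_{j-1,j}$ for $j=n+1-i$, $i=1,\ldots,\lfloor(n+1)/2\rfloor$, turns $\mathbf{p}_{1,n+1-i}$ into (\ref{eq:order_elevation_inner_points_second_half}); a quick index count shows these two ranges together with $j=0$ and $j=n+1$ cover all of $0,\ldots,n+1$ without overlap.

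The main obstacle is establishing the bidiagonal structure: one must use that the endpoint vanishing orders are \emph{exactly} $i$ (resp.\ $n-i$, $n+1-j$) rather than merely at least that, which is exactly what the strict-inequality parts of (\ref{eq:Hermite_conditions_0})--(\ref{eq:Hermite_conditions_alpha}) supply, applied both to the order-$n$ and the order-$(n+1)$ B-basis; the latter also guarantees that every denominator appearing above is nonzero. A minor additional point worth a remark is that the computation at $\alpha$ alone already determines all $t_{i,i}$ (hence all $t_{i,i+1}=1-t_{i,i}$) and the computation at $\beta$ alone already determines all $t_{i,i+1}$, so the two are automatically consistent by uniqueness of the expansion of $b_{n,i}$ in $\big\{b_{n+1,j}\big\}$; the only role of the ``first half / second half'' split in the statement is the numerically motivated choice of differentiating at whichever endpoint keeps the required differentiation order at most $\lfloor(n+1)/2\rfloor$.
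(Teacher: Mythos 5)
Your argument is correct and complete. Note that the paper itself offers no proof of this lemma: it is quoted from \citep[Theorem 3.1]{MazureLaurent1998}, whose original derivation goes through blossoming/Chebyshev--B\'ezier machinery. What you give instead is an elementary, self-contained derivation using only the inclusion $\mathbb{S}_n^{\alpha,\beta}\subset\mathbb{S}_{n+1}^{\alpha,\beta}$, the endpoint Hermite conditions (\ref{eq:endpoint_interpolation})--(\ref{eq:Hermite_conditions_alpha}) for \emph{both} normalized B-bases, and the partition of unity (\ref{eq:partition_of_unity}); this is exactly the style of argument the paper uses in its own proofs of Theorems \ref{thm:general_subdivision} and \ref{thm:efficient_basis_transformation} (triangular systems obtained by differentiating a functional identity at $\alpha$ and $\beta$), so your proof fits the paper's framework better than the cited one. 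All the delicate points are handled: the bidiagonality of $[t_{i,j}]$ really does require the \emph{exact} vanishing orders (the strict positivity clauses in (\ref{eq:Hermite_conditions_0})--(\ref{eq:Hermite_conditions_alpha})), the row sums $t_{j-1,j}+t_{j,j}=1$ follow from comparing the two partition-of-unity expansions in the basis $\{b_{n+1,j}\}_{j=0}^{n+1}$, and your index count correctly shows that $j=1,\ldots,\lfloor n/2\rfloor$ and $j=n+1-i$, $i=1,\ldots,\lfloor (n+1)/2\rfloor$ partition $\{1,\ldots,n\}$, which also explains the paper's remark that the split of (\ref{eq:order_elevation_inner_points_first_half})--(\ref{eq:order_elevation_inner_points_second_half}) between the two endpoints exists only to cap the required differentiation order.
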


Although Lemma \ref{lem:general_order_elevation} is valid for any nested EC spaces that fulfill the conditions $1\in \mathbb{S}_n^{\alpha,\beta} \subset \mathbb{S}_{n+1}^{\alpha,\beta}$ and for which the derivative spaces $D\mathbb{S}_{n}^{\alpha,\beta}$ and $D\mathbb{S}_{n+1}^{\alpha,\beta}$ are also EC, in case of our implementation, we always assume that the higher dimensional EC space $\mathbb{S}_{n+1}^{\alpha,\beta}$ can also be identified with the solution space of a constant-coefficient homogeneous linear differential equation. Naturally, the results of Lemma \ref{lem:general_order_elevation} can also be extended to the general order elevation of B-surfaces of type (\ref{eq:B-surface}) and our function library ensures this possibility as well as it is illustrated in Fig.\ \ref{fig:order_elevated_snails} associated with the next example.

\begin{example}[Order elevation of B-surfaces]
    \label{exmp:snail}
    Consider the ordinary exponential-trigonometric integral surface
    \begin{equation}
        \label{eq:snail}
        \def\arraystretch{1.3}
        \mathbf{s}\left(u_0, u_1\right) = 
        \left[
        \begin{array}{c}
            s^0\left(u_0, u_1\right)\\
            s^1\left(u_0, u_1\right)\\
            s^2\left(u_0, u_1\right)
        \end{array}
        \right]
        =
        \left[
        \begin{array}{c}
            \left(1-e^{\omega_0 u_0}\right) \cos\left(u_0\right) \left(\frac{5}{4}+\cos\left(u_1\right)\right)\\
            \left(e^{\omega_0 u_0}-1\right) \sin\left(u_0\right) \left(\frac{5}{4}+\cos\left(u_1\right)\right)\\
            7-e^{\omega_1 u_0} - \sin\left(u_1\right) + e^{\omega_0 u_0} \sin\left(u_1\right)
        \end{array}
        \right],
        \def\arraystretch{1.0}
    \end{equation}
    where 
    $
    \left(u_0, u_1\right)
    \in
    \left[
    \tfrac{7\pi}{2},\tfrac{49\pi}{8}
    \right]
    \times
    \left[
    -\tfrac{\pi}{3},\tfrac{5\pi}{3}
    \right], ~\omega_0 = \tfrac{1}{6\pi}$ and $ \omega_1 = \tfrac{1}{3\pi}. 
    $
    In order to describe any restriction (i.e., patch) $\left.\mathbf{s}\right|_{\left[\alpha_0,\beta_0\right]\times\left[\alpha_1,\beta_1\right]}$ by means of B-surfaces of type (\ref{eq:B-surface}), in directions $u_0$ and $u_1$ one needs to define parent EC spaces that include the subspaces
    $
        \mathbb{ET}_{6}^{\alpha_0, \beta_0} 
        := \big\langle \mathcal{E}\mathcal{T}_6^{\alpha_0, \beta_0} \big\rangle 
        := \big\langle \big\{1,\cos\left(u_0\right), \sin\left(u_0\right),\allowbreak{}  e^{\omega_0 u_0},\allowbreak{} e^{\omega_1 u_0}, \allowbreak{}e^{\omega_0 u_0}\allowbreak{} \cos\left(u_0\right), \allowbreak{}e^{\omega_0 u_0} \sin\left(u_0\right) : u_0\in\left[\alpha_0,\beta_0\right]\big\} \big\rangle
    $
    and 
    $
    \mathbb{T}_2^{\alpha_1, \beta_1}:=\allowbreak{}\big\langle\mathcal{T}_2^{\alpha_1, \beta_1}\big\rangle\allowbreak{}:=\allowbreak{}\big\langle\{1,\allowbreak{}\cos\left(u_1\right),\sin\left(u_1\right):\allowbreak{}u_1\in\left[\alpha_1,\beta_1\right]\}\big\rangle,$
    respectively, where the interval lengths $\beta_0-\alpha_0>0$ and $\beta_1-\alpha_1>0$ must be less than the critical lengths of the derivative spaces of the parent ones. Observe that $\mathbb{ET}_{6}^{\alpha_0, \beta_0}$ and $\mathbb{T}_2^{\alpha_1, \beta_1}$ can be identified with the solution spaces of those differential equations of type (\ref{eq:differential_equation}) whose characteristic polynomials can be factorized into 
    $
    p_7\left(z\right)=z\left(z-\mathbf{i}\right)\left(z+\mathbf{i}\right)\left(z-\omega_0\right)\left(z-\omega_1\right)\allowbreak{}\left(z-\left(\omega_0-\mathbf{i}\right)\right)\left(z-\left(\omega_0+\mathbf{i}\right)\right)
    $
    and
    $
    p_3\left(z\right)=z\left(z-\mathbf{i}\right)\left(z+\mathbf{i}\right),
    $
    respectively, where $z\in\mathbb{C}$. Providing the zeros of these polynomials as input parameters, the proposed function library is able to perform general order elevation either by increasing the order of one of these zeros, or by specifying new ones of single or higher multiplicity. Case (\textit{a}) of Fig.\ \ref{fig:order_elevated_snails} illustrates a control net of minimal size that allows the control point based exact description of the surface patch  $\left.\mathbf{s}\right|_{\left[\frac{11\pi}{2},\frac{49\pi}{8}\right]\times\left[-\frac{\pi}{3},\frac{\pi}{3}\right]}$, possible order elevations of which can be done in infinitely many ways, e.g., in case (\textit{b}) of Fig.\ \ref{fig:order_elevated_snails} in directions $u_0$ and $u_1$ we have applied the normalized B-bases of the higher dimensional algebraic-exponential-trigonometric and algebraic-trigonometric EC spaces $\mathbb{AET}_7^{\alpha_0,\beta_0}:=\big\langle\mathcal{E}\mathcal{T}_{6}^{\alpha_0, \beta_0} \cup\big\{ u_0 : u_0 \in \left[\alpha_0,\beta_0\right]
    \big\}\big\rangle$ and $\mathbb{AT}_3^{\alpha_1,\beta_1}:=\big\langle\mathcal{T}_2^{\alpha_1,\beta_1}\cup\big\{u_1 : u_1 \in \left[\alpha_1,\beta_1\right]
    \big\}\big\rangle$, respectively, while in case (\textit{c}) of Fig.\ \ref{fig:order_elevated_snails} we have considered the exponential-trigonometric and algebraic-trigonometric EC spaces $\mathbb{ET}_8^{\alpha_0,\beta_0}:= \big\langle\mathcal{E}\mathcal{T}_{6}^{\alpha_0, \beta_0} \cup\big\{e^{-\omega_0 u_0}, e^{-\omega_1 u_0} : u_0 \in \left[\alpha_0,\beta_0\right]
    \big\}\big\rangle$ and  $\mathbb{AT}_4^{\alpha_1,\beta_1}:=\big\langle\mathcal{T}_2^{\alpha_1,\beta_1}\cup\big\{u_1\cos\left(u_1\right),u_1\sin\left(u_1\right) : u_1 \in \left[\alpha_1,\beta_1\right]
    \big\}\big\rangle$, respectively. Observe that in case (\textit{b}) we have increased in both directions the order of the already existing zero $z=0$, i.e., we defined the new characteristic polynomials $p_8\left(z\right) = z p_7\left(z\right)$ and $p_4\left(z\right) = z p_3\left(z\right)$, while in case (\textit{c}) we applied the polynomials $p_9\left(z\right) = \left(z+\omega_0\right)\left(z+\omega_1\right)p_7\left(z\right)$ and $p_5\left(z\right) = \left(z-\mathbf{i}\right)\left(z+\mathbf{i}\right)p_3\left(z\right)$, i.e., in directions $u_0$ and $u_1$ we have introduced the new zeros $z = -\omega_0$ and $z = -\omega_1$ and increased the multiplicity of the existing zeros $z=\pm \mathbf{i}$ from $1$ to $2$, respectively.
\end{example}

\begin{figure}[!h]
    \centering
    \includegraphics[]{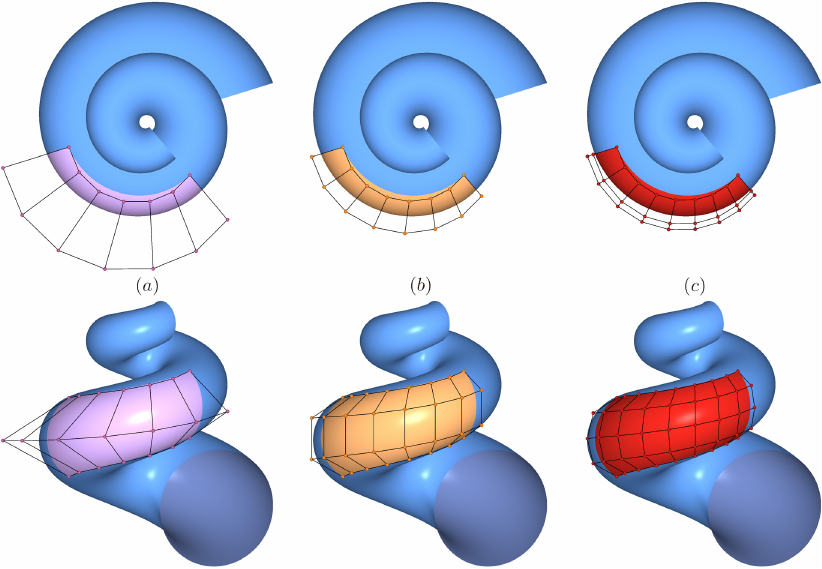}
    \caption{(\textit{a}) Control point based exact description of the patch $\left.\mathbf{s}\right|_{\left[\frac{11\pi}{2},\frac{49\pi}{8}\right]\times\left[-\frac{\pi}{3},\frac{\pi}{3}\right]}$ of the ordinary exponential-trigonometric integral surface (\ref{eq:snail}) by means of an B-surface of minimal order. Cases (\textit{b}) and (\textit{c}) illustrate two possible order elevations of the same patch by using the normalized B-bases of higher dimensional EC spaces. (More details can be found in Example \ref{exmp:snail}. All images were rendered by means of the proposed function library.) }
    \label{fig:order_elevated_snails}
\end{figure}

\subsection{General B-algorithm}\label{subsec:general_B_algorithm}

Theoretically, every normalized B-basis implies a B-algorithm for the subdivision of B-curves like (\ref{eq:B_curve}), i.e., for an arbitrarily fixed parameter value $\gamma \in \left(\alpha, \beta\right)$ there exists a recursive corner cutting de Casteljau-like algorithm that starts with the initial conditions %
$
\mathbf{p}_i^0\left(\gamma\right)\equiv\mathbf{p}_i,~ i=0,1,\ldots,n
$
and recursively defines the subdivision points
\begin{equation}
    \label{eq:subdivision}
    \mathbf{p}^{j}_i\left(\gamma\right)=
    \left(1-\xi_i^j\left(\gamma\right)\right)
    \cdot \mathbf{p}_{i}^{j-1}\left(\gamma\right)
    +
    \xi_i^j\left(\gamma\right)
    \cdot \mathbf{p}_{i+1}^{j-1}\left(\gamma\right),
    ~ i = 0,\ldots,n-j,~
    j = 1,\ldots,n,
\end{equation}
where the explicit closed forms of the blending functions
$\left\{\xi_i^j:\left[\alpha,\beta\right]\to\left[0,1\right]\right\}_{i=0,\,j=1}^{n-j,\,n}$, in general, are either not known, or, apart from some very special cases (like B\'ezier curves), usually have non-linear complicated expressions even in low-dimensional EC spaces.

Using blossoms, B-algorithms were theoretically  characterized in \citep[Theorem 2.4]{Pottmann1993} by means of a non-con\-struc\-tive procedure relying on unevaluated exterior products which, unfortunately, are not very useful concerning implementation. Even the author of \citep[Theorem 2.4]{Pottmann1993} states that the steps of his theoretical ``\textit{construction can be used for an implementation if the maps from the parameter interval to the axis [...] are simple. This is the case for rational B\'ezier curves where we have projective maps. Then we get exactly Farin's projective version of the de Casteljau algorithm ([Farin, '83]) see also [Farin \& Worsey, '91]. The value of the algorithm for general TB-curves lies on the theoretical side.}" 

Subdivision related constructive algorithms appear, e.g., in \citep[Section 3]{MainarPena1999}. These methods use Neville elimination and are based both on $LU$ decomposition of non-singular stochastic square matrices and on (unique) bidiagonal decompositions of non-singular lower triangular stochastic matrices \citep[Theorem 4.5]{GascaPena1996}.

Compared with subdivision techniques presented in \citep[Theorem 2.4]{Pottmann1993} and \citep[Section 3]{MainarPena1999}, in what follows, we propose simple, efficient and easily implementable constructive formulas which are based only on continuity conditions and avoid the evaluation of the non-diagonal entries of the triangular scheme that can be associated with every B-algorithm.

Let $\mathbb{S}_n^{\alpha,\beta}$ be an EC space, where $1\in\mathbb{S}_n^{\alpha,\beta}$ and $\beta-\alpha \in \left(0,\ell_n^{\prime}\right)$. Consider the B-curve (\ref{eq:B_curve}) and let $
\mathcal{B}_{n}^{\alpha,\gamma}:=\{  b_{n,i}\left(  u;\alpha
,\gamma\right)  :\allowbreak{}u\allowbreak{}\in\allowbreak{}\left[  \alpha,\gamma\right]  \}  _{i=0}^{n}%
$
and $\mathcal{B}_{n}^{\gamma,\beta}:=\{  b_{n,i}\left(  u;\gamma,\beta\right)
:u \in\allowbreak{}[  \gamma,\beta]  \}  _{i=0}^{n}$ be the unique normalized B-bases of the restricted EC spaces $\mathbb{S}_n^{\alpha,\gamma}:=\operatorname{span}\mathcal{F}_{n}^{\alpha,\gamma}\allowbreak{}:=\operatorname{span}\allowbreak{}\left.  \mathcal{F}_{n}^{\alpha,\beta
}\right\vert _{\left[  \alpha,\gamma\right]  }%
$ and $
\mathbb{S}_n^{\gamma,\beta}:=\operatorname{span}\mathcal{F}_{n}^{\gamma,\beta}:=\operatorname{span}\left.  \mathcal{F}_{n}^{\alpha,\beta
}\right\vert _{\left[  \gamma,\beta\right]  }
$, respectively. Consider also the diagonal entries
$$
\left\{\Red{\boldsymbol{\lambda}_i\left(\gamma\right)}:=\Red{\mathbf{p}_0^i\left(\gamma\right)}\right\}_{i=0}^n
\text{ and }
\left\{\Blue{\boldsymbol{\varrho}_i\left(\gamma\right)}:=\Blue{\mathbf{p}_{i}^{n-i}\left(\gamma\right)}\right\}_{i=0}^n
$$
of the triangular scheme 
\[%
\vspace{-3mm}
\def\arraystretch{1.3}
\begin{array}
[c]{lllll}%
\mathbf{p}_{0}=:%
\Red{\boldsymbol{\lambda}_{0}\left(\gamma\right)} &  &  &  & \\
\mathbf{p}_{1} & \mathbf{p}_{0}^{1}\left(  \gamma\right)  =:%
\Red{\boldsymbol{\lambda}_{1}\left(\gamma\right)} &  &  & \\
\mathbf{p}_{2} & \mathbf{p}_{1}^{1}\left(  \gamma\right)   & \mathbf{p}%
_{0}^{2}\left(  \gamma\right)  =:%
\Red{\boldsymbol{\lambda}_{2}\left(\gamma\right)} &  & \\
\vdots & \vdots & \vdots & \cdots & \mathbf{p}_{0}^{n}\left(  \gamma\right)  =:%
\Red{\boldsymbol{\lambda}_{n}\left(\gamma\right)}=:\Blue{\boldsymbol{\varrho}_{0}\left(\gamma\right)}\\
\mathbf{p}_{n-2} & \mathbf{p}_{n-2}^{1}\left(  \gamma\right)
& \mathbf{p}_{n-2}^{2}\left(  \gamma\right)  =:\Blue{\boldsymbol{\varrho}_{n-2}\left(\gamma\right)} &  & %
\\
\mathbf{p}_{n-1}
&
\mathbf{p}_{n-1}^{1}\left(  \gamma\right)
=:\Blue{\boldsymbol{\varrho}_{n-1}\left(\gamma\right)}
\\
\mathbf{p}_{n}=:\Blue{\boldsymbol{\varrho}_{n}\left(\gamma\right)}
\end{array}
\def\arraystretch{1.0}
\]
that can be associated with the recursive process (\ref{eq:subdivision}). Blending these points with the functions of the normalized B-bases $\mathcal{B}_n^{\alpha,\gamma}$ and $\mathcal{B}_n^{\gamma, \beta}$, the B-curve (\ref{eq:B_curve}) of order $n$ can be subdivided into the left and right arcs 
\begin{equation}
    \label{eq:left_arc}
    \Red{\boldsymbol{l}_n}\left(u\right) := \displaystyle{}\sum_{i=0}^{n}
    \Red{\boldsymbol{\lambda}_i\left(\gamma\right)
    } \cdot b_{n,i}\left(u;\alpha,\gamma\right)\equiv\mathbf{c}_n\left(u\right),~\forall u\in\left[\alpha,\gamma\right]
\end{equation}
and
\begin{equation}
    \label{eq:right_arc}
    \Blue{\mathbf{r}_n}\left(u\right) := \displaystyle{}\sum_{i=0}^{n}
    \Blue{
        \boldsymbol{\varrho}_i\left(\gamma\right)
    }\cdot b_{n,i}\left(u;\gamma,\beta\right) \equiv \mathbf{c}_n\left(u\right),~\forall u\in\left[\gamma,\beta\right],
\end{equation}
respectively, that also fulfill the identities 
\begin{align}
    \Red{\boldsymbol{l}_n^{\Black{\left(j\right)}}}\left(u\right)&=\mathbf{c}_n^{\left(j\right)}\left(u\right), ~\forall u \in \left[\alpha,\gamma\right],
    \label{eq:subdivision_left_alpha}
    \\
    \Blue{\mathbf{r}_n^{\Black{\left(j\right)}}}\left(u\right)&=\mathbf{c}_n^{\left(j\right)}\left(u\right),~\forall u \in \left[\gamma, \beta\right]
    \label{eq:subdivistion_right_beta}
\end{align}
for all differentiation orders $j\geq 0$.

We close the current section with a recursive method by means of which one can determine the unknown diagonal subdivision points  $\left\{\Red{\boldsymbol{\lambda}_i\left(\gamma\right)}\right\}_{i=0}^n$ and $\left\{\Blue{\boldsymbol{\varrho}_{i}\left(\gamma\right)}\right\}_{i=0}^n$ even in the absence of the usually unknown blending functions $\left\{\xi_i^j:\left[\alpha,\beta\right]\to\left[0,1\right]\right\}_{i=0,\,j=1}^{n-j,\,n}$.

\begin{theorem}[General B-algorithm]
    \label{thm:general_subdivision}
    Given an arbitrarily fixed parameter value $\gamma\in\left(\alpha, \beta\right)$ and starting with the initial conditions
    \begin{align}
        \Red{\boldsymbol{\lambda}_0\left(\gamma\right)}&=\mathbf{c}_n\left(\alpha\right)=\mathbf{p}_0,
        \label{eq:left_c_initial_condition_alpha}
        \\
        \Red{\boldsymbol{\lambda}_n\left(\gamma\right)}&=\mathbf{c}_n\left(\gamma\right)=\Blue{\boldsymbol{\varrho}_0\left(\gamma\right)},
        \label{eq:left_c_right_initial_condition_gamma}
        \\
        \Blue{\boldsymbol{\varrho}_n\left(\gamma\right)} &= \mathbf{c}_n\left(\beta\right) = \mathbf{p}_n,
        \label{eq:right_c_initial_condition_beta}
    \end{align}
    the unknown diagonal subdivision points $\left\{\Red{\boldsymbol{\lambda}_i\left(\gamma\right)}\right\}_{i=0}^n$ and $\left\{\Blue{\boldsymbol{\varrho}_{i}\left(\gamma\right)}\right\}_{i=0}^n$ can be iteratively determined by means of the recursive formulas
    \begin{align}
        \Red{\boldsymbol{\lambda}_i\left(\gamma\right)}
        &=
        \frac{1}{b_{n,i}^{\left(i\right)}\left(\alpha;\alpha,\gamma\right)}
        \left(
        \mathbf{c}_n^{\left(i\right)}\left(\alpha\right)
        -
        \sum_{j = 0}^{i-1} \Red{\boldsymbol{\lambda}_j\left(\gamma\right)}
        \cdot 
        b_{n,j}^{\left(i\right)}\left(\alpha; \alpha, \gamma\right)
        \right),~i = 1,\ldots,\left\lfloor\frac{n-1}{2}\right\rfloor,
        \label{eq:left_subdivision_points_first_half}
        \\
        \Red{\boldsymbol{\lambda}_{n-i}\left(\gamma\right)}
        &=
        \frac{1}{b_{n,n-i}^{\left(i\right)}\left(\gamma;\alpha,\gamma\right)}
        \left(
        \mathbf{c}_n^{\left(i\right)}\left(\gamma\right)
        -
        \sum_{j = 0}^{i-1}
        \Red{\boldsymbol{\lambda}_{n-j}\left(\gamma\right)}
        \cdot
        b_{n,n-j}^{\left(i\right)}\left(\gamma; \alpha, \gamma\right)
        \right),
        ~i=1,\ldots,\left\lfloor\frac{n}{2}\right\rfloor,
        \label{eq:left_subdivision_points_second_half}
        \\
        \Blue{\boldsymbol{\varrho}_i\left(\gamma\right)}
        &=
        \frac{1}{b_{n,i}^{\left(i\right)}\left(\gamma;\gamma,\beta\right)}
        \left(
        \mathbf{c}_n^{\left(i\right)}\left(\gamma\right)
        -
        \sum_{j = 0}^{i-1}
        \Blue{\boldsymbol{\varrho}_j\left(\gamma\right)}
        \cdot
        b_{n,j}^{\left(i\right)}\left(\gamma;\gamma,\beta\right)
        \right),
        ~i=1,\ldots,\left\lfloor\frac{n}{2}\right\rfloor,
        \label{eq:right_subdivision_points_first_half}
        \\
        \Blue{\boldsymbol{\varrho}_{n-i}\left(\gamma\right)}
        &=
        \frac{1}{b_{n,n-i}^{\left(i\right)}\left(\beta;\gamma,\beta\right)}
        \left(
        \mathbf{c}_n^{\left(i\right)}\left(\beta\right)
        -
        \sum_{j=0}^{i-1}
        \Blue{\boldsymbol{\varrho}_{n-j}\left(\gamma\right)}
        \cdot
        b_{n,n-j}^{\left(i\right)}\left(\beta;\gamma,\beta\right)
        \right),~
        i=1,\ldots,\left\lfloor\frac{n-1}{2}\right\rfloor.
        \label{eq:right_subdivision_points_second_half}
    \end{align}
\end{theorem}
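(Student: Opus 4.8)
The plan is to obtain each of the four recursive formulas by differentiating the subdivision identities $\boldsymbol{l}_n\equiv\mathbf{c}_n$ on $[\alpha,\gamma]$ and $\mathbf{r}_n\equiv\mathbf{c}_n$ on $[\gamma,\beta]$ a fixed number of times and evaluating at the endpoints, exploiting the Hermite-type endpoint conditions (\ref{eq:endpoint_interpolation})--(\ref{eq:Hermite_conditions_alpha}) that the normalized B-bases $\mathcal{B}_n^{\alpha,\gamma}$ and $\mathcal{B}_n^{\gamma,\beta}$ inherit on their own definition domains. Before starting, I would observe that $\mathcal{F}_n^{\alpha,\gamma}$ and $\mathcal{F}_n^{\gamma,\beta}$ are merely the restrictions of $\mathcal{F}_n^{\alpha,\beta}$, and that $[\alpha,\gamma],[\gamma,\beta]\subset[\alpha,\beta]$ have length strictly below $\beta-\alpha<\ell_n^{\prime}$; hence $\mathbb{S}_n^{\alpha,\gamma}$ and $\mathbb{S}_n^{\gamma,\beta}$ are again EC spaces with EC derivative spaces, so each carries a \emph{unique} normalized B-basis and the coefficients $\{\boldsymbol{\lambda}_i(\gamma)\}_{i=0}^n$, $\{\boldsymbol{\varrho}_i(\gamma)\}_{i=0}^n$ of the expansions (\ref{eq:left_arc})--(\ref{eq:right_arc}) are well defined; it then only remains to show they obey the announced recursion.

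First I would differentiate $\boldsymbol{l}_n(u)=\sum_{k=0}^n\boldsymbol{\lambda}_k(\gamma)\,b_{n,k}(u;\alpha,\gamma)=\mathbf{c}_n(u)$ exactly $i$ times and put $u=\alpha$: by (\ref{eq:Hermite_conditions_0}) applied on $[\alpha,\gamma]$ every term with $k>i$ vanishes and the $k=i$ term carries the strictly positive factor $b_{n,i}^{(i)}(\alpha;\alpha,\gamma)$, so solving for $\boldsymbol{\lambda}_i(\gamma)$ yields (\ref{eq:left_subdivision_points_first_half}), while the case $i=0$ reduces to $\boldsymbol{\lambda}_0(\gamma)=\mathbf{c}_n(\alpha)=\mathbf{p}_0$ since $b_{n,0}(\alpha;\alpha,\gamma)=1$, i.e.\ (\ref{eq:left_c_initial_condition_alpha}). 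Differentiating the same identity $i$ times but evaluating at the right endpoint $u=\gamma$, condition (\ref{eq:Hermite_conditions_alpha}) on $[\alpha,\gamma]$ now annihilates every term with $k<n-i$ and leaves the $k=n-i$ term with the nonzero factor $b_{n,n-i}^{(i)}(\gamma;\alpha,\gamma)$ (of sign $(-1)^i$); re-indexing $k=n-j$ gives (\ref{eq:left_subdivision_points_second_half}), the case $i=0$ producing $\boldsymbol{\lambda}_n(\gamma)=\mathbf{c}_n(\gamma)$, i.e.\ (\ref{eq:left_c_right_initial_condition_gamma}). The two formulas (\ref{eq:right_subdivision_points_first_half}) and (\ref{eq:right_subdivision_points_second_half}) for the right arc follow verbatim from $\mathbf{r}_n(u)=\mathbf{c}_n(u)$ on $[\gamma,\beta]$ by differentiating $i$ times and evaluating at $u=\gamma$ (using (\ref{eq:Hermite_conditions_0}) on $[\gamma,\beta]$) and at $u=\beta$ (using (\ref{eq:Hermite_conditions_alpha}) on $[\gamma,\beta]$), their $i=0$ instances reproducing $\boldsymbol{\varrho}_0(\gamma)=\mathbf{c}_n(\gamma)$ and $\boldsymbol{\varrho}_n(\gamma)=\mathbf{c}_n(\beta)=\mathbf{p}_n$, i.e.\ (\ref{eq:left_c_right_initial_condition_gamma}) and (\ref{eq:right_c_initial_condition_beta}). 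All divisions are legitimate because the leading coefficients are nonzero by (\ref{eq:Hermite_conditions_0})--(\ref{eq:Hermite_conditions_alpha}), and each formula is a genuine recursion: in (\ref{eq:left_subdivision_points_first_half}) and (\ref{eq:right_subdivision_points_first_half}) the $i$-th point refers only to already-computed points of \emph{lower} index (seeded by $\boldsymbol{\lambda}_0(\gamma)$, $\boldsymbol{\varrho}_0(\gamma)$), and in (\ref{eq:left_subdivision_points_second_half}) and (\ref{eq:right_subdivision_points_second_half}) only to already-computed points of \emph{higher} index (seeded by $\boldsymbol{\lambda}_n(\gamma)$, $\boldsymbol{\varrho}_n(\gamma)$).

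The only point that needs a moment of care — and the one I expect to be the main obstacle — is verifying that the two one-sided recursions together produce every diagonal point exactly once. For the left arc, the forward recursion (\ref{eq:left_subdivision_points_first_half}) supplies $\boldsymbol{\lambda}_1(\gamma),\dots,\boldsymbol{\lambda}_{\lfloor(n-1)/2\rfloor}(\gamma)$ and the backward recursion (\ref{eq:left_subdivision_points_second_half}) supplies $\boldsymbol{\lambda}_{n-1}(\gamma),\dots,\boldsymbol{\lambda}_{n-\lfloor n/2\rfloor}(\gamma)=\boldsymbol{\lambda}_{\lceil n/2\rceil}(\gamma)$; since $\lfloor(n-1)/2\rfloor$ and $\lceil n/2\rceil$ are consecutive integers for both parities of $n$, these ranges partition $\{1,\dots,n-1\}$ with neither gap nor overlap, and together with $\boldsymbol{\lambda}_0(\gamma)$ and $\boldsymbol{\lambda}_n(\gamma)$ this accounts for all of $\{0,\dots,n\}$. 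The symmetric count for $\{\boldsymbol{\varrho}_i(\gamma)\}_{i=0}^n$ — where (\ref{eq:right_subdivision_points_first_half}) runs to $\lfloor n/2\rfloor$ and (\ref{eq:right_subdivision_points_second_half}) to $\lfloor(n-1)/2\rfloor$ — works out identically. Hence every $\boldsymbol{\lambda}_i(\gamma)$ and $\boldsymbol{\varrho}_i(\gamma)$ is uniquely determined, and beyond this synchronisation of the two endpoint-driven recursions the argument is a direct substitution into the endpoint derivatives of (\ref{eq:left_arc})--(\ref{eq:right_arc}).
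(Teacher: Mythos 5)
Your proposal is correct and follows essentially the same route as the paper: differentiate the identities $\boldsymbol{l}_n\equiv\mathbf{c}_n$ and $\mathbf{r}_n\equiv\mathbf{c}_n$ up to order $i$, evaluate at the relevant endpoint, and use the Hermite-type conditions (\ref{eq:Hermite_conditions_0})--(\ref{eq:Hermite_conditions_alpha}) of $\mathcal{B}_n^{\alpha,\gamma}$ and $\mathcal{B}_n^{\gamma,\beta}$ to truncate the sum and isolate the new point by a division that is legitimate since the leading derivative is nonzero. Your explicit verification that the forward and backward index ranges partition $\{1,\ldots,n-1\}$ for both parities of $n$ is a welcome addition that the paper only gestures at when it remarks that the recursions would be valid for any $i$ but the ranges were chosen to minimize the maximal differentiation order.
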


\begin{proof}
    The calculation of the unknown subdivision points $\left\{\Red{\boldsymbol{\lambda}_i\left(\gamma\right)}\right\}_{i=0}^n$ and $\left\{\Blue{\boldsymbol{\varrho}_{i}\left(\gamma\right)}\right\}_{i=0}^n$ can be reduced to the combined application of differentiation identities (\ref{eq:subdivision_left_alpha})--(\ref{eq:subdivistion_right_beta}) and of those Hermite-type endpoint conditions  (\ref{eq:endpoint_interpolation})--(\ref{eq:Hermite_conditions_alpha})
    that are fulfilled by the normalized B-bases $\mathcal{B}_{n}^{\alpha,\beta}$, $\mathcal{B}_{n}^{\alpha,\gamma}$ and $\mathcal{B}_{n}^{\gamma,\beta}$ corresponding to the intervals $\left[\alpha,\beta\right]$, $\left[\alpha,\gamma\right]$ and $\left[\gamma,\beta\right]$, respectively.
    
    For example, the initial condition (\ref{eq:left_c_initial_condition_alpha}) follows from the endpoint interpolation properties of the B-curves (\ref{eq:B_curve}) and (\ref{eq:left_arc}), since
    \[
    \mathbf{c}_n\left(\alpha\right) = 
    \sum_{i=0}^n \mathbf{p}_i \cdot b_{n,i}\left(\alpha;\alpha,\beta\right)
    = \mathbf{p}_0 
    =\Red{\boldsymbol{\lambda}_0\left(\gamma\right)}
    = \sum_{i=0}^n\Red{\boldsymbol{\lambda}_i\left(\gamma\right)}\cdot b_{n,i}\left(\alpha; \alpha, \gamma\right)
    =
    \Red{\boldsymbol{l}_n}\left(\alpha\right).
    \]
    At the same time, for all differentiation orders $i=1,\ldots,\left\lfloor\frac{n-1}{2}\right\rfloor$ one obtains both the condition $b_{n,i}^{\left(i\right)}\left(\alpha;\alpha,\gamma\right)>0$ and the equality
    \begin{align*}
        \mathbf{c}_n^{\left(i\right)}\left(\alpha\right)
        &=
        \Red{\boldsymbol{l}_n^{\Black{\left(i\right)}}}\left(\alpha\right)
        \\
        &
        =
        \sum_{j=0}^{n}\Red{\boldsymbol{\lambda}_j\left(\gamma\right)}\cdot b_{n,j}^{\left(i\right)}\left(\alpha;\alpha,\gamma\right)
        \\
        &
        =
        \sum_{j=0}^{i}\Red{\boldsymbol{\lambda}_j\left(\gamma\right)}\cdot b_{n,j}^{\left(i\right)}\left(\alpha;\alpha,\gamma\right)
        \\
        &=
        \sum_{j=0}^{i-1}\Red{\boldsymbol{\lambda}_j\left(\gamma\right)}\cdot b_{n,j}^{\left(i\right)}\left(\alpha;\alpha,\gamma\right)
        +
        \Red{\boldsymbol{\lambda}_i\left(\gamma\right)}\cdot b_{n,i}^{\left(i\right)}\left(\alpha;\alpha,\gamma\right),
    \end{align*}
    from which follows exactly formula (\ref{eq:left_subdivision_points_first_half}) for the unknown subdivision point $\Red{\boldsymbol{\lambda}_i\left(\gamma\right)}$. The remaining  recursive formulas {(\ref{eq:left_subdivision_points_second_half})--(\ref{eq:right_subdivision_points_second_half})} can be proved in a similar way. Observe that each recursion would also be valid for arbitrary values of the index $i\in\left\{1,\ldots,n\right\}$. In the statement of the theorem we have restricted the index domain of each formula, since the maximum order of B-basis function derivatives that have to be evaluated has to be as small as possible in order to ensure greater efficiency and numerical stability for the just presented general B-algorithm.
\end{proof}

The subdivision technique presented in Theorem \ref{thm:general_subdivision} can also be extended to B-surfaces of the type (\ref{eq:B-surface}) as it is shown in Fig.\ \ref{fig:subdivided_snail}.

\begin{figure}[!h]
    \centering
    \includegraphics[scale = 0.9125]{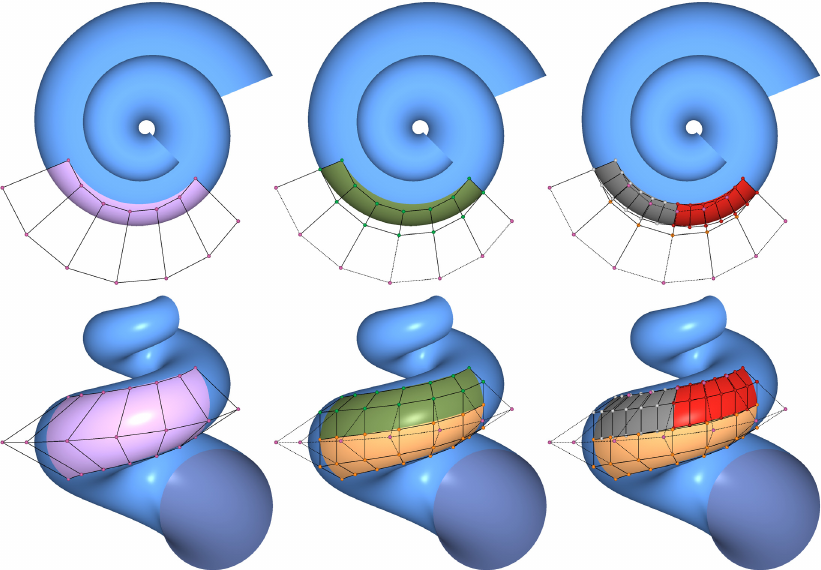}
    \caption{The surface patch $\left.\mathbf{s}\right|_{\left[\frac{11\pi}{2},\frac{49\pi}{8}\right]\times\left[-\frac{\pi}{3},\frac{\pi}{3}\right]}$ of the ordinary exponential-trigonometric integral surface (\ref{eq:snail}) is subdivided first at the parameter value $u_1=0$, then one of the obtained surface patches is further subdivided at the parameter value $u_0=\frac{93\pi}{16}$. (All images were rendered by means of the proposed function library.)}
    \label{fig:subdivided_snail}
\end{figure}

\subsection{General basis transformation}

In \citep{Roth2015b} we have already constructed the matrix of the general basis transformation that maps the normalized B-basis $\mathcal{B}_n^{\alpha,\beta}$ to the ordinary basis $\mathcal{F}_{n}^{\alpha,\beta}$ of the EC space $\mathbb{S}_n^{\alpha,\beta}$, where $\beta-\alpha \in \left(0,\ell_n^{\prime}\right)$. Namely, we have the next theorem.

\begin{theorem}
    [General basis transformation, \citep{Roth2015b}]\label{thm:basis_transformation}The matrix form of the
    linear transformation that maps the normalized B-basis $\mathcal{B}%
    _{n}^{\alpha,\beta}$ to the ordinary basis $\mathcal{F}_{n}^{\alpha,\beta}$ is%
    \begin{equation}
        \left[
        \begin{array}
            [c]{c}%
            \varphi_{n,i}\left(  u\right)
        \end{array}
        \right]  _{i=0}^{n}=\left[  t_{i,j}^{n}\right]  _{i=0,~j=0}^{n,~n}\cdot\left[
        \begin{array}
            [c]{c}%
            b_{n,i}\left(  u\right)
        \end{array}
        \right]  _{i=0}^{n},~\forall u\in\left[  \alpha,\beta\right]  ,
        \label{eq:basis_transformation}%
    \end{equation}
    where $t_{0,j}^{n}=1,~j=0,1,\ldots,n$ and $t_{i,0}^{n}= \varphi_{n,i}\left(  \alpha\right),~
    t_{i,n}^{n}=  ~\varphi_{n,i}\left(  \beta\right)  ,~i=0,1,\ldots
    ,n$, while%
    \begin{align}
        t_{i,j}^{n}=  &  ~\varphi_{n,i}\left(  \alpha\right)  -\frac{1}{b_{n,j}%
            ^{\left(  j\right)  }\left(  \alpha\right)  }\cdot\left.  \sum_{r=1}%
        ^{j-1}\frac{\varphi_{n,i}^{\left(  r\right)  }\left(  \alpha\right)  }%
        {b_{n,r}^{\left(  r\right)  }\left(  \alpha\right)  }\right(  b_{n,r}^{\left(
            j\right)  }\left(  \alpha\right)  +\label{eq:first_half}\\
        &  ~\left.  +\sum_{\ell=1}^{j-r-1}\left(  -1\right)  ^{\ell}\sum
        _{r<k_{1}<k_{2}<\ldots<k_{\ell}<j}\frac{b_{n,r}^{\left(  k_{1}\right)
            }\left(  \alpha\right)  b_{n,k_{1}}^{\left(  k_{2}\right)  }\left(
            \alpha\right)  b_{n,k_{2}}^{\left(  k_{3}\right)  }\left(  \alpha\right)
            \ldots b_{n,k_{\ell-1}}^{\left(  k_{\ell}\right)  }\left(  \alpha\right)
            b_{n,k_{\ell}}^{\left(  j\right)  }\left(  \alpha\right)  }{b_{n,k_{1}%
            }^{\left(  k_{1}\right)  }\left(  \alpha\right)  b_{n,k_{2}}^{\left(
            k_{2}\right)  }\left(  \alpha\right)  \ldots b_{n,k_{\ell}}^{\left(  k_{\ell
            }\right)  }\left(  \alpha\right)  }\right)  +\frac{\varphi_{n,i}^{\left(
            j\right)  }\left(  \alpha\right)  }{b_{n,j}^{\left(  j\right)  }\left(
        \alpha\right)  },\nonumber\\ i=&~1,2,\ldots,n,~j=1,2,\ldots,\left\lfloor \tfrac{n}{2}\right\rfloor
    ,\nonumber\\
    & \nonumber\\
    t_{i,n-j}^{n}=  &  ~\varphi_{n,i}\left(  \beta\right)  -\frac{1}%
    {b_{n,n-j}^{\left(  j\right)  }\left(  \beta\right)  }\cdot\left.  \sum
    _{r=1}^{j-1}\frac{\varphi_{n,i}^{\left(  r\right)  }\left(  \beta\right)
    }{b_{n,n-r}^{\left(  r\right)  }\left(  \beta\right)  }\right(  b_{n,n-r}%
    ^{\left(  j\right)  }\left(  \beta\right)  +\label{eq:last_half}\\
    &  ~\left.  +\sum_{\ell=1}^{j-r-1}\left(  -1\right)  ^{\ell}\sum
    _{r<k_{1}<k_{2}<\ldots<k_{\ell}<j}\frac{b_{n,n-r}^{\left(  k_{1}\right)
        }\left(  \beta\right)  b_{n,n-k_{1}}^{\left(  k_{2}\right)  }\left(
        \beta\right)  b_{n,n-k_{2}}^{\left(  k_{3}\right)  }\left(  \beta\right)
        \ldots b_{n,n-k_{\ell-1}}^{\left(  k_{\ell}\right)  }\left(  \beta\right)
        b_{n,n-k_{\ell}}^{\left(  j\right)  }\left(  \beta\right)  }{b_{n,n-k_{1}%
        }^{\left(  k_{1}\right)  }\left(  \beta\right)  b_{n,n-k_{2}}^{\left(
        k_{2}\right)  }\left(  \beta\right)  \ldots b_{n,n-k_{\ell}}^{\left(  k_{\ell
        }\right)  }\left(  \beta\right)  }\right) \nonumber\\
&  ~+\frac{\varphi_{n,i}^{\left(  j\right)  }\left(  \beta\right)  }%
{b_{n,n-j}^{\left(  j\right)  }\left(  \beta\right)  },\,~i=  ~1,2,\ldots,n,~j=1,2,\ldots,\left\lfloor \tfrac{n}{2}\right\rfloor
.\nonumber
\end{align}
\end{theorem}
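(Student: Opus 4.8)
Since $\mathcal{B}_n^{\alpha,\beta}$ is a basis of $\mathbb{S}_n^{\alpha,\beta}$ and every $\varphi_{n,i}$ belongs to that space, the matrix in (\ref{eq:basis_transformation}) exists and is uniquely determined; the whole task is to compute its entries, and the plan is to do so by repeatedly differentiating the identity (\ref{eq:basis_transformation}) at the two endpoints of $[\alpha,\beta]$ and exploiting the Hermite-type endpoint behaviour (\ref{eq:endpoint_interpolation})--(\ref{eq:Hermite_conditions_alpha}) of the normalized B-basis. The cheap entries come first. Testing (\ref{eq:basis_transformation}) on $\varphi_{n,0}\equiv 1$ and using the partition of unity (\ref{eq:partition_of_unity}) together with the linear independence of $\mathcal{B}_n^{\alpha,\beta}$ forces $t_{0,j}^n=1$ for all $j$. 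Next, (\ref{eq:Hermite_conditions_0}) gives $b_{n,r}^{(\ell)}(\alpha)=0$ for $\ell<r$, so in particular $b_{n,r}(\alpha)=\delta_{r,0}$ (also using (\ref{eq:endpoint_interpolation}) for $r=0$); hence evaluating (\ref{eq:basis_transformation}) at $u=\alpha$ collapses the right-hand side to its $j=0$ term and yields $t_{i,0}^n=\varphi_{n,i}(\alpha)$, and the mirror argument at $u=\beta$, using (\ref{eq:Hermite_conditions_alpha}), gives $t_{i,n}^n=\varphi_{n,i}(\beta)$.

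For the interior columns I would first subtract $\varphi_{n,i}(\alpha)$ times the partition of unity (\ref{eq:partition_of_unity}) from (\ref{eq:basis_transformation}); using $t_{i,0}^n=\varphi_{n,i}(\alpha)$ this turns (\ref{eq:basis_transformation}) into
\[
\varphi_{n,i}(u)-\varphi_{n,i}(\alpha)=\sum_{j=1}^{n}\bigl(t_{i,j}^n-\varphi_{n,i}(\alpha)\bigr)\,b_{n,j}(u),\qquad u\in[\alpha,\beta],
\]
whose left side vanishes at $\alpha$. Differentiating this $j$ times at $u=\alpha$ and using (\ref{eq:Hermite_conditions_0}) — namely $b_{n,r}^{(j)}(\alpha)=0$ for $r>j$ and $b_{n,r}^{(r)}(\alpha)>0$ — produces, for $j=1,\dots,\lfloor n/2\rfloor$, the triangular relations
\[
\varphi_{n,i}^{(j)}(\alpha)=\sum_{r=1}^{j}\bigl(t_{i,r}^n-\varphi_{n,i}(\alpha)\bigr)\,b_{n,r}^{(j)}(\alpha),
\]
which forward substitution solves for $t_{i,j}^n-\varphi_{n,i}(\alpha)$ in terms of the analogous quantities with smaller second index. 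Closing this recursion in explicit form — i.e. writing the relevant entries of the inverse of the triangular array $[b_{n,r}^{(j)}(\alpha)]$ as an alternating sum over strictly increasing index chains $r<k_1<\dots<k_\ell<j$ — is exactly formula (\ref{eq:first_half}). Rerunning the same normalize-and-differentiate step at $u=\beta$, now with $\varphi_{n,i}(\beta)$ and the conditions (\ref{eq:Hermite_conditions_alpha}) on the functions $b_{n,n-r}$, yields (\ref{eq:last_half}). Finally, the columns produced from $\alpha$, namely $j=0,1,\dots,\lfloor n/2\rfloor$, and those produced from $\beta$, namely $j=n,n-1,\dots,n-\lfloor n/2\rfloor$, already exhaust $\{0,1,\dots,n\}$ — overlapping at worst in the single column $j=n/2$ for even $n$, where the two prescriptions agree automatically because the matrix in (\ref{eq:basis_transformation}) is unique — so the restricted index ranges in the statement lose nothing while keeping the maximal differentiation order at $\lfloor n/2\rfloor$.

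I expect the one genuinely technical point to be this last step: passing from the one-line recursion
\[
t_{i,j}^n-\varphi_{n,i}(\alpha)=\frac{1}{b_{n,j}^{(j)}(\alpha)}\Bigl(\varphi_{n,i}^{(j)}(\alpha)-\sum_{r=1}^{j-1}\bigl(t_{i,r}^n-\varphi_{n,i}(\alpha)\bigr)\,b_{n,r}^{(j)}(\alpha)\Bigr)
\]
to the closed nested-sum form (\ref{eq:first_half}). The way to do it is induction on $j$: substitute the inductive expressions for the lower-index terms and regroup the resulting double sum according to the last index before $j$; a chain $r<k_1<\dots<k_\ell<j$ contributed by the inductive step gets lengthened by one further step to $j$, which is precisely what the sign change $(-1)^\ell\to(-1)^{\ell+1}$ and the extra ratio $b_{n,k_\ell}^{(j)}(\alpha)/b_{n,k_\ell}^{(k_\ell)}(\alpha)$ record. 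This is the classical combinatorial identity for the inverse of a triangular matrix, so it introduces no new conceptual difficulty, only careful sign and index bookkeeping; the $\beta$-endpoint computation is formally identical under the substitutions $b_{n,r}(\,\cdot\,)\leftrightarrow b_{n,n-r}(\,\cdot\,)$ and $\alpha\leftrightarrow\beta$.
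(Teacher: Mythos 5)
Your proposal is correct and follows essentially the same route as the paper's (cited) proof: differentiate the identity $\varphi_{n,i}(u)=\sum_{k=0}^{n}t_{i,k}^{n}b_{n,k}(u)$ at the endpoints, use the Hermite-type conditions (\ref{eq:Hermite_conditions_0})--(\ref{eq:Hermite_conditions_alpha}) to obtain triangular systems solved by forward/backward substitution, and unroll the recursion by induction on $j$ to reach the closed alternating chain-sum forms (\ref{eq:first_half})--(\ref{eq:last_half}). Your preliminary subtraction of $\varphi_{n,i}(\alpha)$ times the partition of unity (which kills the $k=0$ term via $\sum_{k}b_{n,k}^{(j)}\equiv 0$ for $j\geq 1$) is only a cosmetic variant of the paper's recursion, which retains that term explicitly.
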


Considering lookup tables that store the zeroth and higher order endpoint derivatives of the bases $\mathcal{F}_n^{\alpha,\beta}$ and  $\mathcal{B}_n^{\alpha,\beta}$, we have also investigated the computational complexity (i.e., the number of floating point operations or flops) required for the evaluation of all entries of the general transformation matrix that appears in (\ref{eq:basis_transformation}). In \citep[Theorem 2.2, p. 45]{Roth2015b} we have shown that the aforementioned complexity is exponential, but compared with other cubic time numerical algorithms (like function/curve interpolation or least squares approximation techniques based on $LU$ decomposition), the proposed general basis transformation can more efficiently be implemented up to $16$-dimensional EC spaces despite the seemingly complicated nature of formulas (\ref{eq:first_half})--(\ref{eq:last_half}).

In the next theorem we show that there is even a significantly better way for the evaluation of the matrix of the general basis transformation.

\begin{theorem}[Efficient general basis transformation]\label{thm:efficient_basis_transformation}
    Using the notations of Theorem \ref{thm:basis_transformation}, one has that the 
    non-trivial entries of the matrix $[t_{i,j}^{n}]_{i=0,\,j=0}^{n,\,n}$ of the general basis transformation (\ref{eq:basis_transformation}) can be determined by initializing the recursive formulas
    \begin{equation}
        \label{eq:efficient_first_half}
        t_{i,j}^{n} = \frac{1}{b_{n,j}^{\left(j\right)}\left(\alpha\right)}
        \left(\varphi_{n,i}^{\left(j\right)}\left(\alpha\right)-\sum_{k=0}^{j-1} t_{i,k}^{n} b_{n,k}^{\left(j\right)}\left(\alpha\right)\right),~j=1,\ldots,\left\lfloor\frac{n}{2}\right\rfloor,
    \end{equation}
    and
    \begin{equation}
        \label{eq:efficient_last_half}
        t_{i,n-j}^{n} = \frac{1}{b_{n,n-j}^{\left(j\right)}\left(\beta\right)}
        \left(\varphi_{n,i}^{\left(j\right)}\left(\beta\right)-\sum_{k=0}^{j-1} t_{i,n-k}^{n} b_{n,n-k}^{\left(j\right)}\left(\beta\right)\right),~j=1,\ldots,\left\lfloor\frac{n-1}{2}\right\rfloor,
    \end{equation}
    with the starting elements 
    $
    \{
    t_{i,0}^{n} = \varphi_{n,i}\left(\alpha\right)
    \}_{i=1}^{n}
    $
    and
    $
    \{
    t_{i,n}^{n} = \varphi_{n,i}\left(\beta\right)\}_{i=1}^{n},
    $
    respectively, for all $i=1,\ldots,n$. Moreover, if the endpoint derivatives $\{  \varphi_{n,i}^{\left(  j\right)  }\left(
    \alpha\right)  ,\varphi_{n,i}^{\left(  j\right)  }\left(  \beta\right)
    ,b_{n,i}^{\left(  j\right)  }\left(  \alpha\right)  ,b_{n,i}^{\left(
        j\right)  }\left(  \beta\right)  \}  _{i=1,~j=0}^{n,~\left\lfloor
        \frac{n}{2}\right\rfloor }$ are stored in advance in permanent lookup tables, then
    the number of flops required by the evaluation of formulas (\ref{eq:efficient_first_half}) and (\ref{eq:efficient_last_half}) is the polynomial cost
    \begin{equation}
        \label{eq:efficient_total_computational_cost}
        \kappa_{\mathrm{pol}}\left(n\right) = 
        \left\{
        \def\arraystretch{1.5}
        \begin{array}{ll}
            0,&n=0,1,\\
            n\cdot\left\lfloor\frac{n}{2}\right\rfloor\cdot\left(\left\lfloor\frac{n}{2}\right\rfloor+5\right),&n\geq 2,~n\equiv 0 \,\left(\mathrm{mod}\,2\right),
            \\
            n \cdot \left(\left\lfloor\frac{n}{2}\right\rfloor^2 + 4 \left\lfloor\frac{n}{2}\right\rfloor - 2\right),&n\geq 3,~n\equiv 1 \,\left(\mathrm{mod}\,2\right)
        \end{array}
        \def\arraystretch{1.0}
        \right.
    \end{equation} 
    which is always strictly less than the total cost
    \begin{equation}
        \label{eq:LU_cost}
        \kappa_{LU}\left(n,\delta\right)=
        \frac{2}{3}\left(n+1\right)^3-\frac{1}{2}\left(n+1\right)^2-\frac{1}{6}\left(n+1\right)+\left(2\left(n+1\right)^2-\left(n+1\right)\right)\delta
    \end{equation}
    of another numerical $\delta$-dimensional function interpolation or least squares approximation method based on $LU$ decomposition.
\end{theorem}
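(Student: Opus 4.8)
The proof is essentially forward substitution in two triangular linear systems, and I would extract it directly from the defining identity (\ref{eq:basis_transformation}), namely $\varphi_{n,i}(u)=\sum_{m=0}^{n}t_{i,m}^{n}\,b_{n,m}(u)$ for all $u\in[\alpha,\beta]$. Fix a non-trivial row index $i\in\{1,\ldots,n\}$ and differentiate this identity $j$ times at $u=\alpha$: one gets $\varphi_{n,i}^{(j)}(\alpha)=\sum_{m=0}^{n}t_{i,m}^{n}\,b_{n,m}^{(j)}(\alpha)$. By the Hermite vanishing conditions (\ref{eq:Hermite_conditions_0}) the term $b_{n,m}^{(j)}(\alpha)$ is zero for every $m>j$, so the sum collapses to $m=0,1,\ldots,j$; isolating the last term, whose coefficient $b_{n,j}^{(j)}(\alpha)>0$ is a genuine (positive) pivot, and dividing, yields exactly the recursion (\ref{eq:efficient_first_half}). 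The mirror-image computation at $u=\beta$ uses (\ref{eq:Hermite_conditions_alpha}), which forces $b_{n,m}^{(j)}(\beta)=0$ unless $m\ge n-j$ and supplies the nonzero pivot $(-1)^{j}b_{n,n-j}^{(j)}(\beta)>0$; after the re-indexing $m\mapsto n-m$ this is precisely (\ref{eq:efficient_last_half}). The case $j=0$, together with the endpoint-interpolation identity (\ref{eq:endpoint_interpolation}) and the vanishing $b_{n,m}(\alpha)=b_{n,m}(\beta)=0$ for the appropriate indices, produces the seeds $t_{i,0}^{n}=\varphi_{n,i}(\alpha)$ and $t_{i,n}^{n}=\varphi_{n,i}(\beta)$ (the trivial first row $t_{0,m}^{n}\equiv1$ is the partition-of-unity identity (\ref{eq:partition_of_unity}) applied to $\varphi_{n,0}\equiv1$).

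The one step that requires a little care --- and it is exactly the device already exploited in Theorem \ref{thm:general_subdivision} --- is the choice of the two index ranges. Running (\ref{eq:efficient_first_half}) for $j=1,\ldots,\lfloor n/2\rfloor$ fills the columns $1,\ldots,\lfloor n/2\rfloor$, running (\ref{eq:efficient_last_half}) for $j=1,\ldots,\lfloor(n-1)/2\rfloor$ fills the columns $n-1,n-2,\ldots,n-\lfloor(n-1)/2\rfloor$, and a quick case split on the parity of $n$ (using $\lfloor n/2\rfloor+\lfloor(n-1)/2\rfloor=n-1$) shows that these two ranges together with the two seeded boundary columns cover $\{0,1,\ldots,n\}$ exactly once. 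This is also why the highest differentiation order that ever enters the computation stays at $\approx n/2$ rather than $n$, which underlies both the efficiency and the improved numerical stability relative to formulas (\ref{eq:first_half})--(\ref{eq:last_half}).

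For the flop count (\ref{eq:efficient_total_computational_cost}) I would simply tally the arithmetic in (\ref{eq:efficient_first_half})--(\ref{eq:efficient_last_half}), treating every quantity in the lookup tables $\{\varphi_{n,i}^{(j)}(\alpha),\varphi_{n,i}^{(j)}(\beta),b_{n,i}^{(j)}(\alpha),b_{n,i}^{(j)}(\beta)\}$ as free: for a fixed row $i$ and a fixed $j$, evaluating one entry costs a fixed number of multiplications and additions in the inner sum, one subtraction, and one division; summing this over $j$ in the two ranges above and over the $n$ non-trivial rows $i=1,\ldots,n$, and then separating the two parities of $n$, collapses to the three-line closed form $\kappa_{\mathrm{pol}}(n)$. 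This part is bookkeeping rather than mathematics.

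Finally, $\kappa_{\mathrm{pol}}(n)<\kappa_{LU}(n,\delta)$ is an elementary polynomial estimate. In both parities one has $\kappa_{\mathrm{pol}}(n)=\tfrac14 n^{3}+O(n^{2})$, whereas the $\delta$-independent part of $\kappa_{LU}$ already equals $\tfrac23(n+1)^{3}-\tfrac12(n+1)^{2}-\tfrac16(n+1)=\tfrac23 n^{3}+O(n^{2})$ and the remaining term $\bigl(2(n+1)^{2}-(n+1)\bigr)\delta$ is strictly positive for $\delta\ge1$; since $\tfrac14<\tfrac23$, the difference $\kappa_{LU}-\kappa_{\mathrm{pol}}$ is of order $n^{3}$ and positive for all large $n$, and the finitely many small values are verified by direct substitution ($\kappa_{\mathrm{pol}}(0)=\kappa_{\mathrm{pol}}(1)=0$, while already $\kappa_{\mathrm{pol}}(2)=12<13\le\kappa_{LU}(2,\delta)$). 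I do not foresee a genuine obstacle: the only mildly delicate point is getting the operation count exactly right so as to reproduce the parity-split expression (\ref{eq:efficient_total_computational_cost}).
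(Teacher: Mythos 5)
Your proposal is correct and follows essentially the same route as the paper: the paper's proof first appeals to the forward/backward-substitution induction steps from the proof of Theorem~\ref{thm:basis_transformation} in \citep{Roth2015b}, but then gives, as an explicit ``simple way to verify'' the recursions, exactly your argument --- differentiate $\varphi_{n,i}(u)=\sum_{k=0}^{n}t_{i,k}^{n}b_{n,k}(u)$ at $u=\alpha$ and $u=\beta$, collapse the sum via the Hermite endpoint conditions (\ref{eq:Hermite_conditions_0})--(\ref{eq:Hermite_conditions_alpha}), and divide by the positive pivot --- and it likewise obtains the cost by summing a per-entry count over the two index ranges and over the $n$ non-trivial rows, with the comparison to $\kappa_{LU}$ settled by the same leading-order estimate (the paper records the limit ratio $3/8$, matching your $\tfrac14 n^{3}$ versus $\tfrac23 n^{3}$). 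The only part you defer, the exact tally behind (\ref{eq:efficient_total_computational_cost}), is precisely where care is needed: the paper's own intermediate expression $n\bigl(\sum_{j=1}^{\lfloor n/2\rfloor}(j+2)+\sum_{j=1}^{\lfloor (n-1)/2\rfloor}(j+2)\bigr)$ simplifies to $n\lfloor\tfrac{n}{2}\rfloor(\lfloor\tfrac{n}{2}\rfloor+5)$ for \emph{odd} $n$ and to $n(\lfloor\tfrac{n}{2}\rfloor^{2}+4\lfloor\tfrac{n}{2}\rfloor-2)$ for \emph{even} $n$, i.e.\ the two parity cases in the stated closed form appear to be interchanged, so carrying out the bookkeeping would actually be worthwhile.
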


\begin{proof}
    The correctness of formulas (\ref{eq:efficient_first_half}) and (\ref{eq:efficient_last_half}) are immediate due to the proof of Theorem \ref{thm:basis_transformation} that can be found in \citep[pp. 52--54]{Roth2015b}, where we have used mathematical induction. Formulas (\ref{eq:efficient_first_half}) and (\ref{eq:efficient_last_half}) correspond in fact to induction steps based on forward and backward substitutions, the correctness of which were already proved. Final formulas (\ref{eq:first_half}) and (\ref{eq:last_half}) only give the closed expressions of the patterns that appear after performing all required forward of backward substitutions. Another simple way to verify formulas (\ref{eq:efficient_first_half}) and (\ref{eq:efficient_last_half}) is to differentiate the functional equalities
    \[
    \varphi_{n,i}\left(u\right) = \sum_{k=0}^{n} t_{i,k}^n b_{n,k}\left(u\right),~\forall u \in \left[\alpha,\beta\right],~i=1,\ldots,n
    \]
    for all orders $j=1,\ldots,\left\lfloor \frac{n}{2} \right\rfloor$ at the parameter values $u=\alpha$ and $u=\beta$, respectively, and to apply one of the corresponding endpoint conditions (\ref{eq:Hermite_conditions_0})--(\ref{eq:Hermite_conditions_alpha}). For example, at $u=\alpha$ one has that
    \[
    \varphi^{\left(j\right)}_{n,i}\left(\alpha\right)
    =
    \sum_{k=0}^{n} t_{i,k}^n b_{n,k}^{\left(j\right)}\left(\alpha\right)
    \overset{(\ref{eq:Hermite_conditions_0})}{=}
    \sum_{k=0}^{j} t_{i,k}^n b_{n,k}^{\left(j\right)}\left(\alpha\right)
    =
    \sum_{k=0}^{j-1} t_{i,k}^n b_{n,k}^{\left(j\right)}\left(\alpha\right)
    +
    t_{i,j}^nb_{n,j}^{\left(j\right)}\left(\alpha\right),
    \]
    where $b_{n,j}^{\left(j\right)}\left(\alpha\right)>0$. Therefore the entry $t_{i,j}^n$ can be obtained by subtraction and division. 
    
    Assuming that the endpoint derivatives $\{  \varphi_{n,i}^{\left(  j\right)  }\left(
    \alpha\right)  ,\varphi_{n,i}^{\left(  j\right)  }\left(  \beta\right)
    ,b_{n,i}^{\left(  j\right)  }\left(  \alpha\right)  ,b_{n,i}^{\left(
        j\right)  }\left(  \beta\right)  \}  _{i=1,~j=0}^{n,~\left\lfloor
        \frac{n}{2}\right\rfloor }$ are stored in advance in lookup tables, the polynomial computational cost (\ref{eq:efficient_total_computational_cost})
    follows from the simplification of the expression
    \[
    n \cdot 
    \left(
    \sum_{j=1}^{\left\lfloor\frac{n}{2}\right\rfloor}\left(j+2\right)
    +
    \sum_{j=1}^{\left\lfloor\frac{n-1}{2}\right\rfloor}\left(j+2\right)
    \right),
    \]
    where the first and second summations give the number of flops required by the evaluation of the formulas (\ref{eq:efficient_first_half}) and (\ref{eq:efficient_last_half}), respectively, while the leading scaling factor $n$ denotes the number of empty non-trivial rows that have to be calculated. At the same time, one can easily prove that  $\kappa_{\mathrm{pol}}\left(n\right) < \kappa_{LU}\left(n,\delta\right),~\forall n \geq 0, \delta \geq 1$ and
    \[
    \lim_{n \to \infty} \frac{\kappa_{\mathrm{pol}}\left(n\right)}{\kappa_{LU}\left(n,\delta\right)}=\frac{3}{8}.\qedhere
    \]
\end{proof}

Using \citep[Corollary 2.1, p. 43]{Roth2015b}, one can also provide ready to use control point configurations for the exact description of those traditional integral parametric curves and (hybrid) surfaces that are specified by coordinate functions given as (products of separable) linear combinations of ordinary basis functions. Namely, by means of general basis transformations, one can implement the control point determining formulas (\ref{eq:cpbed_ordinary_curves}) and (\ref{eq:cpbed_ordinary_surfaces}) of the next two theorems.

\begin{theorem}[Exact description of ordinary integral curves, \citep{Roth2015b}]\label{thm:integral_curves}
    Using B-curves of the type (\ref{eq:B_curve}), the ordinary integral curve
    \begin{equation}
        \mathbf{c}\left(u\right) = \sum_{i=0}^{n} \boldsymbol{\lambda}_i \varphi_{n,i}\left(u\right), ~u\in\left[\alpha,\beta\right],~0<\beta-\alpha<\ell_{n}^{\prime},~\boldsymbol{\lambda}_i \in \mathbb{R}^{\delta},~\delta \geq 2
        \label{eq:ordinary_integral_curve}
    \end{equation}
    fulfills the identity
    \[
    \mathbf{c}\left(u\right)\equiv \mathbf{c}_n\left(u\right) = \sum_{j=0}^n \mathbf{p}_j b_{n,j}\left(u\right),~\forall u \in \left[\alpha,\beta\right],
    \]
    where
    \begin{equation}
        \label{eq:cpbed_ordinary_curves}
        \left[
        \begin{array}{cccc}
            \mathbf{p}_0 & \mathbf{p}_1 & \cdots & \mathbf{p}_n
        \end{array}
        \right]
        =
        \left[
        \begin{array}{cccc}
            \boldsymbol{\lambda}_0 & \boldsymbol{\lambda}_1 & \cdots & \boldsymbol{\lambda}_n
        \end{array}
        \right]
        \cdot
        \left[
        t_{i,j}^n
        \right]_{i=0,\,j=0}^{n,\,n}.
    \end{equation}
\end{theorem}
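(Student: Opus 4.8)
The plan is to deduce the statement directly from the general basis transformation of Theorem~\ref{thm:basis_transformation} by a single substitution, with no new analytic input. First I would organize the coefficient vectors of the ordinary integral curve \eqref{eq:ordinary_integral_curve} into the $\delta\times\left(n+1\right)$ matrix $\Lambda:=\left[\boldsymbol{\lambda}_0~\boldsymbol{\lambda}_1~\cdots~\boldsymbol{\lambda}_n\right]$ (the $\boldsymbol{\lambda}_i\in\mathbb{R}^{\delta}$ being its columns) and rewrite the curve in matrix form as $\mathbf{c}\left(u\right)=\Lambda\cdot\left[\varphi_{n,i}\left(u\right)\right]_{i=0}^{n}$ for every $u\in\left[\alpha,\beta\right]$.

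Next, since by hypothesis $0<\beta-\alpha<\ell_n^{\prime}$, the space $\mathbb{S}_n^{\alpha,\beta}$ possesses its unique normalized B-basis $\mathcal{B}_n^{\alpha,\beta}$, so Theorem~\ref{thm:basis_transformation} applies and yields the functional identity $\left[\varphi_{n,i}\left(u\right)\right]_{i=0}^{n}=\left[t_{i,j}^{n}\right]_{i=0,\,j=0}^{n,\,n}\cdot\left[b_{n,i}\left(u\right)\right]_{i=0}^{n}$ on all of $\left[\alpha,\beta\right]$, cf.~\eqref{eq:basis_transformation}. Substituting this into the expression of $\mathbf{c}$ gives
\[
\mathbf{c}\left(u\right)=\Lambda\cdot\left[t_{i,j}^{n}\right]_{i=0,\,j=0}^{n,\,n}\cdot\left[b_{n,i}\left(u\right)\right]_{i=0}^{n},\qquad\forall u\in\left[\alpha,\beta\right].
\]
Defining $\left[\mathbf{p}_0~\mathbf{p}_1~\cdots~\mathbf{p}_n\right]:=\Lambda\cdot\left[t_{i,j}^{n}\right]_{i=0,\,j=0}^{n,\,n}$ — which is precisely \eqref{eq:cpbed_ordinary_curves} — the right-hand side collapses to $\sum_{j=0}^{n}\mathbf{p}_j b_{n,j}\left(u\right)=\mathbf{c}_n\left(u\right)$, so that $\mathbf{c}\equiv\mathbf{c}_n$ on $\left[\alpha,\beta\right]$, as claimed.

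I do not expect a genuine obstacle here; the only points deserving care are bookkeeping ones. One must keep straight that the $\boldsymbol{\lambda}_i$ and $\mathbf{p}_j$ are $\delta$-vectors while $\left[t_{i,j}^{n}\right]$ is the scalar $\left(n+1\right)\times\left(n+1\right)$ matrix of Theorem~\ref{thm:basis_transformation}, so the product in \eqref{eq:cpbed_ordinary_curves} is the ordinary $\delta\times\left(n+1\right)$ times $\left(n+1\right)\times\left(n+1\right)$ matrix product; and one should note that $\mathbf{c}\equiv\mathbf{c}_n$ is obtained by substituting a functional identity valid for every $u\in\left[\alpha,\beta\right]$, so no appeal to the linear independence of $\mathcal{B}_n^{\alpha,\beta}$ is needed in this direction. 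It is worth remarking that this recovers \citep[Corollary~2.1]{Roth2015b}, and that, as an independent check, one could alternatively verify the formula by differentiating $\mathbf{c}$ up to order $\left\lfloor n/2\right\rfloor$ at $u=\alpha$ and at $u=\beta$ and matching the resulting Hermite data against the endpoint conditions \eqref{eq:Hermite_conditions_0}--\eqref{eq:Hermite_conditions_alpha}, i.e.\ by running the recursions \eqref{eq:efficient_first_half}--\eqref{eq:efficient_last_half} of Theorem~\ref{thm:efficient_basis_transformation} coordinatewise; in practice the entries $t_{i,j}^{n}$ appearing in \eqref{eq:cpbed_ordinary_curves} would of course be computed through those recursions rather than through the closed forms \eqref{eq:first_half}--\eqref{eq:last_half}.
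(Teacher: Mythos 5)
Your proof is correct and follows exactly the route the paper intends: the theorem is presented as an immediate consequence of the general basis transformation of Theorem~\ref{thm:basis_transformation} (``by means of general basis transformations, one can implement the control point determining formulas~(\ref{eq:cpbed_ordinary_curves})''), and your substitution of the identity~(\ref{eq:basis_transformation}) into the matrix form of~(\ref{eq:ordinary_integral_curve}) is precisely that argument. The bookkeeping remarks about matrix dimensions and the fact that no linear-independence appeal is needed are accurate; nothing is missing.
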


\begin{theorem}
    [Exact description of ordinary integral surfaces \textnormal{-- extension of Theorem \ref{thm:integral_curves}}]\label{thm:integral_surfaces}%
    Let
    \[
    \mathcal{F}_{n_{r}}^{\alpha_{r},\beta_{r}}=\left\{  \varphi_{n_{r},i_{r}}\left(  u_{r}\right)  :u_{r}\in\left[
    \alpha_{r},\beta_{r}\right]  \right\}  _{i_{r}=0}^{n_{r}},~\varphi_{n_{r},0} \equiv 1,~0<\beta_r - \alpha_r < \ell^{\prime}\left(\mathbb{S}_{n_r}^{\alpha_r,\beta_r}
    \right)
    \]
    be the ordinary basis and
    $$
    \mathcal{B}_{n_{r}}^{\alpha_{r},\beta_{r}}=\left\{  b_{n_{r},j_{r}}\left(
    u_{r}\right)  :u_{r}\in\left[  \alpha_{r},\beta_{r}\right]  \right\}
    _{j_{r}=0}^{n_{r}}
    $$
    be the normalized B-basis of some EC vector space $\mathbb{S}%
    _{n_{r}}^{\alpha_{r},\beta_{r}}$ of functions and denote by $[  t_{i_{r},j_{r}}^{n_{r}}]  _{i_{r}=0,~j_{r}=0}%
    ^{n_{r},~n_{r}}$ the regular square matrix that transforms $\mathcal{B}%
    _{n_{r}}^{\alpha_{r},\beta_{r}}$ to $\mathcal{F}_{n_{r}}^{\alpha_{r},\beta
        _{r}}$, where $r=0,1$. Consider also the ordinary integral surface%
    \begin{equation}
        \mathbf{s}\left(  u_0, u_1\right)  =\left[
        \begin{array}
            [c]{ccc}%
            s^{0}\left(  u_0, u_1\right)   & s^{1}\left(  u_0, u_1\right)   &
            s^{2}\left(  u_0, u_1\right)
        \end{array}
        \right]  ^{T}\in%
        \mathbb{R}
        ^{3},~\left(u_0, u_1\right)\in\left[  \alpha_{0}%
        ,\beta_{0}\right]  \times\left[  \alpha_{1},\beta_{1}\right],
        \label{eq:ordinary_integral_surface}%
    \end{equation}
    where%
    \begin{equation*}
        s^{\ell}\left(  u_0, u_1\right)  =\sum_{\zeta=0}^{\sigma_{\ell}-1}\prod_{r=0}%
        ^{1}\left(  \sum_{i_{r}=0}^{n_{r}}\lambda_{n_r,i_{r}}^{\ell,\zeta}\varphi
        _{n_{r},i_{r}}\left(  u_{r}\right)  \right)  ,~\sigma_{\ell}\geq 1, ~\ell
        =0,1,2. 
    \end{equation*}
    Then, by using B-surfaces of the type (\ref{eq:B-surface}), the ordinary surface
    (\ref{eq:ordinary_integral_surface}) fulfills the identity
    \begin{equation*}
        \mathbf{s}\left(  u_0, u_1\right) \equiv \mathbf{s}_{n_0,n_1}\left(u_0, u_1\right) =\sum_{j_{0}=0}^{n_{0}}\sum
        _{j_{1}=0}^{n_{1}}\mathbf{p}_{j_{0},j_{1}}b_{n_{0},j_{0}}\left(  u_{0}\right)
        b_{n_{1},j_{1}}\left(  u_{1}\right)  ,~\forall\left(u_0,u_1\right)\in\left[  \alpha_{0},\beta_{0}\right]  \times\left[  \alpha
        _{1},\beta_{1}\right], %
    \end{equation*}
    where the control points $\mathbf{p}_{j_{0},j_{1}}=[
    p_{j_{0},j_{1}}^{\ell}]  _{\ell=0}^{2}\in%
    \mathbb{R}
    ^{3}$ are defined by the coordinates%
    \begin{equation}
        p_{j_{0},j_{1}}^{\ell}=\sum_{\zeta=0}^{\sigma_{\ell}-1}\prod_{r=0}^{1}\left(\sum_{i_{r}=0}^{n_{r}}\lambda_{n_r,i_{r}%
        }^{\ell,\zeta}t_{i_{r},j_{r}}^{n_{r}}\right),~\ell
        =0,1,2.\label{eq:cpbed_ordinary_surfaces}%
    \end{equation}
    
\end{theorem}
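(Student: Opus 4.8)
The plan is to reduce this bivariate, tensor‑product statement to two independent applications of the univariate general basis transformation of Theorem~\ref{thm:basis_transformation} (equivalently, to its curve‑level specialization Theorem~\ref{thm:integral_curves}), followed by a purely algebraic rearrangement of finite sums. Since the hypotheses guarantee $0<\beta_r-\alpha_r<\ell^{\prime}(\mathbb{S}_{n_r}^{\alpha_r,\beta_r})$ for $r=0,1$, each EC space $\mathbb{S}_{n_r}^{\alpha_r,\beta_r}$ possesses a unique normalized B‑basis and the regular square matrix $[t^{n_r}_{i_r,j_r}]_{i_r=0,\,j_r=0}^{n_r,\,n_r}$ of Theorem~\ref{thm:basis_transformation} is available, so that $\varphi_{n_r,i_r}(u_r)=\sum_{j_r=0}^{n_r}t^{n_r}_{i_r,j_r}b_{n_r,j_r}(u_r)$ for every $u_r\in[\alpha_r,\beta_r]$.

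First I would substitute this expansion into each inner univariate factor of $s^{\ell}$, obtaining for $r=0,1$ the identity
\[
\sum_{i_r=0}^{n_r}\lambda^{\ell,\zeta}_{n_r,i_r}\varphi_{n_r,i_r}\left(u_r\right)
=\sum_{j_r=0}^{n_r}\Bigl(\sum_{i_r=0}^{n_r}\lambda^{\ell,\zeta}_{n_r,i_r}\,t^{n_r}_{i_r,j_r}\Bigr)\,b_{n_r,j_r}\left(u_r\right),
\]
where the interchange of the two finite summations is legitimate. Multiplying the $r=0$ and $r=1$ factors and expanding the product of the two single sums into the double sum over $\left(j_0,j_1\right)$ yields, for each fixed $\zeta$, a representation of $\prod_{r=0}^{1}\bigl(\sum_{i_r}\lambda^{\ell,\zeta}_{n_r,i_r}\varphi_{n_r,i_r}(u_r)\bigr)$ in the tensor‑product B‑basis $\left\{b_{n_0,j_0}(u_0)\,b_{n_1,j_1}(u_1)\right\}$ with coefficients $\prod_{r=0}^{1}\sum_{i_r=0}^{n_r}\lambda^{\ell,\zeta}_{n_r,i_r}t^{n_r}_{i_r,j_r}$.

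Next I would sum these identities over $\zeta=0,\ldots,\sigma_{\ell}-1$ and swap the (finite) $\zeta$‑summation with the $\left(j_0,j_1\right)$‑summation; collecting the coefficient of $b_{n_0,j_0}(u_0)b_{n_1,j_1}(u_1)$ gives precisely $p^{\ell}_{j_0,j_1}$ as defined in (\ref{eq:cpbed_ordinary_surfaces}). Carrying this out for $\ell=0,1,2$ and stacking the three scalar identities into a single vector identity produces $\mathbf{s}(u_0,u_1)\equiv\mathbf{s}_{n_0,n_1}(u_0,u_1)=\sum_{j_0=0}^{n_0}\sum_{j_1=0}^{n_1}\mathbf{p}_{j_0,j_1}b_{n_0,j_0}(u_0)b_{n_1,j_1}(u_1)$ on $[\alpha_0,\beta_0]\times[\alpha_1,\beta_1]$, which is the assertion; alternatively, one may invoke Theorem~\ref{thm:integral_curves} twice, once in each variable, treating the other variable's linear combination as a fixed vector of coefficients.

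There is no genuinely hard step here: the statement is an essentially formal consequence of Theorem~\ref{thm:basis_transformation} (or Theorem~\ref{thm:integral_curves}) together with the bilinearity of the tensor product. The only points that require a line of care are (i) checking that the hypotheses $0<\beta_r-\alpha_r<\ell^{\prime}(\mathbb{S}_{n_r}^{\alpha_r,\beta_r})$ are exactly what guarantees that the transformation matrices $[t^{n_r}_{i_r,j_r}]$ are well defined and regular, and (ii) keeping the bookkeeping of the four nested finite sums over $i_0,i_1,j_0,j_1$ together with the outer $\zeta$‑summation straight, so that the claimed closed form (\ref{eq:cpbed_ordinary_surfaces}) for $p^{\ell}_{j_0,j_1}$ drops out after the interchanges.
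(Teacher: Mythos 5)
Your proposal is correct and follows exactly the route the paper intends: the theorem is stated as an immediate extension of Theorem~\ref{thm:integral_curves}, obtained by substituting the basis transformation $\varphi_{n_r,i_r}\left(u_r\right)=\sum_{j_r=0}^{n_r}t^{n_r}_{i_r,j_r}b_{n_r,j_r}\left(u_r\right)$ of Theorem~\ref{thm:basis_transformation} into each univariate factor and rearranging the finite sums, which yields (\ref{eq:cpbed_ordinary_surfaces}) directly. The bookkeeping in your argument is accurate, so nothing further is needed.
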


Using B-curves/surfaces of the type (\ref{eq:B_curve})/(\ref{eq:B-surface}) and applying formulas (\ref{eq:cpbed_ordinary_curves})/(\ref{eq:cpbed_ordinary_surfaces}), the proposed basis transformation can be used for the control point based exact description (or B-representation) of large families of integral or rational ordinary curves/surfaces that may be important in several areas of applied mathematics, since the investigated large class of EC vector spaces also comprise functions that appear in the traditional (or ordinary) parametric description of famous geometric objects like:
ellipses; epi- and hypocycloids; epi- and hypotrochoids; Lissajous curves; torus knots; foliums; rose curves; the witch of Agnesi; the cissoid of Diocles; Bernoulli's lemniscate;
Zhukovsky airfoil profiles; cycloids; hyperbolas; helices; catenaries; Archimedean and logarithmic spirals;
ellipsoids; tori; hyperboloids; catenoids; helicoids; ring, horn and spindle
Dupin cyclides; non-orientable surfaces such as Boy's and Steiner's surfaces
and the Klein Bottle of Gray. 

Figs.\ \ref{fig:order_elevated_snails}(\textit{a}) and \ref{fig:subdivided_snail}(\textit{b}) have already illustrated control point configurations for the B-representation of a single patch of the ordinary exponential-trigonometric integral surface (\ref{eq:snail}). In case of Fig.\ \ref{fig:exact_description_snail} we have described the entire surface (\ref{eq:snail}) with B-patches of the same order but with varying shape parameters.

\begin{figure}[!htb]
    \centering
    \includegraphics[]{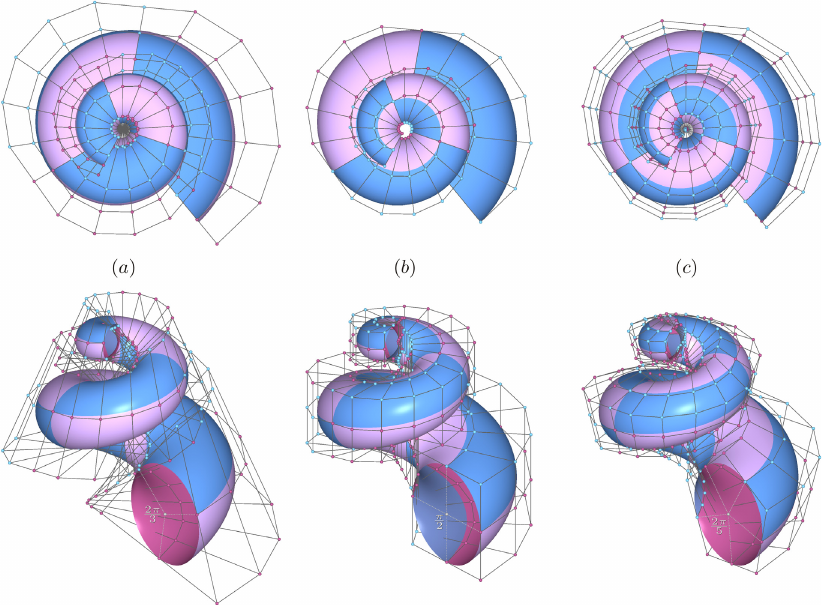}
    \caption{Different B-representations of the ordinary exponential-trigonometric integral surface (\ref{eq:snail}). Each patch is a B-surface of order $(n_0 = 6, n_1 = 2)$ that is described by means of the normalized B-bases of the EC spaces $\mathbb{ET}_{6}^{\alpha_0, \beta_0}$ and $\mathbb{T}_{2}^{\alpha_1, \beta_1}$ introduced in Example \ref{exmp:snail}, where the definition domain $\left[\alpha_0,\beta_0\right]\times\left[\alpha_1,\beta_1\right]$ corresponds to pairwise disjunct regions of $\left[\frac{7\pi}{2},\frac{49\pi}{8}\right]\times\left[-\frac{\pi}{3},\frac{5\pi}{3}\right]$. The lengths $\beta_0-\alpha_0 >0$ and $\beta_1 - \alpha_1>0$ of the varying definition domains can be considered as shape parameters. For example, in cases (\textit{a}), (\textit{b}) and (\textit{c}) the length $\beta_0-\alpha_0=\frac{29\pi}{40}$ is fixed, but the length $\beta_1-\alpha_1$ coincides with the values $\frac{2\pi}{3}$, $\frac{\pi}{2}$ and $\frac{2\pi}{5}$, respectively. (All images were rendered by means of the proposed function library.)}
    \label{fig:exact_description_snail}
\end{figure}


\section{Implementation details}\label{sec:implementation}

Our library assumes that the user has a multi-core CPU and also a GPU that is compatible at least with the desktop variant of OpenGL 3.0. In order to render the geometry, we use vertex buffer objects through the OpenGL Extension Wrangler (GLEW) library\footnote{M. Ikits, M. Magallon, and N. Stewart. 2017. GLEW: The OpenGL Extension Wrangler Library (release version 2.1.0). Retrieved July 31, 2017 from \url{http://glew.sourceforge.net/}} and for multi-threading we rely on a C++ compiler that supports at least OpenMP 2.0. Apart from GLEW no other external dependencies are used.

The entire implementation of the proposed function library is explained in the exhaustively commented listings and usage examples of Chapters \mref{2}{13--267} and \mref{3}{269--335} of the user manual \citep{Roth2018b} that is included in the supplementary material of the manuscript\footnote{Cross references of the forms \mref{{\relsize{+1}$x$}}{$y$} and \mref{{\relsize{+1}$x$}}{$y$--$z$} show that the referenced object {\relsize{+1}$x$} can be found either on the page $y$ or on pages $y$--$z$ of the user manual \citep{Roth2018b}.}.  The tree-view of the header files that can be included from our library is illustrated in Fig.\ \ref{fig:function_library_structure}. 

\begin{figure}[!htb]
    \centering
    \includegraphics[]{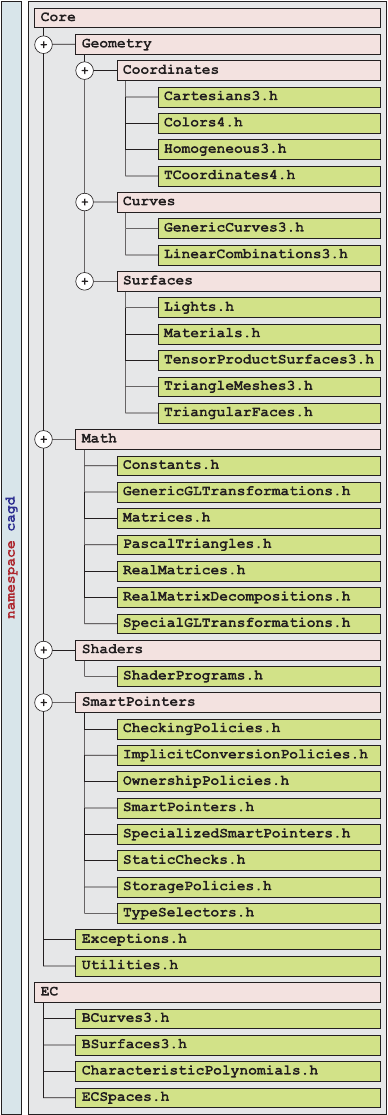}
    \caption{Tree-view of the header files of the proposed function library}
    \label{fig:function_library_structure}
\end{figure}

In its current state our library provides two main packages. The first of these is called \CBlue{\textbf{Core}} and consists of data types that represent:
\begin{itemize}
    \item
    exceptions (\CBlue{Exception});
    \item
    Cartesian (\CBlue{Cartesian3}), homogeneous (\CBlue{Homogeneous3}) and texture coordinates (\CBlue{TCoordinate4});
    \item
    color components (\CBlue{Color4}), different types of lights (\CBlue{DirectionalLight}, \CBlue{PointLight}, \CBlue{Spotlight}) and materials (\CBlue{Material});
    \item
    mathematical constants, generic rectangular (\CBlue{Matrix}$<$\CPurple{T}$>$, \CBlue{RowMatrix}$<$\CPurple{T}$>$, \CBlue{ColumnMatrix}$<$\CPurple{T}$>$) or triangular template matrices (\CBlue{TriangularMatrix}$<$\CPurple{T}$>$), real matrices (\CBlue{RealMatrix}: \CRed{public} \CBlue{Matrix}$<$\CRed{double}$>$), some real matrix decompositions (\CBlue{PLUDecomposition}, \CBlue{Facto\-rized\-Unpivoted\-LU\-De\-com\-po\-si\-tion}, \CBlue{SVDecomposition}), generic and derived OpenGL transformations (\CBlue{GLTransformation}, \CBlue{Translate}, \CBlue{Scale}, \CBlue{Rotate}, \CBlue{PerspectiveProjection}, \CBlue{OrthogonalProjection}, \CBlue{LookAt}) and Pascal triangles of binomial coefficients (\CBlue{PascalTriangle}: \CRed{public} \CBlue{TriangularMatrix}$<$\CRed{double}$>$);
    \item
    generic and specialized smart pointers (\CBlue{SmartPointer}$<$\CPurple{T},\CPurple{TSP},\CPurple{TOP},\CPurple{TICP},\CPurple{TCP}$>$, \CBlue{SP}$<$\CPurple{T}$>$::\CBlue{De\-faultPrimitive}, \CBlue{SP}$<$\CPurple{T}$>$::\CBlue{De\-fault}, \CBlue{SP}$<$\CPurple{T}$>$::\CBlue{Array}, \CBlue{SP}$<$\CPurple{T}$>$::\CBlue{DestructiveCopy}, \CBlue{SP}$<$\CPurple{T}$>$::\CBlue{NonIn\-tru\-siveReferenceCounting}) that provide different storage, ownership, implicit conversion and checking policies (\CBlue{StoragePolicy}$<$\CPurple{T}$>$::\CBlue{Default}, \CBlue{StoragePolicy}$<$\CPurple{T}$>$::\CBlue{Array}, \CBlue{OwnershipPolicy}$<$\CPurple{T}$>$, \CBlue{ImplicitConversionPolicy}, \CBlue{CheckingPolicy}$<$\CPurple{T}$>$::\CBlue{No\-Check},\break{} \CBlue{CheckingPolicy}$<$\CPurple{T}$>$::\CBlue{RejectNullDereferenceOr\-Indirection}, \CBlue{CheckingPol\-icy}$<$\CPurple{T}$>$::\CBlue{RejectNull},   \CBlue{Checking\-Pol\-icy}$<$\break{}\CPurple{T}$>$::\CBlue{AssertNullDereferenceOr\-Indirection}, \CBlue{CheckingPolicy}$<$\CPurple{T}$>$::\CBlue{AssertNull}) in order to avoid memory leaks and to ensure exception safety (one of the most frequently used smart pointers will be the specialized variant \CBlue{SP}$<$\CPurple{T}$>$::\CBlue{Default} that ensures default storage and deep copy policies, disallows implicit conversion and rejects null dereference or indirection);
    \item
    generic curves (\CBlue{GenericCurve3}) and abstract linear combinations (\CBlue{LinearCombination3});
    \item
    triangular faces (\CBlue{TriangularFace}), simple triangle meshes (\CBlue{TriangleMesh3}) and abstract tensor product surfaces (\CBlue{TensorProductSurface3});
    \item
    shader programs\footnote{For convenience we have also provided shader programs for simple (flat) color shading, for two-sided per pixel lighting that is able to handle user-defined directional, point and spotlights with uniform front and back materials, and another one for reflection lines that are combined with two-sided per pixel lighting. All figures of the current manuscript and of the user manual \citep{Roth2018b} were rendered by using these shader programs. 
    } (\CBlue{ShaderProgram}) written in the OpenGL Shading Language and used for rendering geometries (like control polygons and nets, or generic curves and triangle meshes obtained, e.g.,  as the images of linear combinations and tensor product surfaces, respectively).
\end{itemize}

The previously listed classes serve the definition, implementation and testing of the following data types that realize our main objectives and are included in the second main package called \CBlue{\textbf{EC}}:

\begin{itemize}
    \item
    the class \CBlue{CharacteristicPolynomial} ensures the factorization management and evaluation of characteristic polynomials of type (\ref{eq:characteristic_polynomial});
    
    \item
    EC spaces that comprise the constants and can be identified with the solution spaces of differential equations of type (\ref{eq:differential_equation}) will be instances of the class \CBlue{ECSpace};
    
    \item
    B-curves of type (\ref{eq:B_curve}) are represented by the class \CBlue{BCurve3} that is derived from the abstract base class \CBlue{LinearCombination3} and is based on the results of Corollary \ref{cor:B_basis_derivatives}, of  Lemma \ref{lem:general_order_elevation}, of Theorems \ref{thm:general_subdivision},  \ref{thm:efficient_basis_transformation} and \ref{thm:integral_curves};
    
    \item
    B-surfaces of type (\ref{eq:B-surface}) are represented by the class \CBlue{BSurface3} that is a descendant of the abstract base class \CBlue{TensorProductSurface3} and is based on Theorem \ref{thm:integral_surfaces} and on the natural extensions of Corollary \ref{cor:B_basis_derivatives}, of Lemma \ref{lem:general_order_elevation}, and of Theorem \ref{thm:general_subdivision}.
\end{itemize}

In what follows, we briefly detail the most important data types of the packages \CBlue{\textbf{Core}} and \CBlue{\textbf{EC}}. 

\subsection{Generic curves}
In order to store in vertex buffer objects the points and higher order derivatives of arbitrary smooth parametric (basis) functions, (B-)curves and isoparametric lines of (B-)surfaces, we introduce a class for generic curves (\CBlue{GenericCurve3}) that will be used for rendering purposes. Its diagram is illustrated in Fig.\ \ref{fig:UMLGenericCurve3}. Apart from vertex buffer object handling methods the class also provides  overloaded function operators that can be used for reading or writing the derivatives associated with a curve point. Moreover, the class also provides a method by means of which one can generate Matlab source codes to plot the curve points and to create scalable vector graphic formats (like EPS). The declaration and the full implementation of the class can be found in Listings \mref{2.37}{157--159} and \mref{2.38}{160--169} of \citep{Roth2018b}, respectively.

\begin{figure}[!h]
    
    \centering
    \includegraphics[scale = 0.990]{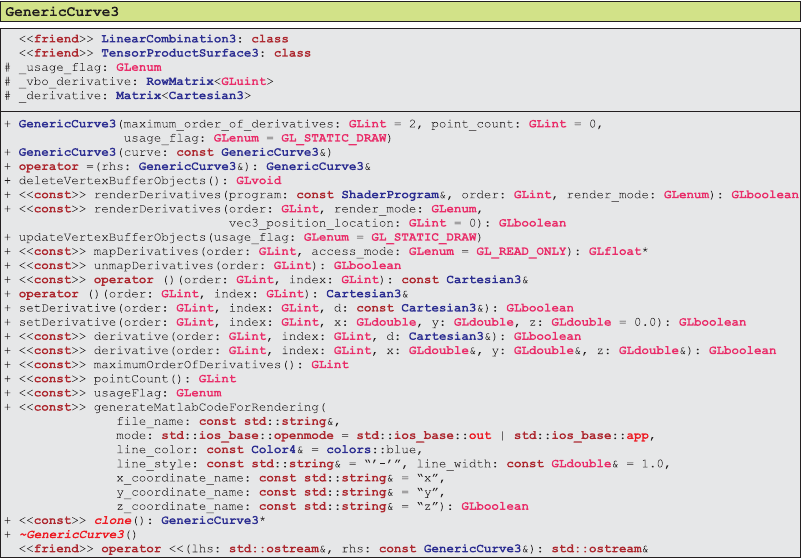}
    \caption{Class diagram of generic curves}
    \label{fig:UMLGenericCurve3}
\end{figure}

\subsection{Simple triangle meshes}

We also provide a class for simple triangle meshes (\CBlue{TriangleMesh3}), by means of which one can store in vertex buffer objects the attributes (i.e., position, normal and texture coordinates, color components and connectivity information) of vertices that form the triangular faces of the mesh. The class is also able to load triangulated object file formats and to either map or unmap vertex buffer objects associated with the aforementioned attributes. Its diagram is illustrated in Fig.\ \ref{fig:UMLTriangleMesh3}, while its definition and full implementation can be found in Listings \mref{2.40}{171--173} and \mref{2.41}{173--185} of \citep{Roth2018b}, respectively. 

\begin{figure}[!h]
    \centering
    \includegraphics[scale = 0.990]{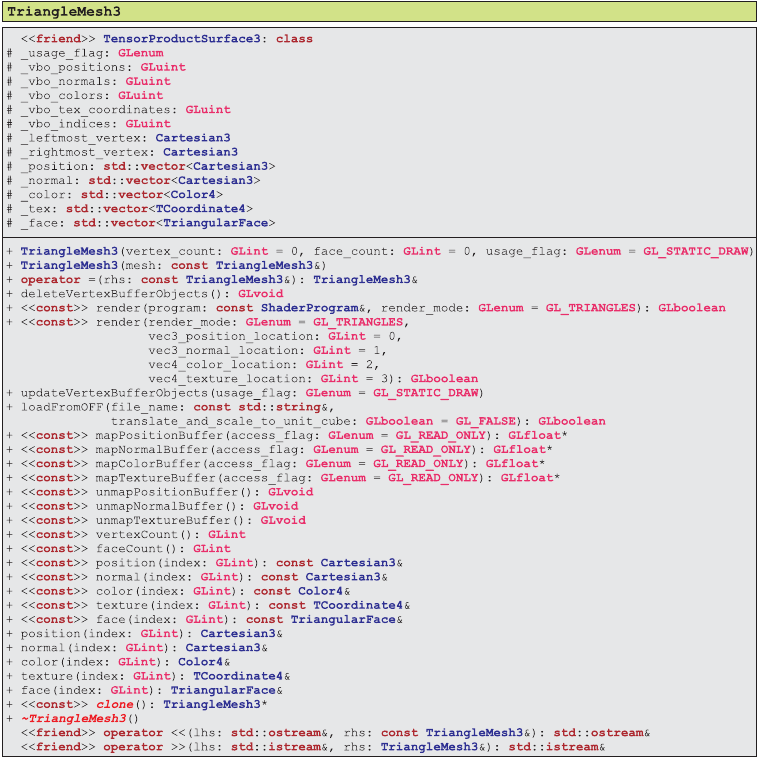}
    \caption{Class diagram of simple triangle meshes}
    \label{fig:UMLTriangleMesh3}
\end{figure}

\subsection{Abstract base classes for linear combinations and tensor product surfaces}
We also ensure abstract base classes for arbitrary linear combinations (\CBlue{LinearCombination3}) and tensor product surfaces (\CBlue{TensorProductSurface3}) 
that are able to generate their images, to update and render their control polygons or nets and to solve curve or surface interpolation problems -- provided that the user redeclares and defines in derived classes those pure virtual methods that appear in the interfaces of these abstract classes and are responsible for the evaluation of blending functions and of (partial) derivatives up to a specified maximum order of differentiation. The diagrams of these abstract classes are illustrated in Figs.\ \ref{fig:UMLLinearCombination3} and \ref{fig:UMLTensorProductSurfaces3}, while their definitions and full implementations can be found in Listing pairs \mref{2.42}{185--188}\,---\,\mref{2.43}{188--193} and \mref{2.44}{194--197}\,---\,\mref{2.45}{197--211} of \citep{Roth2018b}, respectively.

\begin{figure}[!h]
    \centering
    \vspace{0.2cm}
    \includegraphics[scale = 0.990]{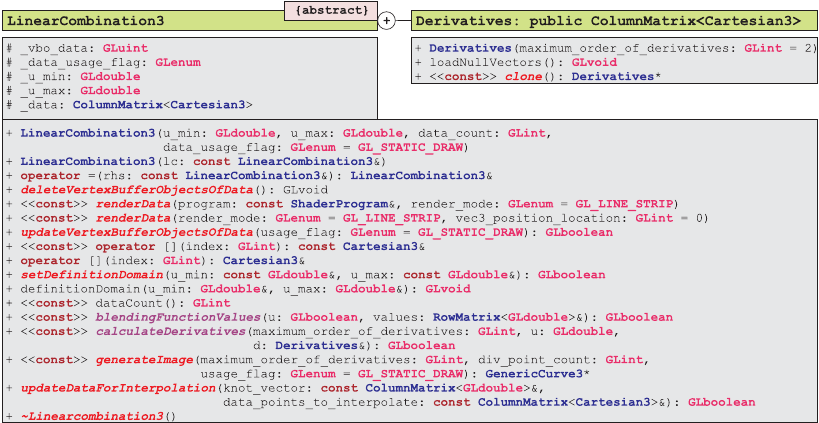}
    \caption{Class diagram of abstract linear combinations}
    \label{fig:UMLLinearCombination3}
\end{figure}

\begin{figure}[!h]
    \centering
    \includegraphics[scale = 0.990]{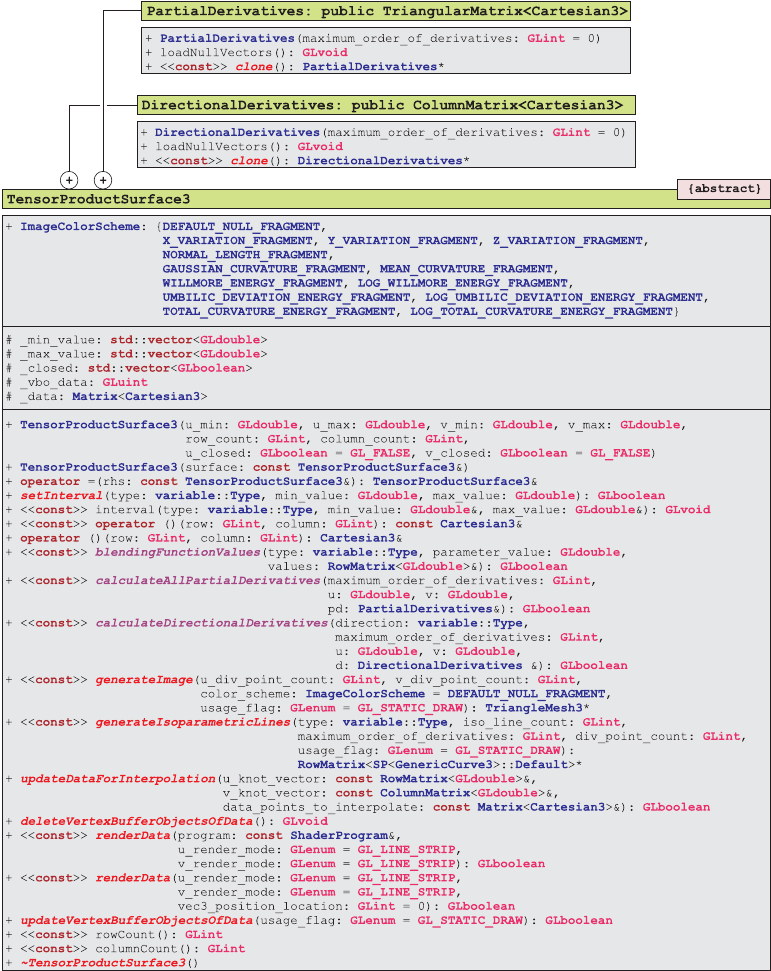}
    \caption{Class diagram of abstract tensor product surfaces}
    \label{fig:UMLTensorProductSurfaces3}
\end{figure}	

Pure virtual methods that have to redeclared and defined in derived classes are \CBlue{LinearCombination3}::\CDRPurple{\textit{blend\-ing\-Func\-tionValues}}, \CBlue{LinearCombination3}::\CDRPurple{\textit{calculateDerivatives}}, \CBlue{TensorProductSurface3}::\CDRPurple{\textit{blending\-Func\-tion\-Val\-ues}}, \CBlue{Tensor\-Product\-Surface3}::\CDRPurple{\textit{calculateAllPartialDerivatives}} and  \CBlue{Ten\-sor\-Prod\-uct\-Sur\-face3}::\CDRPurple{\textit{calculateDirectionalDerivatives}}. As we will see, B-curves and surfaces of type (\ref{eq:B_curve}) and (\ref{eq:B-surface}) will be derived from classes \CBlue{LinearCombination3} and \CBlue{TensorProductSurface3}, respectively. 

When generating the image of a tensor product surface, the user is also able to choose different color schemes that correspond to the point-wise variations of the $x$-, $y$- and $z$-coordinates, of the length of the normal vectors, of the Gaussian- and mean curvatures, of the Willmore energy and its translated logarithmic counterpart, of the umbilic deviation and its translated logarithmic scale, of the total curvature and its translated logarithmic variant, respectively. (In each case, the applied color map behaves like a temperature variation that ranges from the cold dark blue to the hot red, by passing through the colors cyan, green, yellow and orange such that the minimal and maximal values of a fixed energy type correspond to the extremal colors dark blue and red, respectively. For more details, see Fig.\ \mref{3.5}{320} of \citep{Roth2018b}.)

\subsection{Characteristic polynomials}
Characteristic polynomials of type (\ref{eq:characteristic_polynomial}) will be instances of the class \CBlue{CharacteristicPolynomial} that is able to store and update the factorization of (\ref{eq:characteristic_polynomial}) and also provides an overloaded function operator for evaluation purposes. The diagram of the class is illustrated in Fig.\ \ref{fig:UMLCharacteristicPolynomial}, while its definition and full implementation can be found in Listings \mref{2.46}{212--213} and \mref{2.47}{213--216} of \citep{Roth2018b}, respectively.

\begin{figure}[!h]
    
    \centering
    \includegraphics[scale = 0.990]{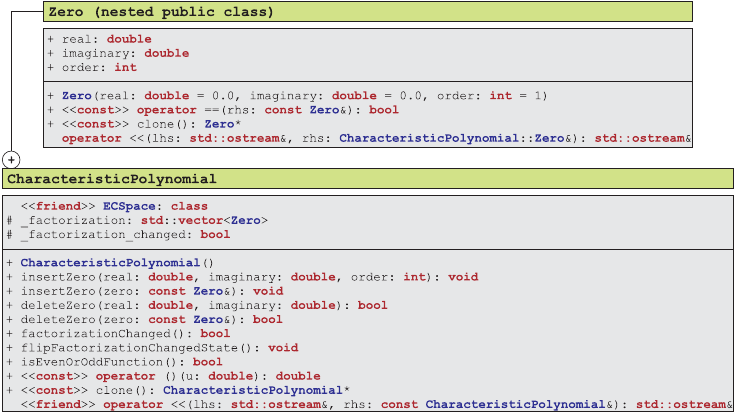}
    \caption{Class diagram of characteristic polynomials}
    \label{fig:UMLCharacteristicPolynomial}
\end{figure}

\subsection{EC spaces}
An EC space that comprises the constants, and which can be identified with the solution space of a differential equation of type (\ref{eq:differential_equation}), and whose space of derivatives is also EC, will be the instance of the class \CBlue{ECSpace} that
\begin{itemize}
    \item
    is able to generate and to update both the ordinary basis and the normalized B-basis of an EC vector space specified by the factorization of a characteristic polynomial of type (\ref{eq:characteristic_polynomial});
    \item
    provides a function operator to evaluate the zeroth and higher order derivatives of both bases at any point of the definition domain;
    \item
    can also be used to generate the general basis transformation matrix formulated in Theorem \ref{thm:basis_transformation} that maps the normalized B-basis to the ordinary basis of the underlying EC space;
    \item
    is able to decide whether the specified EC vector is reflection invariant; 
    \item
    can list the \LaTeX{} expressions of the ordinary basis functions; 
    \item
    can also be used to generate the images both of the ordinary basis and of the normalized B-basis functions.
\end{itemize}
The diagram of the class is illustrated in Fig.\ \ref{fig:UMLECSpace}, while its definition and full implementation can be found in Listings \mref{2.48}{216--220} and \mref{2.49}{220--241}, respectively.

\begin{figure}[!h]
    
    \centering
    \includegraphics[scale = 0.990]{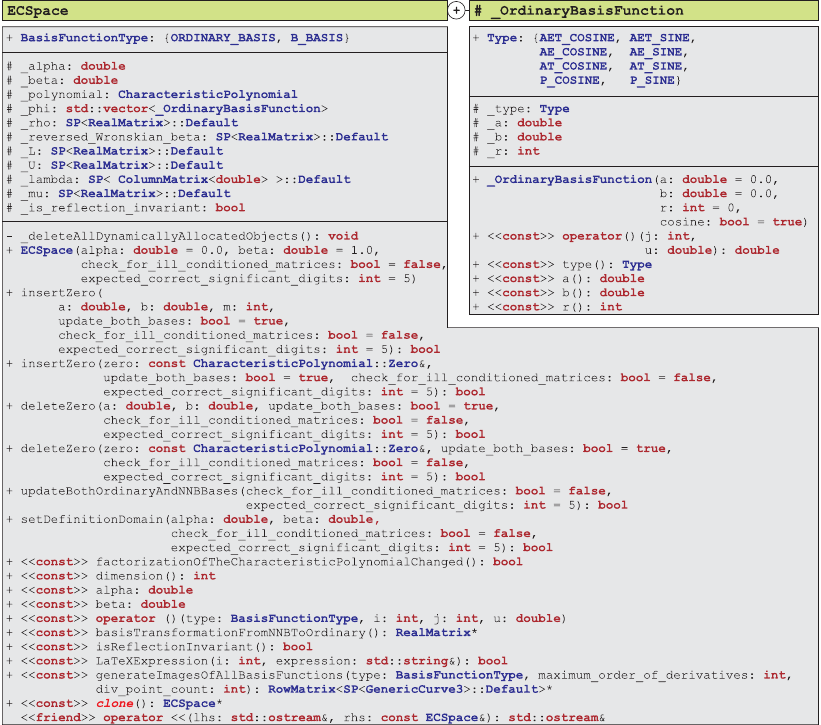}
    \caption{Class diagram of a translation invariant extended Chebyshev spaces that can be identified with the solution space of a constant-coefficient homogeneous linear differential equation}
    \label{fig:UMLECSpace}
\end{figure}

\begin{remark}[Full implementation details in the user manual]
    The construction process (\ref{eq:forward_Wronskian})--(\ref{eq:construction}) of the normalized B-basis functions of the underlying EC space and their differentiation formulas (\ref{eq:mixed_b_derivatives}) are implemented in lines \mref{566}{230}--\mref{808}{233} and \mref{833}{234}--\mref{910}{235} of Listing \mref{2.49}{220--241} in \citep{Roth2018b}, respectively. Formulas (\ref{eq:efficient_first_half})--(\ref{eq:efficient_last_half}) of the general basis transformation are implemented in lines \mref{914}{235}--\mref{981}{236} of Listing \mref{2.49}{220--241} in \citep{Roth2018b}.
\end{remark}

\begin{remark}[Examples in the user manual]
    Deriving from the base class \CBlue{ECSpace}, one can define special (like pure poly\-no\-mi\-al/trig\-o\-no\-met\-ric/hyperbolic and mixed exponential-trigonometric or algebraic-\{trig\-o\-no\-met\-ric/hy\-per\-bo\-lic/ex\-po\-nen\-tial-trigonometric\}) EC spaces as it is presented by several examples in Listings \mref{3.1}{270--273} and \mref{3.2}{273--278} of \citep{Roth2018b}. In Listings \mref{3.10}{291} and \mref{3.11}{291--294} of \citep{Roth2018b}, we also provided examples for the evaluation, differentiation and rendering of both the ordinary basis and the normalized B-basis of different types of EC spaces.
\end{remark}

\subsection{B-curves}
B-curves of type (\ref{eq:B_curve}) are represented by the class \CBlue{BCurve3} that is derived from the abstract base class \CBlue{LinearCombination3} and is based on the results of Corollary \ref{cor:B_basis_derivatives}, of Lemma \ref{lem:general_order_elevation}, and of Theorem \ref{thm:general_subdivision},  \ref{thm:efficient_basis_transformation} and \ref{thm:integral_curves}. It can be used to perform general order elevation, subdivision and to exactly describe arcs of user-specified ordinary integral curves by means of convex combinations of control points and normalized B-basis functions. The diagram of the class is illustrated in Fig.\ \ref{fig:UMLBCurve3}, while its definition and full implementation can be found in Listings \mref{2.50}{241--243} and \mref{2.51}{244--250} of \citep{Roth2018b}, respectively. Note that the class redeclares and defines those pure virtual methods that are inherited as interfaces from the abstract base class \CBlue{LinearCombination3}.

\begin{figure}[!h]
    
    \centering
    \includegraphics[scale = 0.990]{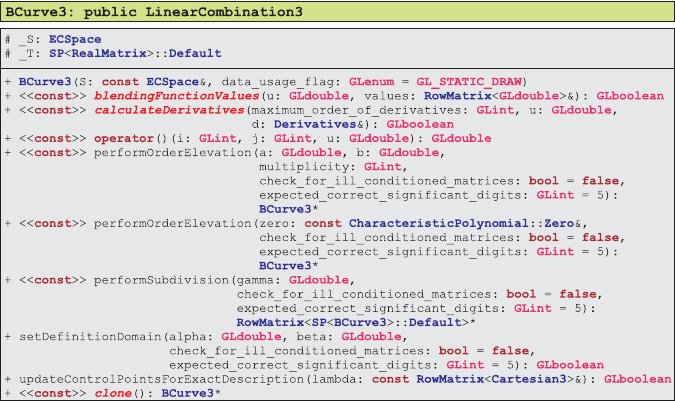}
    \caption{Class diagram of general B-curves}
    \label{fig:UMLBCurve3}
\end{figure}

\begin{remark}[Full implementation details in the user manual]
    Formulas (\ref{eq:order_elevation_inner_points_first_half})--(\ref{eq:order_elevation_inner_points_second_half}) of the general order elevation stated in Lemma \ref{lem:general_order_elevation} are implemented in lines \mref{137}{246}--\mref{160}{246} of Listing \mref{2.51}{244--250} in \citep{Roth2018b}. 
    Formulas (\ref{eq:left_subdivision_points_first_half})--(\ref{eq:right_subdivision_points_second_half}) of the general B-algorithm stated in Theorem \ref{thm:general_subdivision} are implemented in lines \mref{199}{247}--\mref{322}{249} of Listing \mref{2.51}{244--250} in \citep{Roth2018b}.
    Using B-curves of type (\ref{eq:B_curve}) and formula (\ref{eq:cpbed_ordinary_curves}) of Theorem \ref{thm:integral_curves}, the control point based exact description of ordinary integral curves of type (\ref{eq:ordinary_integral_curve}) is implemented in lines \mref{350}{249}--\mref{375}{250} of Listing \mref{2.51}{244--250} in \citep{Roth2018b}.
\end{remark}

\begin{remark}[Examples in the user manual]
    Listings \mref{3.12}{294--296}\,---\,\mref{3.13}{296--301} and Fig.\ \mref{3.2}{302} of \citep{Roth2018b} provide examples for the definition, generation, evaluation, differentiation, order elevation, subdivision and rendering of different types of B-curves. Listings \mref{3.14}{302--303}\,---\,\mref{3.15}{303--306} and Fig.\ \mref{3.3}{306} of \citep{Roth2018b} provide an example for the control point based exact description (or B-representation) of integral curves given in traditional (i.e., ordinary) parametric form.
\end{remark}

\subsection{B-surfaces}
B-surfaces of type (\ref{eq:B-surface}) are represented by the class \CBlue{BSurface3} that is a descendant of the abstract base class \CBlue{TensorProductSurface3} and is based on Theorem \ref{thm:integral_surfaces} and on the natural extensions of Corollary \ref{cor:B_basis_derivatives}, of Lemma \ref{lem:general_order_elevation} and of Theorem \ref{thm:general_subdivision}. It can be used to perform general order elevation, subdivision and to describe exactly a large family of ordinary integral surfaces of type (\ref{eq:ordinary_integral_surface}). Its diagram is presented in Fig.\ \ref{fig:UMLBSurface3}, while its definition and full implementation can be found in Listings \mref{2.52}{250--253} and \mref{2.53}{253--267} of \citep{Roth2018b}, respectively. Note that the class redeclares and defines those pure virtual methods that are inherited from the abstract base class \CBlue{TensorProductSurface3}.

\begin{figure}[!h]
    \centering
    \includegraphics[scale = 0.990]{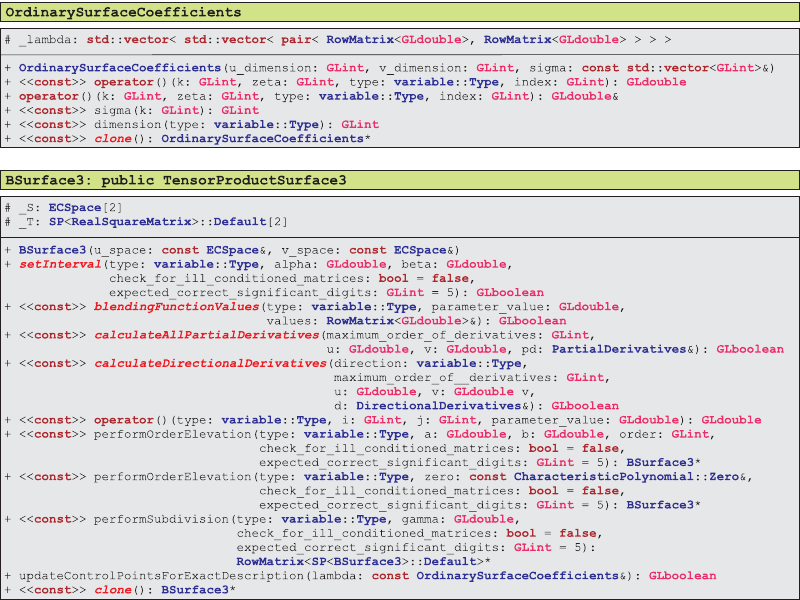}
    \caption{Class diagrams of general B-surfaces and of ordinary surface coefficients}
    \label{fig:UMLBSurface3}
\end{figure}

\begin{remark}[Full implementation details in the user manual]
    The results of Lemma \ref{lem:general_order_elevation} can also be extended to the general order elevation of B-surfaces of type (\ref{eq:B-surface}) and our function library ensures this possibility as well: the order elevation of B-surfaces is implemented in lines \mref{296}{258}--\mref{515}{261} of Listing \mref{2.53}{253--267} in \citep{Roth2018b}. 
    The subdivision technique presented in Theorem \ref{thm:general_subdivision} can also be extended to B-surfaces as it is implemented in lines \mref{521}{262}--\mref{781}{266} of Listing \mref{2.53}{253--267} in \citep{Roth2018b}.
    Using B-surfaces of type (\ref{eq:B-surface}) and formula (\ref{eq:cpbed_ordinary_surfaces}) of Theorem \ref{thm:integral_surfaces}, the control point based exact description (or B-representation) of ordinary integral surfaces of type (\ref{eq:ordinary_integral_surface}) is implemented in lines \mref{787}{266}--\mref{826}{267} of Listing \mref{2.53}{253--267} in \citep{Roth2018b}.
\end{remark}

\begin{remark}[Examples in the user manual]
    Listings \mref{3.16}{307--308}\,---\,\mref{3.17}{308--318} and Figs.\ \mref{3.4}{319}--\mref{3.7}{322} of \citep{Roth2018b} provide examples for the definition, generation, evaluation, differentiation, order elevation, subdivision and rendering of different types of B-surfaces. Listings \mref{3.18}{321--324}\,---\,\mref{3.19}{324--334} and Figs.\ 
    \mref{3.8}{334}--\mref{3.9}{335} of \citep{Roth2018b} give examples for the control point based exact description (or B-representation) of integral surfaces given in traditional (i.e., ordinary) parametric form and for the generation, evaluation, differentiation and rendering of isoparametric lines of B-surfaces.
\end{remark}

\section{Further examples, run-time statistics and handling possible numerical instabilities}\label{sec:examples_and_statistics}

We have seen that each of the proposed algorithms relies on the successful evaluation of zeroth and higher order (endpoint) derivatives of either of the ordinary basis functions (\ref{eq:ordinary_basis}) or of the normalized B-basis (\ref{eq:B-basis}). The order of (endpoint) derivatives that have to be evaluated increases proportionally with the dimension of the underlying EC space. Due to floating point arithmetical operations, the maximal dimension for which one does not bump into numerical instabilities depends both on the endpoints of the definition domain and on the type of the ordinary basis functions of the given EC space -- depending on the case, it may be smaller or greater, but considering that, in practice, curves and surfaces are mostly composed of smoothly joined lower order arcs and patches, we think that the proposed algorithms can be useful in case of real-life applications. In order to empirically underpin these statements, here we present several run-time statistics and we also describe how to detect and handle possible numerical instabilities. Using the Microsoft Visual Studio Compiler 15.0, we have tested the $64$-bit \textit{release} version of our library both on an affordable laptop with an Intel$^\text{\textregistered{}}$ Core\texttrademark{} i7-3720QM CPU$_1$ @ 2.60 GHz ($4$ cores, $8$ threads) and an nVidia GPU$_1$ GeForce GT 650M and on a desktop computer with an Intel$^\text{\textregistered{}}$ Xeon$^\text{\textregistered{}}$ E5-2670 CPU$_2$ @ 2.60 GHz ($8$ cores, $16$ threads) and an nVidia GPU$_2$ GeForce GTX 680.

Examples of the next subsections rely on the construction of the unique normalized B-bases of the following EC spaces. In case of each space we always assume that the length of its definition domain is strictly less than the corresponding critical length for design. The pure polynomial reflection invariant EC space\footnote{It is well-known \citep{Carnicer1993}, that the normalized B-basis of $\mathbb{P}_{n}^{\alpha,\beta}$ is formed by the Bernstein polynomials of degree $n$ defined over any non-empty compact interval $\left[\alpha,\beta\right]\subset \mathbb{R}$.}
\begin{equation}
    \mathbb{P}_{n}^{\alpha,\beta}:=\left\langle\mathcal{P}_n^{\alpha,\beta}\right\rangle:=\left\langle\left\{  1,u,\ldots,u^{n}:u\in\left[
    \alpha,\beta\right]  \right\}\right\rangle,~n \geq 0,~\dim \mathbb{P}_n^{\alpha,\beta} = n+1
    \label{eq:pure_polynomial}
\end{equation}
corresponds to the characteristic polynomial $p_{n+1}\left(z\right) 
=z^{n+1}$, $z\in\mathbb{C}$. Let $\left\{  \omega_{k}\right\}_{k=1}^{n}$ be pairwise distinct non-zero real numbers. Then the
characteristic polynomial 
\[
p_{\left(  n+1\right)  ^{2}}\left(  z\right)   
=z^{n+1}\prod_{k=1}%
^{n}\left(  z^{2}+\omega_{k}^{2}\right)  ^{n+1-k},~z\in\mathbb{C}
\]
generates the reflection invariant mixed algebraic-trigonometric EC space
\begin{align}
    \mathbb{AT}_{n\left(n+2\right) }^{\alpha,\beta}  := &~ \left\langle \mathcal{P}_n^{\alpha,\beta} \cup\left\{  u^{r}\cos\left(
    \omega_{k}u\right)  ,u^{r}\sin\left(  \omega_{k}u\right)  :u\in\left[
    \alpha,\beta\right]  \right\}  _{k=1,~r=0}^{n,~n-k}\right\rangle,~n \geq 2
    \label{eq:algebraic_trigonometric}
\end{align}
of dimension $\dim \mathbb{AT}_{n\left(n+2\right)}^{\alpha,\beta} = \left(n+1\right)^2$. 
The reflection invariant EC vector spaces
\begin{equation}
    \mathbb{T}_{2n}^{\alpha,\beta}:=
    \left\langle\mathcal{T}_{2n}^{\alpha,\beta}\right\rangle :=
    \left\langle\left\{1,\left\{\cos\left(ku\right),\,\sin\left(ku\right)\right\}_{k=1}^{n}:u\in\left[\alpha,\beta\right]\right\}\right\rangle,~n \geq 1,
    ~
    \dim \mathbb{T}_{2n}^{\alpha,\beta} = 2n+1
    \label{eq:pure_trigonometric}
\end{equation}
and
\begin{equation}
    \mathbb{H}_{2n}^{\alpha,\beta}:=
    \left\langle\mathcal{H}_{2n}^{\alpha,\beta}\right\rangle :=
    \left\langle\left\{1,\left\{\cosh\left(ku\right),\,\sinh\left(ku\right)\right\}_{k=1}^{n}:u\in\left[\alpha,\beta\right]\right\}\right\rangle,~n \geq 1,
    ~
    \dim \mathbb{H}_{2n}^{\alpha,\beta} = 2n+1
    \label{eq:pure_hyperbolic}
\end{equation}
of pure trigonometric\footnote{The unique normalized B-basis of $\mathbb{T}_{2n}^{\alpha,\beta}$ exists whenever $\beta-\alpha \in \left(0,\pi\right)$ and it was constructed in closed form in \citep{Sanchez1998}.} and hyperbolic\footnote{The explicit form of the unique normalized B-basis of the EC space $\mathbb{H}_{2n}^{\alpha,\beta}$ was derived in \citep{ShenWang2005} and, theoretically, the interval length $\beta-\alpha$ can be any positive number. Concerning numerical instabilities, the only limitation lies in the usage of potentially too big exponentials.} polynomials of order at most $n$ (or of degree at most $2n$) would correspond to the characteristic polynomials
\[
p_{2n+1}\left(z\right) = z \prod_{k = 1}^n \left(z^2 + k^2\right)
\text{ and }~
p_{2n+1}\left(z\right) = z \prod_{k = 1}^n \left(z^2 - k^2\right),
\]
respectively, where $z\in\mathbb{C}$. We will also consider the algebraic-exponential-trigonometric EC spaces\footnote{For arbitrary values of the order $n \geq 2$, the critical lengths $\ell^{\prime}\left(\mathbb{AT}_{n\left(n+2\right)}^{\alpha,\beta}\right)$ and $ \ell^{\prime}\left(\mathbb{AET}_{n\left(2n+3\right)}^{\alpha,\beta}\right)$ for design and the explicit forms of the unique normalized B-bases of the spaces $\mathbb{AT}_{n\left(n+2\right)}^{\alpha,\beta}$ and  $\mathbb{AET}_{n\left(2n+3\right)}^{\alpha,\beta}$, respectively, were not studied in the literature. However, normalized B-bases of some special subspaces of these parent spaces were investigated, e.g.,\ in \citep{MainarPena2004,CarnicerMainarPena2004,CarnicerMainarPena2007, MainarPena2010, CarnicerMainarPena2014} and references therein.}$^{,}$\footnote{The mixed hyperbolic-trigonometric EC space $\mathbb{M}_{n+4,a,b}^{\alpha,\beta}$ was investigated in \citep{BrilleaudMazure2012}, where, for $n=0$, it was shown that $\ell^{\prime}\left(\mathbb{M}_{4,a,b}^{\alpha,\beta}\right)$ coincides with the only solution of the equation $b\tanh\left(au\right)=a \tan\left(bu\right)$, where $u \in \left(\frac{\pi}{b},\frac{3\pi}{2b}\right)$, i.e., in general, the critical length for design is not necessarily a free parameter with respect to the parameters resulting from the differential equation (\ref{eq:differential_equation}). For example, in case of parameters $n=0$, $a=1$ and $b = 0.2$ one obtains that $\ell^{\prime}\left(\mathbb{M}_{4,1,0.2}^{\alpha,\beta}\right)\approx 16.694941067922716$, i.e., the space $\mathbb{M}_{4,1,0.2}^{\alpha,\beta}$ possesses a unique normalized B-basis provided that $\beta - \alpha \in \left(0,16.694941067922716\right)$. Moreover, $\ell^{\prime}\left(\mathbb{M}_{n+4,a,b}^{\alpha,\beta}\right)\geq \ell^{\prime}\left(\mathbb{M}_{4,a,b}^{\alpha,\beta}\right),~\forall n\geq 1$, $\forall a,b>0$.}
\begin{align}
\mathbb{AET}_{n\left(2n+3\right)}^{\alpha,\beta}
:=&~\left\langle \mathcal{AET}_{n\left(2n+3\right)}^{\alpha,\beta} \right\rangle :=
\left\langle
\mathcal{P}_{n}^{\alpha,\beta} \cup
\left\{
\left\{
u^r e^{\omega_k u}\cos\left(\omega_k u\right),
u^r e^{\omega_k u}\sin\left(\omega_k u\right)
\right\}_{r=0,\,k=1}^{n-k,\,n},
\right.
\right.
\label{eq:algebraic-exponential-trigonometric}
\\
&
\left.
\left.
\left\{
u^re^{-\omega_k u}\cos\left(\omega_k u\right),
u^re^{-\omega_k u}\sin\left(\omega_k u\right)
\right\}_{r=0,\,k=1}^{n-k,\,n}:
u \in \left[\alpha,\beta\right]
\right\}
\right\rangle, ~n \geq 2,
\nonumber
\\
&
~\dim \mathbb{AET}_{n\left(2n+3\right)}^{\alpha,\beta} = 2n^2+3n+1,
\nonumber
\end{align}
and
\begin{align}
\mathbb{M}_{n+4,a,b}^{\alpha,\beta}:=&~
\left\langle \mathcal{M}_{n+4,a,b}^{\alpha,\beta}\right\rangle
:=
\left\langle
\mathcal{P}_{n}^{\alpha,\beta}
\cup
\left\{
\cosh\left(au\right)\cos\left(bu\right),\cosh\left(au\right)\sin\left(bu\right),\right.\right.
\label{eq:Mazure_vector_space}
\\
&
\left.\left.\sinh\left(au\right)\cos\left(bu\right),\sinh\left(au\right)\sin\left(bu\right)
:
u \in \left[\alpha,\beta\right]
\right\}
\right\rangle,
\,
n \geq 0,
\,
\dim\mathbb{M}_{n+4,a,b}^{\alpha,\beta}=n+5,
\nonumber{}
\end{align}
which are reflection invariant and correspond to the characteristic polynomials
\[
p_{2n^2+3n+1}\left(z\right)
=
z^{n+1}
\prod_{k=1}^n
\left(z^2 - 2\omega_k z + 2\omega_k^2\right)^{n+1-k}
\prod_{k=1}^n
\left(z^2 + 2\omega_k z + 2\omega_k^2\right)^{n+1-k}
\]
and
\[
p_{n+5}\left(z\right) = z^{n+1}\left(z-\left(a-\mathbf{i}b\right)\right)\left(z-\left(a+\mathbf{i}b\right)\right)\left(z-\left(-a-\mathbf{i}b\right)\right)\left(z-\left(-a+\mathbf{i}b\right)\right),
\]
respectively, where $a,b>0$ and $z\in\mathbb{C}$.

\subsection{Determining the length of the definition domain}

The length $\beta-\alpha>0$ of the definition domain $\left[\alpha,\beta\right]$ should be strictly less than the critical length of the space $D\mathbb{S}_n^{\alpha,\beta}$ obtained after differentiation (i.e., the critical length for design), otherwise the given space may not provide shape preserving representations, e.g.,  the generated ``B-basis" functions may not form a strictly totally positive function system that usually leads to the violation of the convex hull and variation diminishing properties of the induced ``B-curves". This property is related to the fact that the space of derivatives fails to be EC for too large intervals \citep{CarnicerMainarPena2004} (leading to the idea of the critical length of a space of functions that is invariant under translations). We have already seen in the previous subsection that there are some types of translation invariant EC spaces in case of which we know the explicit values of the critical lengths of the corresponding spaces of the derivatives.
However, in general, the exact determination or at least the approximation of the critical length for design of an EC space -- as the supremum of the lengths of the intervals on which the space of derivatives is also EC -- is not a trivial problem. Therefore, when one creates an instance of the class \CBlue{ECSpace} that has not been previously studied in the literature, we advise to always check whether each generated ``normalized B-basis function" is indeed non-negative over the user-defined definition domain\footnote{The bicanonical basis (\ref{eq:particular_integrals}) is positive on $\left(\alpha,\beta\right)$ due to boundary conditions  (\ref{eq:boundary_conditions}), but this theoretically expected property may be violated at run-time due to the accumulated numerical errors resulting from poorly selected input parameters.} or, even if it is non-negative, behaves as theoretically expected. If there are subregions on which at least a function becomes negative or has an unexpected (even chaotic) behavior, the length of the definition domain should be decreased and the verifying test should be repeated in an interactive manner. (Given a user-defined instance $\mathbb{S}_n^{\alpha,\beta}$ of the class \CBlue{ECSpace}, the proposed function library is able to generate and render the shape of the normalized B-basis functions by using the methods \CBlue{ECSpace}::generateImagesOfAllBasisFunctions,  \CBlue{GenericCurve3}::updateVertexBufferObjects and \CBlue{GenericCurve3}::renderDerivatives. For more details consider, e.g.,  Listings \mref{3.1}{270--273}\,---\,\mref{3.2}{273--278} and \mref{3.10}{291}\,---\,\mref{3.11}{291--294} of \citep{Roth2018b}. This means that users have a graphical feedback in order to decide whether the length $\beta-\alpha$ of the underlying definition domain exceeded or it is very close from below to the usually unknown critical length $\ell^{\prime}_n$ for design. 
)

Since the proposed function library is designed for constant-comprising translation invariant solution spaces of constant-coefficient homogeneous linear differential equations, by means of \citep[Proposition 3.1/(iii), Theorems 2.4 \& 4.1, and Corollary 4.1]{CarnicerMainarPena2004} there always exist sufficiently small intervals for which the considered vector spaces are EC and also possess unique normalized B-bases.

Numerical instabilities may also appear when the length of the definition domain of the underlying EC space is too small, since this may lead both to higher order endpoint derivatives with too big absolute values and to almost singular or badly scaled systems of linear equations. However, note that, when $\beta-\alpha \searrow 0$ the B-curve (\ref{eq:B_curve}) always approaches the polynomial B\'ezier curve of degree $n$ (see \citep[Theorem 3.13]{Pottmann1993}), i.e., in this case there is no real interest in replacing the standard polynomial B\'ezier curve by the limiting case $\beta-\alpha\searrow 0$ of a non-polynomial B-curve. The strongest shape effects are obtained when $\beta-\alpha \nearrow \ell^{\prime}_n$.

As one can see, the length of the definition domain influences both the correctness and the numerical stability of all proposed algorithms. The user should avoid the usage of either too large or too small definition domains. This does not mean that the proposed function library cannot be used in the limiting cases $\beta-\alpha \searrow 0$ and $\beta-\alpha \nearrow \ell^{\prime}_n$. Nevertheless, considering programming and stable numerical evaluations, it is important to avoid pathological cases like $\beta \in \left(\alpha, \alpha + \varepsilon\right)$ and $\beta \in \left(\alpha + \ell^{\prime}_n - \varepsilon, \alpha + \ell^{\prime}_n\right)$, where $\varepsilon > 0$ is a too small user-defined threshold parameter. 

In order to illustrate the correctness of the proposed function library, in Figs.\ \ref{fig:shape_effect_in_pure_trigonometric_hyperbolic_cases} and \ref{fig:shape_effect_in_case_of_hyperbolic_trigonometric_M_spaces} we have reconstructed some already known examples in case of which we will consider shape effects obtained under these limiting cases.

\begin{figure}[!h]
    \centering
    \includegraphics[]{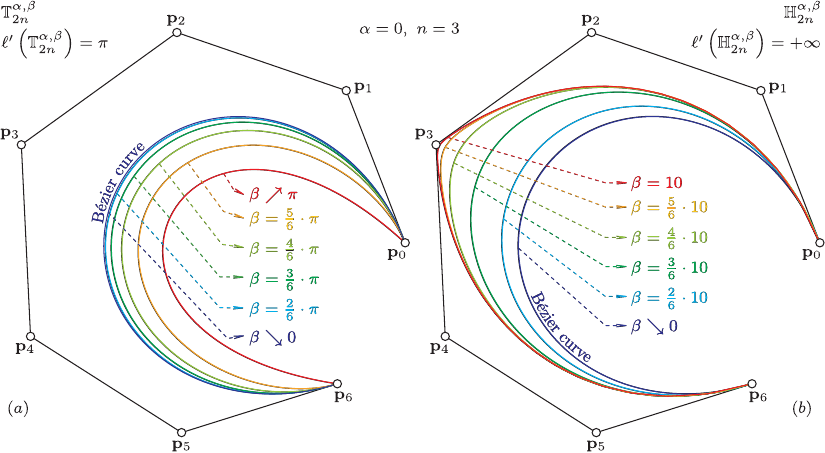}
    \caption{(\textit{a}) Shape variations of pure (\textit{a}) trigonometric and (\textit{b}) hyperbolic B-curves of order 3 (degree 6), where $\alpha = 0$ is fixed but $\beta$ varies in the ranges $\left[\varepsilon_1,\pi - \varepsilon_2\right)$ and $\left(\varepsilon_1,10\right]$, respectively, where $\varepsilon_1 = 0.024$ and $\varepsilon_2 = 10^{-15}$ (in case (\textit{b}), numerical errors appear if $\beta \gtrapprox 30.5$). In both cases, visually there is no difference between the B-curve obtained for $\beta = \varepsilon_1$ and the B\'ezier curve that is expected in the limiting case $\beta\searrow 0$.}
    \label{fig:shape_effect_in_pure_trigonometric_hyperbolic_cases}
\end{figure}

\begin{figure}[!h]
    \centering
    \includegraphics[]{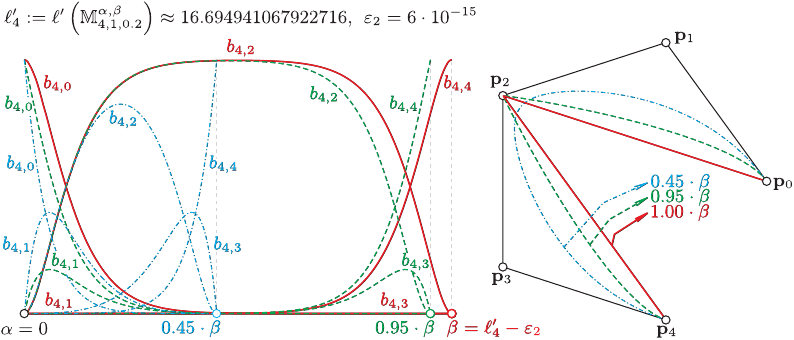}
    \caption{On the left side one can follow the shape variation of the normalized B-basis functions of the vector space (\ref{eq:Mazure_vector_space}), where parameters $a=1$, $b=0.2$ and $\alpha = 0$ are fixed, while the endpoint $\beta$ varies in the set $\left\{0.45 \cdot \left(\ell^{\prime}_{4} - \varepsilon_2\right),~ 0.95 \cdot \left(\ell^{\prime}_{4} - \varepsilon_2\right),~\ell^{\prime}_{4} - \varepsilon_2\right\}$, where $\varepsilon_2 = 6\cdot 10^{-15}$ and the critical length $\ell_4^{\prime}=\ell^{\prime}\left(\mathbb{M}^{\alpha,\beta}_{4,1,0.2}\right)\approx 16.694941067922716$ for design was specified in \citep{BrilleaudMazure2012}. For this $5$-dimensional case, the minimal value of $\beta$ for which does not appear numerical instabilities is $\varepsilon_1 \approx \frac{\ell_4^{\prime}}{10^3}$. (Observe that basis functions $b_{4,1}$ and $b_{4,3}$ tend to the constant function $0$ as $\beta \nearrow \ell^{\prime}_4$.) On the right side one can observe the effect of the varying endpoint $\beta$ on the shape of a hyperbolic-trigonometric B-curve of order $4$ determined by a fixed control polygon.}
    
    \label{fig:shape_effect_in_case_of_hyperbolic_trigonometric_M_spaces}
\end{figure}

\subsection{Determining the maximum dimension}

The construction process ((\ref{eq:forward_Wronskian})--(\ref{eq:boundary_conditions}), (\ref{eq:LU_factorization_of_reversed_system})--(\ref{eq:construction})) of the normalized B-basis of an EC space is based on the solution of several systems of linear equations that may be ill-conditioned either for relatively large dimension numbers or for poorly selected endpoints of the definition domain. For this reason several methods of the proposed function library expect a boolean flag named ``check\_for\_ill\_con\-di\-tioned\_ma\-tri\-ces" and a non-negative integer called ``expected\_correct\_sig\-nif\-i\-cant\_digits". If the flag ``check\_\-for\_\-ill\_\-con\-di\-tioned\_\-matrices" is set to true, these methods will calculate the condition number of each matrix that appears in the construction process  ((\ref{eq:forward_Wronskian})--(\ref{eq:boundary_conditions}), (\ref{eq:LU_factorization_of_reversed_system})--(\ref{eq:construction})). Using singular value decomposition \citep{PressTeukolskyVetterlingFlannery2007}, each condition number is determined as the ratio of the largest and smallest singular values of the corresponding matrices. If at least one of the obtained condition numbers is too large (i.e., when the number of estimated correct significant digits is less than the number of expected ones), these methods will throw an exception that one of the systems of linear equations is ill-conditioned and therefore its solution may be inaccurate. If the user catches such an exception, one can try:
\begin{itemize} 
    \item
    to lower the number of expected correct significant digits;
    \item
    to decrease the dimension of the underlying EC space;
    \item
    to change the endpoints of the definition domain $\left[\alpha,\beta\right]$;
    \item
    to run the code without testing for ill-conditioned matrices and hope for the best.
\end{itemize}

Note that in certain cases the standard condition number may lead to an overly pessimistic estimate for the overall error and at the same time, by activating the boolean flag ``check\_for\_ill\_con\-di\-tioned\_matrices", the run-time of the aforementioned methods will increase. Several numerical tests show that  ill-conditioned matrices appear during the construction process ((\ref{eq:forward_Wronskian})--(\ref{eq:boundary_conditions}), (\ref{eq:LU_factorization_of_reversed_system})--(\ref{eq:construction})) when one tries to define EC spaces with relatively big dimensions. Considering that, in practice, (spline) curves and surfaces are mostly described by basis functions of lower dimensional vector spaces, by default we opted for speed, i.e., initially the flag ``check\_for\_ill\_conditioned\_ma\-trices" is set to false. If one obtains mathematically or geometrically unexpected results (that violate either the convex hull or variation diminishing properties, or they are simply meaninglessly noisy and chaotic), then one should (also) study the values of the condition numbers mentioned above.

The maximal dimension for which one does not bump into numerical instabilities also depends on possible operations that have to performed on the obtained B-curves/surfaces. If one only intends to evaluate, (partially) differentiate and render B-curves/surfaces without performing order elevations, subdivisions or B-representations on them, the dimensions of the applied EC spaces can be bigger than otherwise. The reason of this is that differentiations and arithmetical floating point operations that appear in formulas of Lemma \ref{lem:general_order_elevation} and of Theorems \ref{thm:general_subdivision} and \ref{thm:efficient_basis_transformation} further increase the accumulated numerical errors that causes the quicker disappearance of the correct digits.

Assuming that the user tries to model B-curves/surfaces without performing order elevations, subdivision and B-representations on them, Fig.\ \ref{fig:SpaceCreationRuntimeStatistics} illustrates in horizontal direction the maximal values of $n$ for which one does not enter into numerical instabilities in spite of the fact that these values are greater than those for which the estimated number of correct digits would equal to only $1$ based on the detected corresponding maximal condition numbers and on the machine epsilon $\varepsilon \approx 2.220446 \cdot 10^{-16}$ of double precision types. The latter values of $n$ are highlighted with the tipping points of the tick vertical dashed lines with arrows pointing to their left and right sides. 

\begin{figure}[!h]
    
    \centering
    \includegraphics[]{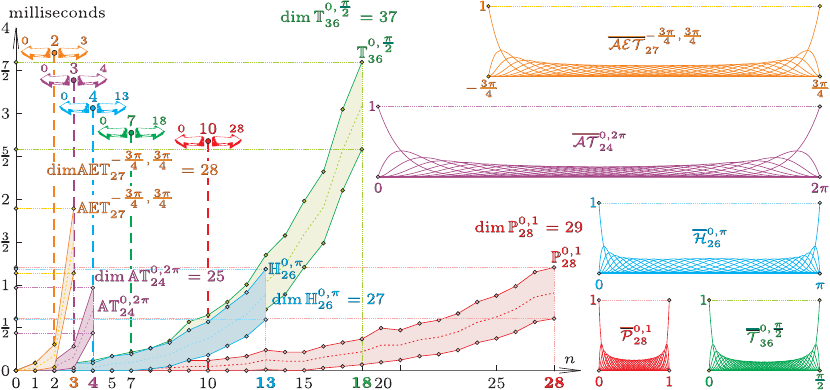}
    \caption{Maximal dimensions for which one does not enter into numerical instabilities in case of B-curve/surface generation and differentiation in EC spaces (\ref{eq:pure_polynomial})--(\ref{eq:algebraic-exponential-trigonometric}). The lower and upper boundary of the shaded regions correspond to the endpoints of confidence intervals that include the unknown theoretical mean value of the time (measured in milliseconds on CPU$_1$) that is required to construct all data needed for the evaluation and differentiation of both the ordinary basis and the normalized B-basis of the given EC spaces for each value of $n$. (The dotted centerlines of the shaded areas interpolate the sample mean of the elapsed time values.) The tipping points of the thick vertical dashed lines (with arrows pointing to their left and right sides) highlight those values of $n$ for which the estimated number of correct digits would equal to only $1$ based on the detected corresponding maximal condition numbers.}
    \label{fig:SpaceCreationRuntimeStatistics}
\end{figure}

Fig.\ \ref{fig:SpaceCreationRuntimeStatistics} also shows in vertical direction confidence intervals for the unknown theoretical mean value of the time (measured in milliseconds) that is required to calculate all information needed for the evaluation and differentiation of both the ordinary basis and the normalized B-basis of the given EC spaces for each value of $n$. The run-time related confidence intervals were determined as follows.  Consider the fixed significance level $s=0.01$  and let $N\geq 1000$ be a fixed number of independent trials for each values of $n$. Assuming that $\{\tau_{n,i}\}_{i=1}^{N}$ is the elapsed time sample obtained by repeatedly testing an algorithm for a fixed value of $n$ and denoting by $F_{\mathcal{S}(N-1)}$ Student's $T$-distribution function of $N-1$ degrees of freedom, the endpoints of the confidence intervals were computed as
\[
\left(\tau_{n,\min}, \tau_{n,\max}\right) = 
\left(
\max\left\{
\overline{\tau}_n - \frac{\overline{\sigma}_n}{\sqrt{N}}\cdot x_{1-\frac{s}{2},N-1},0\right\},
\overline{\tau}_n + \frac{\overline{\sigma}_n}{\sqrt{N}} \cdot x_{1-\frac{s}{2},N-1}
\right),
\]
where
\[
\overline{\tau}_n= \frac{1}{N}\sum_{i=1}^{N} \tau_{n,i},~
\overline{\sigma}_n =
\sqrt{\frac{1}{N-1}\sum_{i=1}^N\left(\tau_{n,i} - \overline{\tau}_n\right)^2},~
x_{1-\frac{s}{2},N-1} = F_{\mathcal{S}(N-1)}^{-1}\left(1-\frac{s}{2}\right).
\]

If one also intends to perform special operations (like subdivision or order elevation) on B-curves/surfaces constructed in EC spaces (\ref{eq:pure_polynomial})--(\ref{eq:algebraic-exponential-trigonometric}), usually one should work with numbers $n$ that are smaller than or equal to those values that are highlighted by the tipping points of the thick vertical dashed lines of Fig.\ \ref{fig:SpaceCreationRuntimeStatistics}. (In order to avoid badly scaled or close to singular matrices that may appear in the B-basis construction process ((\ref{eq:forward_Wronskian})--(\ref{eq:boundary_conditions}), (\ref{eq:LU_factorization_of_reversed_system})--(\ref{eq:construction})) during repeated subdivisions, usually the dimension of the underlying EC space should be decreased proportionally to the shrinking length of the definition domain.)

Table \ref{tbl:differentiation_up_to_a_maximal_order}
provides run-time statistics for the differentiation of those normalized B-basis functions that are illustrated in Fig.\ \ref{fig:SpaceCreationRuntimeStatistics}. (Note that, in practice, is highly unlikely that one would use as many basis functions as shown in Fig.\ \ref{fig:SpaceCreationRuntimeStatistics} in order to describe arcs and patches of composite curves and surfaces, respectively. Here we tried to push the boundaries of the proposed function library in some cases that may also appear in real-world applications. In lower dimensional EC spaces, the endpoints of run-time related confidence intervals will be significantly smaller.)

\small{
    \begin{table}[htbp]
        \bigskip{}
        \caption{Run-time statistics of normalized B-basis function differentiation up to a given maximal order}
        \label{tbl:differentiation_up_to_a_maximal_order}
        \def\arraystretch{1.0}
        \centering
        
        \begin{tabular}{@{}p{0.51\textwidth}p{0.44\textwidth}@{}}
            \toprule
            Task: multi-threaded differentiation of all functions of a specific normalized B-basis at $100$ uniform subdivision points from order $0$ up to a maximal order $d_{\max}$
            &
            Run-time related confidence intervals\tablefootnote{The endpoints are measured in milliseconds and are calculated by using Student's $T$-distribution with $1000$ independent trials at the fixed significance level $0.01$.}
            \\
            \midrule
            $\overline{\mathcal{P}}_{28}^{0,1}$, $d_{\max}=\left\lfloor \frac{1}{2}\dim \mathbb{P}_{28}^{0,1} \right\rfloor = 14$
            &
            $\left(91.105, 93.887\right)_{\mathrm{CPU}_1}; ~\left(49.676, 51.288\right)_{\mathrm{CPU}_2}$
            \\
            \midrule			
            $\overline{\mathcal{T}}_{36}^{0,\frac{\pi}{2}}$, $d_{\max}=\left\lfloor \frac{1}{2}\dim \mathbb{T}_{36}^{0,\frac{\pi}{2}} \right\rfloor = 18$
            &
            $\left(423.504, 431.568\right)_{\mathrm{CPU}_1};~\left(217.626, 220.150\right)_{\mathrm{CPU}_2}$
            \\
            \midrule
            $\overline{\mathcal{H}}_{26}^{0,\pi}$, $d_{\max}=\left\lfloor \frac{1}{2}\dim \mathbb{H}_{26}^{0,\pi} \right\rfloor=13$
            &
            $\left(86.4903, 88.9317\right)_{\mathrm{CPU}_1};~\left(46.1756, 48.8821\right)_{\mathrm{CPU}_2}$
            \\
            \midrule
            $\overline{\mathcal{AT}}_{24}^{0,2\pi}$, $d_{\max}=\left\lfloor \frac{1}{2}\dim \mathbb{AT}_{24}^{0,2\pi} \right\rfloor=12$
            &
            $\left(75.4589, 77.7331\right)_{\mathrm{CPU}_1};~\left(42.9555, 44.7865\right)_{\mathrm{CPU}_2}$
            \\
            \midrule
            $\overline{\mathcal{AET}}_{27}^{-\frac{3\pi}{4},\frac{3\pi}{4}}$, $d_{\max}=\left\lfloor \frac{1}{2}\dim \mathbb{AET}_{27}^{-\frac{3\pi}{4},\frac{3\pi}{4}} \right\rfloor=14$
            &
            $\left(257.381, 262.379\right)_{\mathrm{CPU}_1};~\left(133.511, 135.691\right)_{\mathrm{CPU}_2}$
            \\
            \bottomrule
        \end{tabular}
        \def\arraystretch{1.0}
    \end{table}
}\normalsize	

Apart from possibly incorrect order elevation and subdivision, too big dimension numbers will generate undesired point-wise perturbations in the shapes of the generated B-basis functions similarly to Fig.\ \ref{fig:numerical_instability_beyond_maximal_dimension}(\textit{b}). Such numerical instabilities will also be inherited by the shapes of B-curves/surfaces in the form of unwanted undulations. 

\begin{figure}[!h]
    \centering
    \includegraphics[]{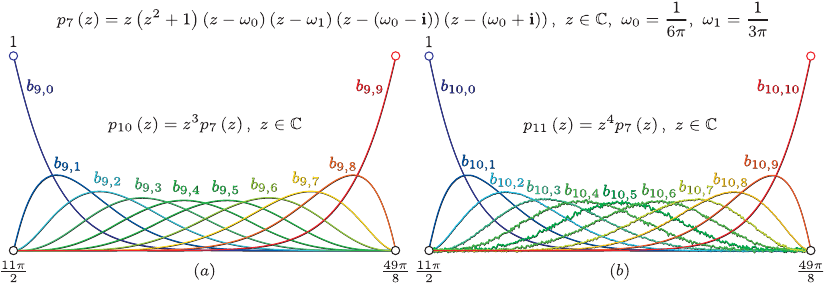}
    \caption{Using the notations of Example \ref{exmp:snail}, in cases (\textit{a}) and (\textit{b}) we have elevated the dimension of $\mathbb{ET}_6^{\alpha,\beta}$, by increasing the multiplicity of the zero $z=0$ of $p_7$ from $1$ to $3$ and $4$, respectively. Observe that in case (\textit{b}) the proposed  B-basis generation method became numerically unstable, i.e., over the given definition domain the dimension of the initial EC space $\mathbb{ET}_6^{\alpha,\beta}$ should be increased from $7$ only up to $10$, by appending its ordinary basis with the monomials $\left\{u,u^2 : u \in \left[\alpha, \beta\right]\right\}$. (If one defines the initial space on a different domain or appends its ordinary basis with other linearly independent functions, the maximum dimension may differ from the previously determined $10$.) 
    }
    \label{fig:numerical_instability_beyond_maximal_dimension}
\end{figure}

If in case of an \CBlue{ECSpace} object one obtains an image similar to Fig.\ \ref{fig:numerical_instability_beyond_maximal_dimension}(\textit{b}) or even a more drastic one from the perspective of the undesired error suggesting point-wise perturbations, then one can conclude that the dimension of the constructed EC space is too big.  In such cases, one may try to translate or decrease the length of the definition domain (since its endpoints have a significant effect on the endpoint derivatives that are required by all algorithms), or if this does not remedy the situation, then one has to either exclude some of the ordinary basis functions that span the underlying space, or at least to try to modify their possible shape parameters in a such a way that potentially increases the numerical stability.

Since there are infinitely many EC spaces with a vast possibility of inner structure, we cannot give a general recipe for the critical maximal dimension number for which the outputs of the proposed algorithms are correct -- \textit{its value should be determined empirically by the user}.


\subsection{Further examples and run-time statistics}

The proposed library is also able to evaluate and render the zeroth and higher order derivatives of B-curves and of isoparametric lines of B-surfaces as it is illustrated in Fig.\ \ref{fig:isoparametric_lines_snail}.

\begin{figure}[!h]
    \centering
    \includegraphics[]{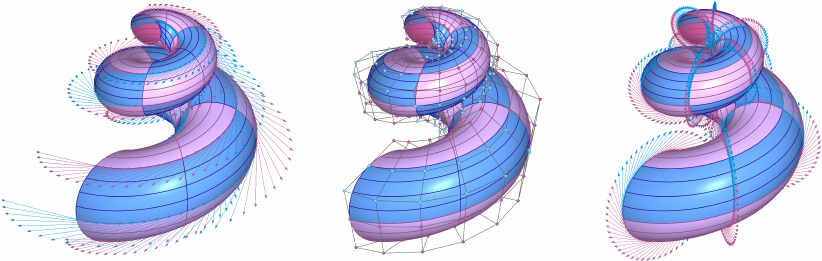}
    \caption{Isoparametric lines and their first order derivatives in case of a possible B-representation of the ordinary integral surface (\ref{eq:snail}). (The illustrated B-surface patches coincide with those in Fig. \ref{fig:exact_description_snail}(\textit{c}). Along all $25$ B-surface patches we have generated $5$ and $3$ isoparametric lines in directions $u_0$ and $u_1$, respectively. The $u_0$- and $u_1$-isoparametric lines consist of $20$ and $13$ subdivision points, respectively. For better visibility, we have rendered the tangent vectors only of some of the isoparametric lines.)}
    \label{fig:isoparametric_lines_snail}
\end{figure}

Moreover, when generating the image of a B-surface, the user is also able to choose different color schemes (as in Fig.\ \ref{fig:color_schemes}) that correspond to the point-wise temperature variations of various energy quantities which, instead of rough approximations, are evaluated exactly based on the coefficients of the first and second fundamental forms of the generated B-surface. (For more details, see also Fig.\ \mref{3.5}{320} of \citep{Roth2018b}.)

\begin{figure}[!h]
    \centering
    \includegraphics[]{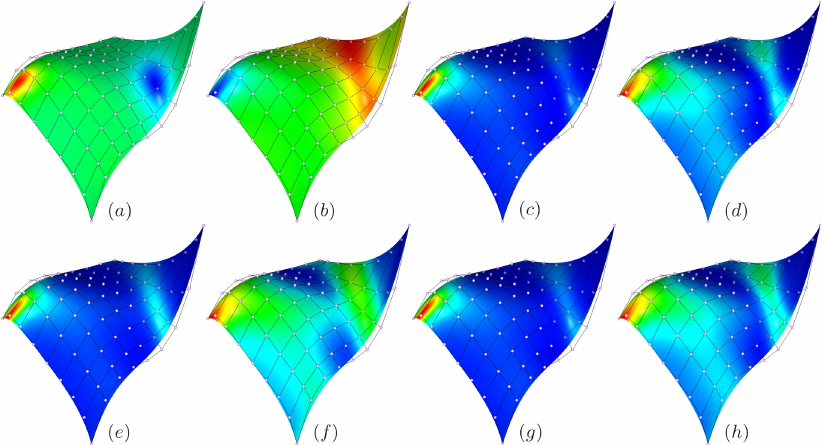}
    \caption{Point-wise temperature variations of the (\textit{a}) Gaussian-curvature,  of the (\textit{b}) mean curvature,  of the (\textit{c})--(\textit{d}) Willmore energy and its translated logarithmic scale, of the (\textit{e})--(\textit{f}) umbilic deviation and its translated logarithmic scale, of the (\textit{g})--(\textit{h}) total curvature and its translated logarithmic scale.}
    \label{fig:color_schemes}
\end{figure}

Using formulas proposed in Theorems \ref{thm:integral_curves}--\ref{thm:integral_surfaces}, one can provide infinitely order of precision concerning the zeroth and higher order (partial) derivatives. Apart from rational polynomial curves/surfaces, this cannot be achieved by the standard rational B\'ezier or NURBS curve/surface modeling tools (e.g.,\ elliptic/hyperbolic arcs of conic sections can be described as rational B\'ezier curves, but these will provide neither natural parametrization nor higher order precision concerning the derivatives). Moreover, these standard rational polynomial models cannot encompass transcendental curves/surfaces and they also rely on  non-negative weight vectors/matrices of rank at least $1$ -- the calculation of which, apart from some simple cases, is cumbersome for the designer.

Table \ref{tbl:examples_of_section_2} provides further run-time related confidence intervals that were obtained in case of some of the remaining examples presented by the manuscript.

\small{
    \begin{table}[!h]
        \vspace{5mm}\caption{Run-time statistics of examples presented in Section \ref{sec:proposed_algorithms}}
        \label{tbl:examples_of_section_2}
        \def\arraystretch{1.0}
        \centering
        
        \begin{tabular}{@{}p{0.05\textwidth}p{0.68\textwidth}p{0.22\textwidth}@{}}
            \toprule
            Figure
            &
            Task
            &
            Run-time related confidence intervals\tablefootnote{The endpoints are measured in milliseconds and are calculated by using Student's $T$-distribution with $5000$ independent trials at the fixed significance level $0.01$.}
            \\
            \midrule
            \ref{fig:order_elevated_snails}(\textit{a}) 
            &
            Creating a B-surface object of minimal order and calculating the transformation matrices for the B-representation of the patch $\left.\mathbf{s}\right|_{\left[\frac{11\pi}{2},\frac{49\pi}{8}\right]\times\left[-\frac{\pi}{3},\frac{\pi}{3}\right]}$ of the ordinary exponential-trigonometric integral surface (\ref{eq:snail}).
            &
            $\left(0.033229, 0.11037\right)_{\mathrm{CPU}_1}$
            $\left(0.083522, 0.11440\right)_{\mathrm{CPU}_2}$
            \\
            \cmidrule{2-3}
            &
            Calculating the control net for the B-representation of the given patch and updating the vertex buffer object of the obtained control net.
            &
            $\left(0.00000, 0.03861\right)_{\mathrm{CPU}_1}$
            $\left(0.01892, 0.03412\right)_{\mathrm{CPU}_2}$
            \\
            \cmidrule{2-3}
            &
            Generating its triangle mesh and updating the vertex buffer objects of the obtained mesh.\tablefootnote{All triangle meshes of the presented examples store $50 \cdot 100 = 5000$ unique vertices (with associated unit normals, colors and texture coordinates) and $2 \cdot \left(50-1\right)\cdot\left(100-1\right)=9702$ triangular faces. Instead of rough approximations, the unit normals are calculated by normalizing the vector products of the calculated first order partial derivatives. Apart from Fig.\ \ref{fig:color_schemes}, in case of each mesh generation, we have used the default color scheme \CBlue{TensorProductSurface3}::\CBlue{DEFAULT\_NULL\_FRAGMENT}. Note that, in practice, usually it would be sufficient to generate triangle meshes with significantly less number of attributes and faces.
            }
            &
            $\left(47.3069, 48.0527\right)_{\mathrm{CPU}_1}$
            $\left(24.6402, 25.5222\right)_{\mathrm{CPU}_2}$
            \\
            \midrule
            \ref{fig:order_elevated_snails}(\textit{b})
            &
            Performing order elevation and updating the vertex buffer object of the order elevated control net. (More details can be found in Example \ref{exmp:snail}.)
            &
            $\left(0.08402, 0.19038\right)_{\mathrm{CPU}_1}$
            $\left(0.16458, 0.20607\right)_{\mathrm{CPU}_2}$
            \\
            \cmidrule{2-3}
            &
            Generating the triangle mesh of the order elevated B-surface and updating the vertex buffer objects of the obtained mesh.
            &
            $\left(70.3128, 71.1488\right)_{\mathrm{CPU}_1}$
            $\left(36.5207, 37.3169\right)_{\mathrm{CPU}_2}$
            \\
            \midrule
            \ref{fig:order_elevated_snails}(\textit{c})
            &
            Performing order elevation and updating the vertex buffer object of the order elevated control net. (More details can be found in Example \ref{exmp:snail}.)
            &
            $\left(0.32332, 0.50668\right)_{\mathrm{CPU}_1}$
            $\left(0.39463, 0.45833\right)_{\mathrm{CPU}_2}$
            \\
            \cmidrule{2-3}
            &
            Generating the triangle mesh of the order elevated B-surface and updating its vertex buffer objects.
            &
            $\left(99.6233, 100.582\right)_{\mathrm{CPU}_1}$
            $\left(52.1324, 53.1988\right)_{\mathrm{CPU}_2}$
            \\
            \midrule
            \ref{fig:subdivided_snail}
            &
            Creating four B-surfaces and updating the vertex buffer objects of their control nets, by subdividing the initial surface patch $\left.\mathbf{s}\right|_{\left[\frac{11\pi}{2},\frac{49\pi}{8}\right]\times\left[-\frac{\pi}{3},\frac{\pi}{3}\right]}$ at the parameter value $u_1=0$, then by subdividing one of the obtained B-surface elements at the parameter value $u_0=\frac{93\pi}{16}$.
            &
            $\left(0.46939, 0.69221\right)_{\mathrm{CPU}_1}$
            $\left(0.72089, 0.80915\right)_{\mathrm{CPU}_2}$
            \\
            \cmidrule{2-3}
            &
            Generating the triangle meshes of the obtained B-surfaces and updating their vertex buffer objects.
            &
            $\left(194.946, 196.301\right)_{\mathrm{CPU}_1}$
            $\left(101.882, 103.559\right)_{\mathrm{CPU}_2}$
            \\
            \midrule
            \ref{fig:exact_description_snail}(\textit{c})
            &
            Multi-threaded calculation of control nets and the sequential update of their vertex buffer objects for the B-representation of the ordinary integral surface (\ref{eq:snail}) by generating a $5\times 5$ matrix of exponential-trigonometric B-surface patches of order $(n_0=6,n_1=2)$ and common shape parameters $(\beta_0 - \alpha_0 = \frac{29\pi}{40},\beta_1-\alpha_1=\frac{2\pi}{5})$.
            &
            $\left(0.817715, 1.09069\right)_{\mathrm{CPU}_1}$
            $\left(0.729526, 1.03447\right)_{\mathrm{CPU}_2}$
            \\
            \cmidrule{2-3}
            &
            Multi-threaded generation of each of the $25$ triangle meshes of the obtained B-surface patches, then the sequential update of their vertex buffer objects.
            &
            $\left(1242.90, 1250.62\right)_{\mathrm{CPU}_1}$ $\left(669.218, 678.874\right)_{\mathrm{CPU}_2}$
            \\
            \midrule
            \ref{fig:isoparametric_lines_snail}
            &
            Multi-threaded generation and first order differentiation of $5\cdot5\cdot5\cdot3=375$ isoparametric lines of Fig.\ \ref{fig:isoparametric_lines_snail}, then updating their vertex buffer objects.
            &
            $\left(30.3413, 30.8631\right)_{\mathrm{CPU}_1}$ $\left(19.6889, 20.5075\right)_{\mathrm{CPU}_2}$
            \\
            \bottomrule
        \end{tabular}
        \def\arraystretch{1.0}
    \end{table}
}\normalsize

\section{Closure}{\label{sec:closure}}

Using the unique normalized B-basis of an EC space that includes the constants and can be identified with the translation invariant solution space of a constant-coefficient homogeneous linear differential equation defined on sufficiently small interval (i.e., on which the space spanned by the first order derivatives is also EC)
, we have proposed a platform-independent OpenGL and C++ based multi-threaded robust and flexible function library for free-form curve and surface modeling. The proposed data structures are able to generate, differentiate and render both the ordinary basis and the normalized B-basis of the underlying EC spaces. Our library can also create, (partially) differentiate, modify and render a large family of B-curves and tensor product B-surfaces, and is also able to perform operations (like order elevation and subdivision) on them (at least up to a reasonable number of dimension and with acceptable numerical precision). The user also has the possibility to solve interpolation problems and to describe exactly arcs/patches of arbitrary ordinary integral curves/surfaces by means of B-curves/surfaces. 

We think, Subsection \ref{subsec:motivations} provided sufficient motivations both for the usage of not necessarily polynomial normalized B-basis functions and for the application of B-curves/surfaces implied by them. Moreover, as it is suggested by the included run-time statistics, the proposed function library is quite responsive and reliable even on an affordable laptop that has a multi-core CPU and a dedicated GPU that is compatible at least with the desktop variant of OpenGL 3.0. 
We deliberately chose the transcendental surface (\ref{eq:snail}) in case of Figs. \ref{fig:order_elevated_snails}, \ref{fig:subdivided_snail}, \ref{fig:exact_description_snail} and \ref{fig:isoparametric_lines_snail}, since it cannot be described exactly by means of the standard (rational) B\'ezier and (non-uniform) B-spline modeling tools.  Apart from some image post-processing (like adding \LaTeX{}-like  descriptive elements), all curve or surface illustrating figures were generated by means of the proposed function library which is not merely the collage implementation of already existing theoretical or numerical methods. Although some parts of our implementation rely on \citep{MazureLaurent1998,CarnicerMainarPena2004,Roth2015b}, in Section \ref{sec:proposed_algorithms} we have also formulated new original results in Corollary \ref{cor:B_basis_derivatives} and in Theorems \ref{thm:general_subdivision}, \ref{thm:efficient_basis_transformation} and \ref{thm:integral_surfaces} which describe constructive formulas that provide implementation advantages. To the best of our knowledge, such a unifying general \textit{programming framework} for curve and surface modeling was not presented in the literature so far. Even special cases of normalized B-bases (like the well-known Bernstein polynomials and their application possibilities) are considered to be important \citep{TsaiFarouki2001}.

We have also included our detailed user manual \citep{Roth2018b} in the supplementary material of the manuscript that covers full implementation details and source code listings, by using which one is able both to reproduce all presented examples and to create new types of EC spaces, B-curves and B-surfaces in just a few lines of code. The user manual consists of three chapters. The first one provides a theoretical introduction without the proofs of the current Section \ref{sec:proposed_algorithms}. Labels and formulas of this chapter are used to clarify the full implementation details included in the second chapter of the user manual. Assuming that users provide an OpenGL based class that is able to create a rendering context and to handle possible events, the third chapter of the user manual concentrates on the constructor and rendering methods of the aforementioned class in order to interactively manipulate different types of B-curves/surfaces, to perform operations (like order elevation and subdivision) on them and to provide B-representations for ordinary integral curves/surfaces.

We believe, the proposed library will help the development of other software packages for a large variety of real-world applications that arise from Approximation Theory, Computer Aided Geometric Design and Manufacturing, Computer Graphics, Isogeometric and Numerical Analysis.

\bibliographystyle{ACM-Reference-Format}
\bibliography{Roth__Bibliography__Curve_and_surface_modeling_in_EC_spaces.R1}


\begin{thebibliography}{00}


\ifx \showCODEN    \undefined \def \showCODEN     #1{\unskip}     \fi
\ifx \showDOI      \undefined \def \showDOI       #1{{\tt DOI:}\penalty0{#1}\ }
  \fi
\ifx \showISBNx    \undefined \def \showISBNx     #1{\unskip}     \fi
\ifx \showISBNxiii \undefined \def \showISBNxiii  #1{\unskip}     \fi
\ifx \showISSN     \undefined \def \showISSN      #1{\unskip}     \fi
\ifx \showLCCN     \undefined \def \showLCCN      #1{\unskip}     \fi
\ifx \shownote     \undefined \def \shownote      #1{#1}          \fi
\ifx \showarticletitle \undefined \def \showarticletitle #1{#1}   \fi
\ifx \showURL      \undefined \def \showURL       #1{#1}          \fi
\providecommand\bibfield[2]{#2}
\providecommand\bibinfo[2]{#2}
\providecommand\natexlab[1]{#1}
\providecommand\showeprint[2][]{arXiv:#2}

\bibitem[\protect\citeauthoryear{Brilleaud and Mazure}{Brilleaud and
  Mazure}{2012}]%
        {BrilleaudMazure2012}
\bibfield{author}{\bibinfo{person}{M. Brilleaud} {and} \bibinfo{person}{M.-L.
  Mazure}.} \bibinfo{year}{2012}\natexlab{}.
\newblock \showarticletitle{Mixed hyperbolic/trigonometric spaces for design}.
\newblock \bibinfo{journal}{{\em Computers and Mathematics with
  Applications\/}} \bibinfo{volume}{64}, \bibinfo{number}{8}
  (\bibinfo{year}{2012}), \bibinfo{pages}{2459--2477}.
\newblock
\showDOI{%
\url{http://dx.doi.org/10.1016/j.camwa.2012.05.019}}


\bibitem[\protect\citeauthoryear{Carnicer, Mainar, and Pe{\~{n}}a}{Carnicer
  et~al\mbox{.}}{2004}]%
        {CarnicerMainarPena2004}
\bibfield{author}{\bibinfo{person}{J.-M. Carnicer}, \bibinfo{person}{E.
  Mainar}, {and} \bibinfo{person}{J.~M. Pe{\~{n}}a}.}
  \bibinfo{year}{2004}\natexlab{}.
\newblock \showarticletitle{Critical length for design purposes and extended
  Chebyshev spaces}.
\newblock \bibinfo{journal}{{\em Constructive Approximation\/}}
  \bibinfo{volume}{20}, \bibinfo{number}{1} (\bibinfo{year}{2004}),
  \bibinfo{pages}{55--71}.
\newblock
\showDOI{%
\url{http://dx.doi.org/10.1007/s00365-002-0530-1}}


\bibitem[\protect\citeauthoryear{Carnicer, Mainar, and Pe{\~{n}}a}{Carnicer
  et~al\mbox{.}}{2007}]%
        {CarnicerMainarPena2007}
\bibfield{author}{\bibinfo{person}{J.-M. Carnicer}, \bibinfo{person}{E.
  Mainar}, {and} \bibinfo{person}{J.~M. Pe{\~{n}}a}.}
  \bibinfo{year}{2007}\natexlab{}.
\newblock \showarticletitle{Shape preservation regions for six-dimensional
  spaces}.
\newblock \bibinfo{journal}{{\em Advances in Computational Mathematics\/}}
  \bibinfo{volume}{26}, \bibinfo{number}{1--3} (\bibinfo{year}{2007}),
  \bibinfo{pages}{121--136}.
\newblock
\showDOI{%
\url{http://dx.doi.org/10.1007/s10444-005-7505-2}}


\bibitem[\protect\citeauthoryear{Carnicer, Mainar, and Pe{\~{n}}a}{Carnicer
  et~al\mbox{.}}{2014}]%
        {CarnicerMainarPena2014}
\bibfield{author}{\bibinfo{person}{J.-M. Carnicer}, \bibinfo{person}{E.
  Mainar}, {and} \bibinfo{person}{J.~M. Pe{\~{n}}a}.}
  \bibinfo{year}{2014}\natexlab{}.
\newblock \showarticletitle{On the critical length of cycloidal spaces}.
\newblock \bibinfo{journal}{{\em Constructive Approximation\/}}
  \bibinfo{volume}{39}, \bibinfo{number}{3} (\bibinfo{year}{2014}),
  \bibinfo{pages}{573--583}.
\newblock
\showDOI{%
\url{http://dx.doi.org/10.1007/s00365-013-9223-1}}


\bibitem[\protect\citeauthoryear{Carnicer and Pe{\~{n}}a}{Carnicer and
  Pe{\~{n}}a}{1993}]%
        {Carnicer1993}
\bibfield{author}{\bibinfo{person}{J.-M. Carnicer} {and} \bibinfo{person}{J.~M.
  Pe{\~{n}}a}.} \bibinfo{year}{1993}\natexlab{}.
\newblock \showarticletitle{Shape preserving representations and optimality of
  the Bernstein basis}.
\newblock \bibinfo{journal}{{\em Advances in Computational Mathematics\/}}
  \bibinfo{volume}{1}, \bibinfo{number}{2} (\bibinfo{year}{1993}),
  \bibinfo{pages}{173--196}.
\newblock
\showDOI{%
\url{http://dx.doi.org/10.1007/BF02071384}}


\bibitem[\protect\citeauthoryear{Carnicer and Pe{\~{n}}a}{Carnicer and
  Pe{\~{n}}a}{1995}]%
        {CarnicerPena1995}
\bibfield{author}{\bibinfo{person}{J.-M. Carnicer} {and} \bibinfo{person}{J.~M.
  Pe{\~{n}}a}.} \bibinfo{year}{1995}\natexlab{}.
\newblock \showarticletitle{On transforming a Tchebycheff system into a
  strictly totally positive system}.
\newblock \bibinfo{journal}{{\em Journal of Approximation Theory\/}}
  \bibinfo{volume}{81}, \bibinfo{number}{2} (\bibinfo{year}{1995}),
  \bibinfo{pages}{274--295}.
\newblock
\showDOI{%
\url{http://dx.doi.org/10.1006/jath.1995.1050}}


\bibitem[\protect\citeauthoryear{Costantini, Lyche, and Manni}{Costantini
  et~al\mbox{.}}{2005}]%
        {CostantiniLycheManni2005}
\bibfield{author}{\bibinfo{person}{P. Costantini}, \bibinfo{person}{T. Lyche},
  {and} \bibinfo{person}{C. Manni}.} \bibinfo{year}{2005}\natexlab{}.
\newblock \showarticletitle{On a class of weak Tchebycheff systems}.
\newblock \bibinfo{journal}{{\it Numer. Math.}} \bibinfo{volume}{101},
  \bibinfo{number}{2} (\bibinfo{year}{2005}), \bibinfo{pages}{333--354}.
\newblock
\showDOI{%
\url{http://dx.doi.org/10.1007/s00211-005-0613-6}}


\bibitem[\protect\citeauthoryear{Gasca and Pe{\~{n}}a}{Gasca and
  Pe{\~{n}}a}{1996}]%
        {GascaPena1996}
\bibfield{author}{\bibinfo{person}{M. Gasca} {and} \bibinfo{person}{J.~M.
  Pe{\~{n}}a}.} \bibinfo{year}{1996}\natexlab{}.
\newblock \showarticletitle{On factorizations of totally positive matrices}.
\newblock In \bibinfo{booktitle}{{\em Total Positivity and its Applications}},
  \bibfield{editor}{\bibinfo{person}{M.~Gasca} {and} \bibinfo{person}{C.~A.
  Micchelli}} (Eds.). \bibinfo{publisher}{Kluwer Academic},
  \bibinfo{address}{Dordrecht}, \bibinfo{pages}{109--130}.
\newblock
\showISBNx{978-90-481-4667-3}
\showDOI{%
\url{http://dx.doi.org/10.1007/978-94-015-8674-0_7}}


\bibitem[\protect\citeauthoryear{Karlin and Studden}{Karlin and
  Studden}{1966}]%
        {KarlinStudden1966}
\bibfield{author}{\bibinfo{person}{S. Karlin} {and} \bibinfo{person}{W.
  Studden}.} \bibinfo{year}{1966}\natexlab{}.
\newblock \bibinfo{booktitle}{{\em Tchebycheff Systems: with Applications in
  Analysis and Statistics}}.
\newblock \bibinfo{publisher}{Wiley}, \bibinfo{address}{New York, NY}.
\newblock


\bibitem[\protect\citeauthoryear{L{\"u}, Wang, and Yang}{L{\"u}
  et~al\mbox{.}}{2002}]%
        {LuWangYang2002}
\bibfield{author}{\bibinfo{person}{Y. L{\"u}}, \bibinfo{person}{G. Wang}, {and}
  \bibinfo{person}{X. Yang}.} \bibinfo{year}{2002}\natexlab{}.
\newblock \showarticletitle{Uniform hyperbolic polynomial B-spline curves}.
\newblock \bibinfo{journal}{{\em Computer Aided Geometric Design\/}}
  \bibinfo{volume}{19}, \bibinfo{number}{6} (\bibinfo{year}{2002}),
  \bibinfo{pages}{379--393}.
\newblock
\showDOI{%
\url{http://dx.doi.org/10.1016/S0167-8396(02)00092-4}}


\bibitem[\protect\citeauthoryear{Lyche}{Lyche}{1985}]%
        {Lyche1985}
\bibfield{author}{\bibinfo{person}{T. Lyche}.} \bibinfo{year}{1985}\natexlab{}.
\newblock \showarticletitle{A recurrence relation for Chebyshevian B-splines}.
\newblock \bibinfo{journal}{{\em Constructive Approximation\/}}
  \bibinfo{volume}{1}, \bibinfo{number}{1} (\bibinfo{year}{1985}),
  \bibinfo{pages}{155--173}.
\newblock
\showDOI{%
\url{http://dx.doi.org/10.1007/BF01890028}}


\bibitem[\protect\citeauthoryear{Mainar and Pe{\~{n}}a}{Mainar and
  Pe{\~{n}}a}{1999}]%
        {MainarPena1999}
\bibfield{author}{\bibinfo{person}{E. Mainar} {and} \bibinfo{person}{J.~M.
  Pe{\~{n}}a}.} \bibinfo{year}{1999}\natexlab{}.
\newblock \showarticletitle{Corner cutting algorithms associated with optimal
  shape preserving representations}.
\newblock \bibinfo{journal}{{\em Computer Aided Geometric Design\/}}
  \bibinfo{volume}{16}, \bibinfo{number}{9} (\bibinfo{year}{1999}),
  \bibinfo{pages}{883--906}.
\newblock
\showDOI{%
\url{http://dx.doi.org/10.1016/S0167-8396(99)00035-7}}


\bibitem[\protect\citeauthoryear{Mainar and Pe{\~{n}}a}{Mainar and
  Pe{\~{n}}a}{2004}]%
        {MainarPena2004}
\bibfield{author}{\bibinfo{person}{E. Mainar} {and} \bibinfo{person}{J.~M.
  Pe{\~{n}}a}.} \bibinfo{year}{2004}\natexlab{}.
\newblock \showarticletitle{Quadratic-cycloidal curves}.
\newblock \bibinfo{journal}{{\em Advances in Computational Mathematics\/}}
  \bibinfo{volume}{20}, \bibinfo{number}{1--3} (\bibinfo{year}{2004}),
  \bibinfo{pages}{161--175}.
\newblock
\showDOI{%
\url{http://dx.doi.org/10.1023/A:1025813919473}}


\bibitem[\protect\citeauthoryear{Mainar and Pe{\~{n}}a}{Mainar and
  Pe{\~{n}}a}{2010}]%
        {MainarPena2010}
\bibfield{author}{\bibinfo{person}{E. Mainar} {and} \bibinfo{person}{J.~M.
  Pe{\~{n}}a}.} \bibinfo{year}{2010}\natexlab{}.
\newblock \showarticletitle{Optimal bases for a class of mixed spaces and their
  associated spline spaces}.
\newblock \bibinfo{journal}{{\em Computers and Mathematics with
  Applications\/}} \bibinfo{volume}{59}, \bibinfo{number}{4}
  (\bibinfo{year}{2010}), \bibinfo{pages}{1509--1523}.
\newblock
\showDOI{%
\url{http://dx.doi.org/10.1016/j.camwa.2009.11.009}}


\bibitem[\protect\citeauthoryear{Mainar, Pe{\~{n}}a, and
  S{\'{a}}nchez-Reyes}{Mainar et~al\mbox{.}}{2001}]%
        {MainarPenaSanchez2001}
\bibfield{author}{\bibinfo{person}{E. Mainar}, \bibinfo{person}{J.~M.
  Pe{\~{n}}a}, {and} \bibinfo{person}{J.~M. S{\'{a}}nchez-Reyes}.}
  \bibinfo{year}{2001}\natexlab{}.
\newblock \showarticletitle{Shape preserving alternatives to the rational
  B{\'e}zier model}.
\newblock \bibinfo{journal}{{\em Computer Aided Geometric Design\/}}
  \bibinfo{volume}{18}, \bibinfo{number}{1} (\bibinfo{year}{2001}),
  \bibinfo{pages}{37--60}.
\newblock
\showDOI{%
\url{http://dx.doi.org/10.1016/S0167-8396(01)00011-5}}


\bibitem[\protect\citeauthoryear{Manni, Pelosi, and Sampoli}{Manni
  et~al\mbox{.}}{2011}]%
        {ManniPelosiSampoli2011}
\bibfield{author}{\bibinfo{person}{C. Manni}, \bibinfo{person}{F. Pelosi},
  {and} \bibinfo{person}{M.~L. Sampoli}.} \bibinfo{year}{2011}\natexlab{}.
\newblock \showarticletitle{Generalized B-splines as a tool in isogeometric
  analysis}.
\newblock \bibinfo{journal}{{\em Computer Methods in Applied Mechanics and
  Engineering\/}} \bibinfo{volume}{200}, \bibinfo{number}{5--8}
  (\bibinfo{year}{2011}), \bibinfo{pages}{867--881}.
\newblock
\showDOI{%
\url{http://dx.doi.org/10.1016/j.cma.2010.10.010}}


\bibitem[\protect\citeauthoryear{Mazure}{Mazure}{1999}]%
        {Mazure1999}
\bibfield{author}{\bibinfo{person}{M.-L. Mazure}.}
  \bibinfo{year}{1999}\natexlab{}.
\newblock \showarticletitle{Chebyshev--Bernstein bases}.
\newblock \bibinfo{journal}{{\em Computer Aided Geometric Design\/}}
  \bibinfo{volume}{16}, \bibinfo{number}{7} (\bibinfo{year}{1999}),
  \bibinfo{pages}{649--669}.
\newblock
\showDOI{%
\url{http://dx.doi.org/10.1016/S0167-8396(99)00029-1}}


\bibitem[\protect\citeauthoryear{Mazure and Laurent}{Mazure and
  Laurent}{1998}]%
        {MazureLaurent1998}
\bibfield{author}{\bibinfo{person}{M.-L. Mazure} {and} \bibinfo{person}{P.-J.
  Laurent}.} \bibinfo{year}{1998}\natexlab{}.
\newblock \showarticletitle{Nested sequences of Chebyshev spaces and shape
  parameters}.
\newblock \bibinfo{journal}{{\em RAIRO--Mod{\'e}lisation math{\'e}matique et
  analyse num{\'e}rique\/}} \bibinfo{volume}{32}, \bibinfo{number}{6}
  (\bibinfo{year}{1998}), \bibinfo{pages}{773--788}.
\newblock
\showURL{%
\url{http://www.numdam.org/article/M2AN_1998__32_6_773_0.pdf}}


\bibitem[\protect\citeauthoryear{Pe{\~{n}}a}{Pe{\~{n}}a}{1999}]%
        {Pena1999}
\bibfield{author}{\bibinfo{person}{J.~M. Pe{\~{n}}a}.}
  \bibinfo{year}{1999}\natexlab{}.
\newblock \bibinfo{booktitle}{{\em Shape Preserving Representations in
  Computer-Aided Geometric Design}}.
\newblock \bibinfo{publisher}{Nova Science Publishers},
  \bibinfo{address}{Commack, NY}.
\newblock


\bibitem[\protect\citeauthoryear{Pottmann}{Pottmann}{1993}]%
        {Pottmann1993}
\bibfield{author}{\bibinfo{person}{H. Pottmann}.}
  \bibinfo{year}{1993}\natexlab{}.
\newblock \showarticletitle{The geometry of Tchebycheffian splines}.
\newblock \bibinfo{journal}{{\em Computer Aided Geometric Design\/}}
  \bibinfo{volume}{10}, \bibinfo{number}{3--4} (\bibinfo{year}{1993}),
  \bibinfo{pages}{181--210}.
\newblock
\showDOI{%
\url{http://dx.doi.org/10.1016/0167-8396(93)90036-3}}


\bibitem[\protect\citeauthoryear{Press, Teukolsky, Vetterling, and
  Flannery}{Press et~al\mbox{.}}{2007}]%
        {PressTeukolskyVetterlingFlannery2007}
\bibfield{author}{\bibinfo{person}{W.~H. Press}, \bibinfo{person}{S.~A.
  Teukolsky}, \bibinfo{person}{W.~T. Vetterling}, {and} \bibinfo{person}{B.~P.
  Flannery}.} \bibinfo{year}{2007}\natexlab{}.
\newblock \bibinfo{booktitle}{{\em Numerical Recipes 3rd Edition: The art of
  Scientific Computing}}.
\newblock \bibinfo{publisher}{Cambridge University Press},
  \bibinfo{address}{New York}.
\newblock


\bibitem[\protect\citeauthoryear{R{\'o}th}{R{\'o}th}{2015a}]%
        {Roth2015b}
\bibfield{author}{\bibinfo{person}{{\'A}. R{\'o}th}.}
  \bibinfo{year}{2015}\natexlab{a}.
\newblock \showarticletitle{Control point based exact description of curves and
  surfaces in extended Chebyshev spaces}.
\newblock \bibinfo{journal}{{\em Computer Aided Geometric Design\/}}
  \bibinfo{volume}{40} (\bibinfo{date}{December 14} \bibinfo{year}{2015}),
  \bibinfo{pages}{40--58}.
\newblock
\showDOI{%
\url{http://dx.doi.org/10.1016/j.cagd.2015.09.005}}


\bibitem[\protect\citeauthoryear{R{\'o}th}{R{\'o}th}{2015b}]%
        {Roth2015a}
\bibfield{author}{\bibinfo{person}{{\'A}. R{\'o}th}.}
  \bibinfo{year}{2015}\natexlab{b}.
\newblock \showarticletitle{Control point based exact description of
  trigonometric/hyperbolic curves, surfaces and volumes}.
\newblock \bibinfo{journal}{{\it J. Comput. Appl. Math.}}
  \bibinfo{volume}{290}, \bibinfo{number}{C} (\bibinfo{date}{December 1}
  \bibinfo{year}{2015}), \bibinfo{pages}{74--91}.
\newblock
\showDOI{%
\url{http://dx.doi.org/10.1016/j.cam.2015.05.003}}


\bibitem[\protect\citeauthoryear{R{\'o}th}{R{\'o}th}{2018}]%
        {Roth2018b}
\bibfield{author}{\bibinfo{person}{{\'A}. R{\'o}th}.}
  \bibinfo{year}{2018}\natexlab{}.
\newblock \bibinfo{booktitle}{{\em User manual. An OpenGL and C++ based
  function library for curve and surface modeling in a large class of extended
  Chebyshev spaces\/} (\bibinfo{edition}{1st} ed.)}.
\newblock \bibinfo{address}{Babe{\c{s}}--Bolyai University, Department of
  Mathematics and Computer Science of the Hungarian Line, RO--400084
  Cluj-Napoca, Romania}.
\newblock
\newblock
\shownote{Supplementary material.}


\bibitem[\protect\citeauthoryear{S{\'a}nchez-Reyes}{S{\'a}nchez-Reyes}{1998}]%
        {Sanchez1998}
\bibfield{author}{\bibinfo{person}{J. S{\'a}nchez-Reyes}.}
  \bibinfo{year}{1998}\natexlab{}.
\newblock \showarticletitle{Harmonic rational B{\'e}zier curves, p-B{\'e}zier
  curves and trigonometric polynomials}.
\newblock \bibinfo{journal}{{\em Computer Aided Geometric Design\/}}
  \bibinfo{volume}{15}, \bibinfo{number}{9} (\bibinfo{year}{1998}),
  \bibinfo{pages}{909--923}.
\newblock
\showDOI{%
\url{http://dx.doi.org/10.1016/S0167-8396(98)00031-4}}


\bibitem[\protect\citeauthoryear{Schumaker}{Schumaker}{2007}]%
        {Schumaker2007}
\bibfield{author}{\bibinfo{person}{L.~L. Schumaker}.}
  \bibinfo{year}{2007}\natexlab{}.
\newblock \bibinfo{booktitle}{{\em Spline Functions: Basic Theory, 3rd
  edition}}.
\newblock \bibinfo{publisher}{Cambridge University Press},
  \bibinfo{address}{UK}.
\newblock
\showDOI{%
\url{http://dx.doi.org/10.1017/CBO9780511618994}}


\bibitem[\protect\citeauthoryear{Shen and Wang}{Shen and Wang}{2005}]%
        {ShenWang2005}
\bibfield{author}{\bibinfo{person}{W.-Q. Shen} {and} \bibinfo{person}{G.-Z.
  Wang}.} \bibinfo{year}{2005}\natexlab{}.
\newblock \showarticletitle{A class of quasi B{\'e}zier curves based on
  hyperbolic polynomials}.
\newblock \bibinfo{journal}{{\em Journal of Zhejiang University SCIENCE\/}}
  \bibinfo{volume}{6A (Suppl. I)}, \bibinfo{number}{9} (\bibinfo{year}{2005}),
  \bibinfo{pages}{116--123}.
\newblock
\showDOI{%
\url{http://dx.doi.org/10.1007/BF02887226}}


\bibitem[\protect\citeauthoryear{Tsai and Farouki}{Tsai and Farouki}{2001}]%
        {TsaiFarouki2001}
\bibfield{author}{\bibinfo{person}{Y.-F. Tsai} {and} \bibinfo{person}{R.~T.
  Farouki}.} \bibinfo{year}{2001}\natexlab{}.
\newblock \showarticletitle{Algorithm 812: BPOLY: An object-oriented library of
  numerical algorithms for polynomials in Bernstein form}.
\newblock \bibinfo{journal}{{\it ACM Trans. Math. Software}}
  \bibinfo{volume}{27}, \bibinfo{number}{2} (\bibinfo{year}{2001}),
  \bibinfo{pages}{267--296}.
\newblock
\showDOI{%
\url{http://dx.doi.org/10.1145/383738.383743}}


\end{thebibliography}

\end{document}